\theoremstyle{plain}
\newtheorem{theorem}{\\\bf Theorem}
\newtheorem{theoremA}{Theorem}[section]
\newtheorem{lemmaA}[theoremA]{Lemma}
\newtheorem{corollary}[theoremA]{Corollary}
\newtheorem*{theorem1}{Theorem 1}
\newtheorem*{theorem2}{Theorem 2}
\theoremstyle{definition}
\newtheorem{assumption}{\\\bf Assumption}
\newcommand{\addappendix}[1]{%
  \section*{#1}
  \addcontentsline{toc}{section}{#1}
  \counterwithin*{figure}{section}
  \stepcounter{section}
  \renewcommand{\thesection}{\Alph{section}}
  \renewcommand{\thefigure}{\thesection.\arabic{figure}}
  \renewcommand\theequation{\thesection.\arabic{equation}}
}
\newcommand{\appref}[1]{Appendix~\ref{#1}}
\newcommand{\subsubsubsection}[1]{{\bf #1}}
\renewcommand*\cite[1]{\citeauthor{#1} (\citeyear*{#1})}
\let\citep\parencite
\title{Semiparametric estimation of GLMs with interval-censored covariates via an augmented Turnbull estimator}
\author{Andrea Toloba$^*$, Klaus Langohr, and Guadalupe Gómez Melis \\\normalsize
Statistics and Operations Research Department and Institute for Research \\\normalsize
and Innovation in Health (IRIS),
Universitat Politècnica de Catalunya\textperiodcentered BarcelonaTech, Spain. \\\normalsize
$^*$andrea.toloba@upc.edu
}
\begin{document}

\maketitle

\textbf{\large Abstract}\\ 
Interval-censored covariates are frequently encountered in biomedical studies, particularly in time-to-event data or when measurements are subject to detection or quantification limits. Yet, the estimation of regression models with interval-censored covariates remains methodologically underdeveloped. In this article, we address the estimation of generalized linear models when one covariate is subject to interval censoring. We propose a likelihood-based approach, GELc, that builds upon an augmented version of Turnbull's nonparametric estimator for interval-censored data. We prove that the GELc estimator is consistent and asymptotically normal under mild regularity conditions, with available standard errors. Simulation studies demonstrate favorable finite-sample performance of the estimator and satisfactory coverage of the confidence intervals. Finally, we illustrate the method using two real-world applications: the AIDS Clinical Trials Group Study 359 and an observational nutrition study on circulating carotenoids. The proposed methodology is available as an R package at \href{https://github.com/atoloba/ICenCov}{\texttt{github.com/atoloba/ICenCov}}.

\textbf{Keywords:} 
analytical chemistry;
censored covariate;
exposure time;
interval-censored data.

\newpage
\section{Introduction}
\label{s:intro}

Interval censoring occurs when the value of a variable is only known to lie within an interval instead of being observed exactly. 
It is common in time-to-event data when the event of interest can only be monitored at regular time points because, for example, assessment is conducted at scheduled medical examinations or through laboratory testing. In the AIDS Clinical Trials Group Study 359 (ACTG359), patients receiving the antiretroviral therapy with indinavir were visited periodically to monitor their HIV viral load levels. When a patient's viral load exceeded 500 copies/mL, indinavir treatment was considered to have failed.
Time to indinavir failure was, therefore, interval-censored between the last examination below 500 copies/mL and the first one above the threshold \citep{gulick2000,gomez2003}.

Methods for the analysis of interval-censored response variables have been extensively studied in survival analysis.
A comprehensive overview of classical approaches can be found in \cite{sun2006} and, more recently, in \cite{bogaerts2017} and \cite{gomez2026}.
These include Turnbull’s nonparametric maximum likelihood estimator (NPMLE) for the survival function, which is a generalization of the Kaplan–Meier estimator for right-censored data, along with the accelerated failure time (AFT) model, and various adaptations of the Cox proportional hazards model for interval-censored outcomes. 
Research on interval-censored data remains highly active, see \cite{sun2022} for a compilation of recent innovations.

While methodology for regression models with interval-censored covariates is scarce, covariates subject to interval censoring are not uncommon in biomedical studies.
For instance, when an exposure onset is observable only between assessments and exposure time is a covariate of interest. In the ACTG359 study, concerns were raised about how the waiting time from indinavir failure to starting a new antiretroviral treatment could impact the prognosis of the second treatment. In this case, the interval-censored waiting time acts as an exposure for the second treatment.
\cite{gomez2003} and \cite{sizelove2024} propose two different semiparametric approaches to estimate a linear model $Y = \alpha+\bm\beta'\bm X + \gamma Z + \epsilon$, $\epsilon\sim N(0,\sigma^2)$, when the exposure time $Z$ is interval-censored.
\cite{gomez2003} assume $Z$ has a discrete distribution with finite support and hence the likelihood function is simplified to depend on a finite set of parameters. In contrast, \cite{sizelove2024} propose to estimate the hazard function of the time-to-event covariate using latent Poisson random variables. The approach proposed by \cite{gomez2003} has led to several methodological contributions \citep{topp2004,langohr2014,gomez2022} and has subsequently been applied in various practical settings \citep{langohr2004,marhuenda2022,morrison2021}.
It is also worth noting that regression models with right-censored covariates have been receiving increasing attention; see \cite{lotspeich2023} for a review of existing methods.

Beyond time-to-event data, interval censoring also arises in analytical chemistry when an analyte of interest is quantified using chemical methods that involve measurement constraints.
As a second motivating example, we present the study of \cite{marhuenda2022}, which investigated the role of circulating carotenoids in cardiovascular health.
Caro\-tenoids are a group of plant-produced phytochemicals (e.g., $\alpha$-carotene, lycopene, astaxanthin), abundant in fruits and vegetables, with antioxidant activity in humans.
In this study, the quantity of each carotenoid, say $Q_c$, was measured using liquid chromatography coupled with UV–VIS detection and mass spectrometry, an analytical technique that entails a limit of detection (${\rm LoD}_c$) and a limit of quantification (${\rm LoQ}_c$) specific for each analyte \citep{Hrvolova2016}. As a result, observations of $Q_c$ are exact only if $Q_c \geq {\rm LoQ}_c$, and otherwise lie in either $[0,{\rm LoD}_c)$ or $[{\rm LoD}_c,{\rm LoQ}_c)$ for some individuals. 
If we denote by $\mathcal{C}$ the group of carotenoids, the quantity of circulating carotenoids is $Z = \sum_{c\in\mathcal{C}} Q_c$. 
Generalized linear models (GLMs) accommodate the distinct response types of diabetes, blood pressure, and other cardiovascular outcomes, with circulating carotenoids $Z$ as a covariate.
An observation of $Z$ results interval-censored if any of the participant's $Q_c$ is below ${\rm LoD}_c$ or ${\rm LoQ}_c$. For example, if $Q_{a_i}\in[0,{\rm LoD}_a)$ for the $a$th carotenoid of the $i$th participant, then $Z_i$ is interval-censored in $\big[ \sum_{c\neq a} Q_{c_i} + 0,\; \sum_{c\neq a} Q_{c_i} + {\rm LoD}_a \big)$.
Substituting such observations by an agreed-upon value has been reported to produce biased results \citep{helsel2006}. The analysis in \cite{marhuenda2022} treats circulating carotenoids as a discrete covariate and applies the approach of \cite{gomez2003} with the GLM extension in \cite{gomez2022}.

In this article, we propose a likelihood-based approach to handle mixed interval-censored covariates {\textemdash that is, observations can be either exact or interval-censored\textemdash} in generalized linear models, and that is not restricted to discrete support.  
Specifically, we develop an augmented version of Turnbull’s NPMLE for the censored covariate overriding the need to assume a parametric form, and thereby preventing bias in the GLM regression parameters that could arise from misspecifying the covariate distribution.
We denote the new estimator as GELc, in reference to Gómez-Espinal-Lagakos \citep{gomez2003}, since our approach generalizes their algorithm to accommodate continuous censored variables. 
We prove that the GELc estimator is consistent and asymptotically normal under some regularity conditions, with the covariance matrix estimated from the observed Fisher information.
Simulation studies are conducted to evaluate finite-sample performance.

The article is structured as follows. Section \ref{s:model} introduces the notation and assumptions underlying GELc. In Section \ref{s:GELc}, the estimator is presented along with pseudocode for the estimation algorithm. Section \ref{s:augmented} describes the augmented version of Turnbull's estimator. Section \ref{s:asym} discusses the asymptotic properties of the GELc estimator, while the simulation results are reported in Section \ref{s:simstud}. In Section \ref{s:appl}, two real applications are exposed in detail. The paper concludes with a discussion in Section \ref{s:discus} of the estimator’s limitations and directions for future research.

\section{Notation and model specification}
\label{s:model}

Let $Y$ denote an outcome variable, which is assumed fully observed and may be either discrete or continuous. The fully observed covariates are represented by the $p$-dimensional vector $\bm X$, while $Z$ is reserved for the interval-censored covariate. Consider a generalized linear model (GLM) that relates the mean response $\mu = E[Y|\bm{X},Z]$ to the linear predictor $\eta = \alpha+\bm{\beta'X}+\gamma Z$ via the link function $g$:
\begin{equation} \label{eq:glm}
    E[Y|\bm{X},Z] = g^{-1}(\alpha+\bm{\beta'X}+\gamma Z).
\end{equation}
In a GLM, the distribution of $Y|\bm{X},Z$ is assumed to belong to the exponential dispersion family, with location parameter $\mu=g^{-1}(\eta)$ and dispersion parameter $\phi$ \citep{dunn2018}. The general form of the density is
\[ f_{Y|\bm{X},Z}(y|\bm x,z;\bm\theta) = h(y;\phi) \exp\Big[\big\{y\cdot\psi(\mu) - a\big(\psi(\mu)\big) \big\}/\phi \Big] \]
where $\bm\theta=(\alpha,\bm\beta',\gamma,\phi)$ is the parameter vector and $h$, $a$, and $\psi$ are known functions specific to the selected distribution. Unlike the homoscedasticity assumption in linear regression, in GLMs we assume ${\rm Var}(Y | \bm X, Z) = \phi V(\mu)$, where $V(\mu)$ is a distribution-specific function, commonly called the variance function, that depends on the mean $\mu = E[Y|\bm{X},Z]$.

Regarding the interval-censored covariate $Z$, we denote by $\lfloor Z_L, Z_R \rfloor$ the random censoring interval, where $Z_L$ and $Z_R$ represent the left and right observed variables. This notation includes open $(z_l,z_r)$, closed $[z_l,z_r]$, and semi-open $[z_l,z_r)$ observed intervals. We work within the framework of partly or mixed interval-censored data, where the censoring process that generates $\lfloor Z_L, Z_R \rfloor$ admits exact observations as $[z,z]$. This setting generalizes case II interval censoring, in which the censoring process is defined by inspection times \citep{gomez2026}.
Let $\Omega$ denote the support of $Z$, and $W(z) = Pr(Z \le z)$ the marginal distribution function of $Z$ for $z\in\Omega$.

\bigskip
In what follows, the assumptions underlying the GELc estimator are stated along with a brief discussion of their interpretation and practical implications.

\begin{assumption} \label{A1}
    The random censoring interval $\lfloor Z_L, Z_R \rfloor$ and $Y$ are conditionally independent given $Z$, that is $Y \perp\!\!\!\perp \lfloor Z_L, Z_R \rfloor \;|\; Z$. 
\end{assumption}
That is to say, the observed interval carries no additional information about $Y$ beyond what is already contained in $Z$. This is necessary to omit the dependence of the model~\eqref{eq:glm} on $Z_L$ and $Z_R$.
In the study of carotenoids, this is trivially satisfied because the outcomes (e.g., diabetes, blood pressure) are unrelated to the analytical technique used to measure the circulating carotenoids.
When dealing with periodic visits, however, one must confirm that the visit schedule is not influenced by the response; if patients with an unfavorable prognosis for $Y$ tend to be monitored more frequently for $Z$, the assumption would not hold.

\begin{assumption} \label{A2}
    The censoring process that generates $\lfloor Z_L, Z_R \rfloor$ is non-informative on the specific value of $Z$ within $\lfloor Z_L, Z_R \rfloor$.
\end{assumption}
In other words, the observed interval informs that $Z$ lies within it, but nothing else. In the study of carotenoids, the censoring process is triggered by the limits of detection and quantification, which are fixed thresholds with no additional information. In contrast, in the ACTG359 study, this condition would be violated, for example, if after an observation close to the 500 copies/mL threshold the next visit were scheduled earlier than originally planned. See \cite{oller2004} for a thorough explanation of the noninformative condition.

\begin{assumption} \label{A3}
    The random variable $Z$ is independent of the covariates $\bm X$ included in the model.
\end{assumption}
This condition would generally be violated if $\bm X$ includes potential confounders of the association between $Z$ and $Y$ \citep{hernan2020}. We impose it here as a simplifying assumption to facilitate the theoretical derivations, and describe in the discussion a possible extension to relax the assumption.

\bigskip
The full likelihood function given the observed data $\{ y_i,\bm x_i,z_{l_i},z_{r_i} \}_{i=1}^n$ is
\begin{equation*}
    L_{\rm full}\big(\bm\theta \big| \{ y_i,\bm x_i,z_{l_i},z_{r_i} \}_{i=1}^n \big)
    = 
    \prod_{i=1}^n
    Pr\big(Y\in dy_i, Z_L\in dz_{l_i}, Z_R\in dz_{r_i}, 
    \bm X\in d\bm{x_i}, Z_i\in\lfloor z_{l_i},z_{r_i} \rfloor\big),
\end{equation*}
which is the probability that $Y,Z_L,Z_R$ and $\bm X$ take values in infinitesimal neighbourhoods around the observations $(y_i, z_{l_i}, z_{r_i}, \bm{x}_i)$, and that the unobserved $Z_i$ lies within $\lfloor z_{l_i},z_{r_i} \rfloor$.
Under Assumptions~\ref{A1} to~\ref{A3}, we show in~\appref{AppA1} that $L_{\rm full}$ is proportional to
\begin{equation} \label{eq:lik}
    L\big(\bm\theta, W \big| \{ y_i,\bm x_i,z_{l_i},z_{r_i} \}_{i=1}^n \big) = 
    \prod_{i=1}^n \int_{z_{l_i}}^{z_{r_i}} 
    f_{Y|\bm{X},Z}(y_i|\bm x_i,z;\bm\theta) \,
    dW(z).
\end{equation}
We aim to jointly estimate $\bm\theta = (\alpha,\bm\beta',\gamma,\phi)'$ and the nuisance distribution function $W$ by maximum likelihood, treating $W$ as an infinite-dimensional parameter.

\section{The GELc estimator}
\label{s:GELc}

Define $\mathcal{L}=\{z_{l_i}\}_{i=1}^n$ and $\mathcal{R}=\{z_{r_i}\}_{i=1}^n$ as the ordered sets of unique left and right endpoints, respectively.
These points partition the support $\Omega$ of $Z$ into disjoint intervals $\{I_j\}_{j=1}^m$ such that:
\begin{itemize}
    \item[(i)]
    The intervals do not overlap and together cover the entire support, that is $\Omega = \bigcup_{j=1}^{m} I_j$;
    \item[(ii)] 
    Every observed interval $\lfloor z_{l_i},z_{r_i} \rfloor$ can be expressed as a union of elements of $\{I_j\}_{j=1}^m$; that is, for all $i=1,\dots,n$,
    \begin{equation} \label{eq:partitiondef}
      \lfloor z_{l_i},z_{r_i} \rfloor = \bigcup_{j=1}^m \kappa^i_j \; I_j, \quad \text{where } \kappa^i_j = \mathbb{1}\{I_j\subseteq \lfloor z_{l_i},z_{r_i} \rfloor\} .  
    \end{equation}
\end{itemize}
For example, for semi-open observed intervals $\{[z_{l_i},z_{r_i})\}_{i=1}^n$, the ordered combined set $\mathcal{L}\cup\mathcal{R} = \{z_{(1)}<z_{(2)}<\dots<z_{(m)}\}$ induces intervals of the form $I_j=[z_{(j)},z_{(j+1)})$ for $j=1,\dots,m \leq 2n$.
We refer to $\{I_j\}_{j=1}^m$ as the augmented Turnbull intervals, whose number $m = m(n)$ grows with sample size.


Define $w_j = Pr(Z\in I_j)$ and consider $\bm w = (w_1,\dots,w_m)'$ the vector of unknown probabilities. Let $C_{ij}(\bm\theta)$ denote the probability density of $Y = y_i$ given the observed data averaged over $Z\in I_j$, that is
\begin{equation*}
    C_{ij}(\bm\theta) = \Big( \int_{I_j} f_{Y|\bm{X},Z}(y_i|\bm x_i,u;\bm\theta)\;du \Big)/|I_j|,
\end{equation*}
where $|I_j|$ is the length of the $j$th augmented Turnbull interval. Given a value of $\bm\theta$, the vector $\bm w$ is estimated by solving
\begin{equation} \label{eq:GELc:wj}
    \hat w_j = \frac{1}{n} \sum_{i=1}^n \kappa^i_j \, 
    \frac{ 
    \hat w_j \cdot C_{ij}(\bm\theta)
    }{
    \sum_{k=1}^m \kappa^i_k \hat w_k \cdot C_{ik}(\bm\theta)
    }, \quad j=1,\dots,m
\end{equation}
 where $\kappa^i_j$ is defined in \eqref{eq:partitiondef}. 
The idea is similar to that of the empirical distribution function:
$w_j = Pr(Z\in I_j)$ is estimated as the proportion of observations in $I_j$. Specifically, an exact (non-censored) observation $z_i$ contributes only to the unique $w_j$ such that $z_i \in I_j$, which is $I_j=[z_i,z_i]$. In contrast, an observation $\lfloor z_{l_i},z_{r_i} \rfloor$ spreads its contribution over all $w_j$ for which $Z_i \in I_j$ is possible, namely those $I_j$ such that $I_j\subseteq \lfloor z_{l_i},z_{r_i} \rfloor$. The information that $y_i$ and $\bm x_i$ provide about $Z_i$ is contained in $C_{ij}(\bm\theta)$ and aids the allocation of mass across $\bm w$.

\begin{assumption} \label{A4}
    The nonparametric estimator $\widehat W_n$ of the distribution function for $Z$ is assumed to be 
    \textit{piecewise-uniform},
    meaning that it increases uniformly within each augmented Turnbull interval with constant slope $\hat w_j/|I_j|$, where $|I_j|$ denotes the length of the interval.
\end{assumption}

Under~\autoref{A4} and given $\bm\theta$, $\widehat W_n$ is fully characterized by the vector $\bm w$:
\begin{equation*}
    \widehat{W}_n(z;\bm\theta) = \sum_{k=1}^{j-1} \hat w_k + (z-a_j)\frac{\hat w_j}{|I_j|}\quad \text{for } z \in I_j=\lfloor a_j, b_j\rfloor,
\end{equation*}
and the loglikelihood from \eqref{eq:lik} can be expressed as
\begin{equation*} 
    \ell\big(\bm\theta, \bm{w} \big| \{ y_i,\bm x_i,z_{l_i},z_{r_i} \}_{i=1}^n \big) = 
    \sum_{i=1}^n \log \bigg( \sum_{j=1}^{m(n)} \kappa^i_j\, w_j\, C_{ij}(\bm\theta) \bigg).
\end{equation*}
The estimating equations for the GELc estimator $\bm{\hat\theta}_n$ are of the form
\begin{equation} \label{eq:score}
    \frac{1}{n}\sum_{i=1}^n S\big(\bm\theta, W; y_i,\bm x_i,z_{l_i},z_{r_i}\big) = 0,
\end{equation}
where
\begin{equation*}
    S\big(\bm\theta, W; y,\bm x,z_{l},z_{r}\big) =\frac{
    \int_{z_{l}}^{z_{r}}
        \mathcal{S}(\bm\theta; y,\bm x,z)\,
        f_{Y|\bm{X},Z}(y|\bm x,z;\bm\theta) \, dW(z)
    }{
    \int_{z_{l}}^{z_{r}} 
    f_{Y|\bm{X},Z}(y|\bm x,u;\bm\theta) \, dW(u)
    }
\end{equation*}
and $\mathcal{S}(\bm\theta; y,\bm x,z) = \frac{d}{d\bm\theta}\log f_{Y|\bm{X},Z}(y|\bm x,z;\bm\theta)$ (see~\appref{AppA1}).
The estimation algorithm alternates between (i) estimating $W$ via the self-consistent equations \eqref{eq:GELc:wj} given the current ${\bm{\hat\theta}_n}$, and (ii) updating ${\bm{\hat\theta}_n}$ by solving the estimating equations in \eqref{eq:score} with the newly updated $\widehat W_n$; see Algorithm~\ref{alg}. GELc has been implemented in R and is available at \href{https://github.com/atoloba/ICenCov}{\texttt{github.com/atoloba/ICenCov}}; see~\appref{AppA2} for details.

\begin{algorithm}[!ht]
\caption{GELc estimation algorithm}  \label{alg}
\begin{algorithmic}
\State \textbf{Input:} Initial estimates $\bm\theta_{\rm old}$, $\bm w_{\rm old}$ and tolerances $\epsilon_p > 0$, $\epsilon_\ell > 0$
\State \textbf{Output:} Final estimates $\bm\theta_{\rm max}$, $\bm w$, $\ell$, and $\widehat{\mathrm{Var}}(\bm\theta_{\rm max})$
\Repeat
    \Repeat \Comment{$\bm w$-estimation stage}
        \For{$j = 1,\dots,m$}
            \State $w_j \gets \displaystyle
            \frac{1}{n}\sum_{i=1}^n \kappa^i_j\,
                \frac{w_j\, C_{ij}(\bm\theta_{\rm old}) }
                { \sum_{k=1}^m \kappa^i_k\, w_k\, C_{ik}(\bm\theta_{\rm old}) }$
        \EndFor
        \State $\delta_w \gets \|\bm w - \bm w_{\rm old}\|_2/\|\bm w_{\rm old}\|_2$
    \Until{$\delta_w < \epsilon_p$}

    \State $\bm\theta_{\rm max} \gets \displaystyle\arg\max_{\bm\theta\in\Theta}\,\ell(\bm\theta, \bm w)$   \Comment{$\bm\theta$-estimation stage}
    \State $\ell \gets \ell(\bm\theta_{\rm max}, \bm w)$
    \State $\delta_\theta \gets \|\bm\theta_{\rm max} - \bm\theta_{\rm old}\|_2/\|\bm\theta_{\rm old}\|_2$
    \State $\delta_\ell \gets \|\ell - \ell_{\rm old}\|_2/\|\ell_{\rm old}\|_2$
    \State \textbf{update} $\bm w_{\rm old}$, $\bm\theta_{\rm old}$, and $\ell_{\rm old}$
    
\Until{$\delta_\theta+\delta_w < \epsilon_p$ or $\delta_\ell < \epsilon_\ell$}
\end{algorithmic}
\end{algorithm}

\section{\texorpdfstring{Augmented Turnbull estimator for the distribution of $Z$}{Augmented Turnbull estimator for the distribution of \textit{Z}}}
\label{s:augmented}

Turnbull's estimator, denoted by $\widehat W^{{\rm Tb}}$, is a step function that increases only on a set of disjoint intervals $\{[p_j,q_j]\}_{j=1}^M$. Turnbull intervals are constructed so that each left endpoint $p_j$ belongs to the set of observed left endpoints $\mathcal{L}=\{z_{l_i}\}_{i=1}^n$ and is immediately followed by a right endpoint $q_j\in\mathcal{R}=\{z_{r_i}\}_{i=1}^n$ \citep{turnbull1976}. The probabilities $w^{{\rm Tb}}_j = Pr(Z\in [q_j,p_j])$ are determined by the self-consistent equations
\begin{align} \label{eq:TBeq}
    &\hat w^{{\rm Tb}}_j = \frac{1}{n} \sum_{i=1}^n \mathbb{E}_{\widehat W^{{\rm Tb}}}\big[\mathbb{1}\{Z\in[q_j,p_j]\} \,|\, \lfloor z_{l_i},z_{r_i} \rfloor \big] \\
    &= \frac{1}{n} \sum_{i=1}^n \alpha^i_j \frac{\hat w^{{\rm Tb}}_j}{\sum_{k=1}^M \alpha^i_k \hat w^{{\rm Tb}}_k} \quad \text{for } j=1,\dots,M,  \nonumber
\end{align}
where $\alpha^i_j = \mathbb{1}\{[q_j,p_j]\subseteq\lfloor z_{l_i},z_{r_i} \rfloor\}$ \citep{groeneboom1992}. \cite{turnbull1976} proved that any distribution function $H$ that satisfies $H(p_j)- H(q_j)=\hat w^{\rm Tb}_j$ for all $j = 1, \dots, M$, and assigns no mass outside the Turnbull’s intervals, is a nonparametric maximum likelihood estimator (NPMLE) of $W$. Turnbull’s $\widehat W^{{\rm Tb}}$, defined over $\{[q_j,p_j]\}_{j=1}^M$, maximizes the likelihood of the observed intervals
\[
    L^{{\rm Tb}} \big(W | \{\lfloor z_{l_i},z_{r_i} \rfloor\}_{i=1}^n \big) = \prod_{i=1}^n \big\{ W(z_{r_i})-W(z_{l_i}) \big\},
\]
but not necessarily \eqref{eq:lik}.
The augmented Turnbull estimator $\widehat W_n$ is defined along the intervals $\{I_j\}_{j=1}^{m(n)}$ as constructed in Section \ref{s:GELc} and, analogously to \eqref{eq:TBeq},
\begin{equation*}
    \hat w_j = \frac{1}{n} \sum_{i=1}^n \mathbb{E}_{\widehat W_n}\big[\mathbb{1}\{Z\in I_j\} \,|\, \lfloor z_{l_i},z_{r_i} \rfloor,y_i,\bm x_i \big].
\end{equation*}
Under Assumptions~\ref{A1} to~\ref{A3}, this yields the $\bm\theta$-dependent self-consistent equations
\begin{equation} \label{eq:mle}
    \hat w_j = \frac{1}{n} \sum_{i=1}^n \kappa^i_j \, 
    \frac{\int_{I_j} f_{Y|\bm{X},Z}(y_i |\bm x_i, z;\bm\theta) d\widehat W_n(z;\bm\theta) }{\sum_{k=1}^m \kappa^i_k \int_{I_k} f_{Y|\bm{X},Z}(y_i |\bm x_i, u;\bm\theta) d\widehat W_n(u;\bm\theta)},
\end{equation}
where we are omitting for clarity the dependence of $\hat w_j$ to $\bm\theta$.

\begin{theorem} \label{thm:mle}
    For fixed $\bm\theta$, any distribution function $H$ satisfying
    \begin{equation*}
        H(b_j)-H(a_j) = \hat w_j, \quad \text{for all } j=1,\dots,m
    \end{equation*}
    where $I_j=\lfloor a_j,b_j \rfloor$ are the augmented Turnbull intervals and $\hat w_j$ solves~\eqref{eq:mle} is a nonparametric maximum likelihood estimator (NPMLE) of $W$ for the likelihood in \eqref{eq:lik}. That is, the NPMLE is completely characterized by the $\bm\theta$-dependent self-consistent equations~\eqref{eq:mle}.
\end{theorem}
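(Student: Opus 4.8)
The plan is to reduce the infinite-dimensional maximization of~\eqref{eq:lik} to a finite-dimensional concave program over the mass vector $\bm w=(w_1,\dots,w_m)'$, and then to show that the Karush--Kuhn--Tucker (KKT) conditions of that program coincide exactly with the self-consistent equations~\eqref{eq:mle}. First I would use~\autoref{A4} to rewrite the likelihood: since $\widehat W_n$ is piecewise-uniform, $dW(z)=(w_j/|I_j|)\,dz$ on each $I_j=\lfloor a_j,b_j\rfloor$, so that $\int_{I_j} f_{Y|\bm X,Z}(y_i|\bm x_i,z;\bm\theta)\,dW(z)=w_j\,C_{ij}(\bm\theta)$ and, because every observed interval is a union of augmented Turnbull intervals (property (ii)), the likelihood depends on $W$ only through $\bm w$,
\begin{equation*}
    \ell(\bm w)=\sum_{i=1}^n \log\Big(\sum_{j=1}^{m}\kappa^i_j\, w_j\, C_{ij}(\bm\theta)\Big).
\end{equation*}
This is the exact analogue of the reduction in \cite{turnbull1976}, the only difference being that the weight attached to $w_j$ in observation $i$ is now $C_{ij}(\bm\theta)$ rather than a plain indicator. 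Maximizing~\eqref{eq:lik} over piecewise-uniform distributions is therefore equivalent to maximizing $\ell(\bm w)$ over the simplex $\Delta=\{\bm w:\,w_j\ge 0,\ \sum_{j=1}^m w_j=1\}$.

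Next I would exploit concavity. Each summand is the logarithm of a nonnegative affine function of $\bm w$, so $\ell$ is concave on the convex set $\Delta$; consequently a feasible point is a global maximizer if and only if it satisfies the KKT conditions. Introducing a multiplier $\lambda$ for the equality constraint and multipliers $\nu_j\ge 0$ for $w_j\ge 0$, stationarity reads
\begin{equation*}
    \sum_{i=1}^n \frac{\kappa^i_j\, C_{ij}(\bm\theta)}{\sum_{k=1}^m \kappa^i_k\, w_k\, C_{ik}(\bm\theta)}-\lambda+\nu_j=0,\qquad \nu_j\, w_j=0,\quad j=1,\dots,m.
\end{equation*}
Multiplying the $j$-th equation by $w_j$, using complementary slackness $\nu_j w_j=0$, and summing over $j$ collapses the double sum to $\sum_{i=1}^n 1=n$, which pins down $\lambda=n$. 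Substituting back and multiplying by $w_j/n$ then yields, for every $j$,
\begin{equation*}
    w_j=\frac{1}{n}\sum_{i=1}^n \kappa^i_j\,\frac{w_j\,C_{ij}(\bm\theta)}{\sum_{k=1}^m \kappa^i_k\, w_k\, C_{ik}(\bm\theta)},
\end{equation*}
which is precisely~\eqref{eq:GELc:wj}; rewriting $w_j C_{ij}(\bm\theta)=\int_{I_j} f_{Y|\bm X,Z}\,d\widehat W_n$ via~\autoref{A4} recovers~\eqref{eq:mle}. Since $\ell$ is concave, any $\bm w$ meeting these conditions is a global maximizer, and conversely the maximizer must satisfy them; the associated piecewise-uniform $H$, with $H(b_j)-H(a_j)=w_j$, is therefore an NPMLE for~\eqref{eq:lik}. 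The statement that \emph{any} such $H$ is an NPMLE mirrors the non-uniqueness in \cite{turnbull1976}: under~\autoref{A4} the likelihood sees $W$ only through $\bm w$, so all distributions sharing the masses $w_j$ attain the same maximal value.

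The main obstacle, and the point requiring the most care, is the treatment of the boundary of $\Delta$. The self-consistent equation holds trivially at any coordinate with $w_j=0$, so by itself it is weaker than full stationarity; the argument is completed only through the KKT inequality $\sum_i \kappa^i_j C_{ij}(\bm\theta)/(\sum_k \kappa^i_k w_k C_{ik}(\bm\theta))\le n$ at the zero masses, which is exactly what concavity guarantees at the maximizer. A second delicate step is the reduction itself: one must verify that~\autoref{A4} makes the map $W\mapsto\bm w$ lossless for the likelihood, so that the finite-dimensional maximizer genuinely delivers a maximizer of~\eqref{eq:lik} and not merely of its piecewise-uniform restriction. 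Once these two points are secured, the equivalence between the NPMLE and the self-consistent equations follows directly from the concavity argument, in complete parallel with Turnbull's construction but carrying the GLM weights $C_{ij}(\bm\theta)$ throughout.
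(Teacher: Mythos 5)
Your route is genuinely different from the paper's, and it has a gap that matters for the statement as written. \autoref{thm:mle} characterizes the NPMLE of $W$ for the likelihood \eqref{eq:lik} over \emph{all} distribution functions, and it is stated prior to, and independently of, \autoref{A4}: the text immediately after the theorem makes the point that the likelihood does not determine the within-interval shape, and that \autoref{A4} is merely one convenient choice among the class of NPMLEs the theorem identifies. By imposing \autoref{A4} at the outset, you reduce the problem to maximizing $\ell(\bm w)$ over the simplex, i.e., you characterize the maximizer within the piecewise-uniform class only. You flag the needed bridge --- that this restriction is lossless --- as a ``delicate step,'' but you never close it, and the claim you substitute for it (``under \autoref{A4} the likelihood sees $W$ only through $\bm w$, so all distributions sharing the masses $w_j$ attain the same maximal value'') is false: for a non-piecewise-uniform $H$ with the same masses, $\int_{I_j} f_{Y|\bm X,Z}(y_i|\bm x_i,z;\bm\theta)\,dH(z) \neq w_j\, C_{ij}(\bm\theta)$ in general, because $f_{Y|\bm X,Z}$ varies with $z$ inside $I_j$. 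This is exactly what distinguishes the present setting from Turnbull's, where the integrand is an indicator and the reduction to $\bm w$ \emph{is} lossless. Consequently your argument shows neither that the piecewise-uniform maximizer maximizes \eqref{eq:lik} over all $W$, nor that an arbitrary $H$ with increments $\hat w_j$ is an NPMLE; and it produces \eqref{eq:GELc:wj}, the \autoref{A4}-specialization, rather than the shape-free equations \eqref{eq:mle} that the theorem is about --- the two coincide only under \autoref{A4}, which is the assumption the theorem is designed to avoid.

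The paper's proof (\appref{AppA3}) avoids this by never fixing a shape: it treats the likelihood as a functional on the full space of distribution functions, computes its Gateaux differential at $W$ in the directions $A_{z_0}(\cdot)=Pr(Z\le \cdot\,|\,Z\le z_0)-Pr(Z\le \cdot)$ (using linearity of each contribution $L_i$ in $W$), sets it to zero for every $z_0\in\Omega$, and then, via a case analysis on the position of $z$ relative to each observed interval, obtains the self-consistency identity $W(z)=n^{-1}\sum_{i} L_i(W_z)/L_i(W)$, whose evaluation at the interval endpoints yields \eqref{eq:mle} with integrals taken under the candidate $\widehat W_n$ itself. Your finite-dimensional argument is not without value: within the piecewise-uniform class, the KKT/concavity reasoning is correct and in two respects sharper than the paper's variational step --- it gives global optimality (the paper only imposes first-order stationarity and never verifies its announced second-order condition), and it correctly observes that the self-consistent equations are vacuous at coordinates with $w_j=0$ and must be supplemented by the KKT inequality there. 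But as a proof of the theorem as stated --- a characterization of the NPMLE of $W$ for \eqref{eq:lik} without \autoref{A4} --- the reduction step is missing, and it cannot be repaired by the shape-independence claim you invoke.
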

In other words, the likelihood in \eqref{eq:lik} does not determine the shape of $\widehat W_n$ within the augmented Turnbull intervals. \autoref{A4}, which imposes a piecewise-uniform form for $\widehat{W}_n$, provides a convenient choice, although other piecewise specifications consistent with \eqref{eq:mle} could equally be adopted.
\autoref{A4} leads to the equations proposed in~\eqref{eq:GELc:wj}.
The proof of~\autoref{thm:mle} appears in~\appref{AppA3}.

\begin{theorem} \label{thm:TBconsistent}
Assume that the random censoring interval $\lfloor Z_L,Z_R \rfloor$ can take only finitely many observed intervals. If for all $z \in\Omega$ the probability of observing an exact value $z$ converges to one as $n\to\infty$, then the augmented Turnbull estimator $\widehat W_n$ converges uniformly to the true distribution $W_0$, that is,
\[
\sup_{\bm\theta\in\Theta} \sup_{z\in\Omega} |\widehat W_n(z;\bm\theta)-W_0(z)| \xrightarrow[n\to\infty]{p}0
\]
for $\Theta$ compact parameter space.
\end{theorem}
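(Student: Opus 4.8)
The plan is to reduce the statement to the classical Glivenko--Cantelli theorem by exploiting that, under the two hypotheses, the $\bm\theta$-dependence of $\widehat W_n$ is confined to an asymptotically negligible fraction of the sample. The starting point is a mass-accounting identity: summing the self-consistent equations~\eqref{eq:mle} over $j$ telescopes each summand's numerator into its denominator, so $\sum_{j}\hat w_j = 1$ and every observation $i$ contributes total mass exactly $1/n$. For an exact observation the censoring interval is the degenerate $[z_i,z_i]$, so $\kappa^i_j$ selects a single interval and the entire $1/n$ is deposited there with no dependence on $\bm\theta$; for a censored observation the $1/n$ is spread across the intervals inside $\lfloor z_{l_i},z_{r_i}\rfloor$ through $\bm\theta$-dependent weights.

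This suggests the decomposition $\widehat W_n(z;\bm\theta)=\widehat W_n^{e}(z)+\widehat W_n^{c}(z;\bm\theta)$ into the exact and censored contributions, and I would establish three facts. First, the exact part is free of $\bm\theta$ and equals the sub-empirical distribution of the exactly observed values, $\widehat W_n^{e}(z)=\tfrac1n\sum_i D_i\,\mathbb{1}\{Z_i\le z\}$, where $D_i$ indicates an exact observation. Second, $\widehat W_n^{c}(\cdot\,;\bm\theta)$ is nondecreasing from $0$ to the total censored mass $n_c/n$ (with $n_c=\sum_i(1-D_i)$), and \autoref{A4} only redistributes this mass linearly inside each $I_j$; hence $0\le \widehat W_n^{c}(z;\bm\theta)\le n_c/n$ uniformly in $z$ and in $\bm\theta$, which is precisely where the supremum over the compact $\Theta$ is controlled for free. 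Third, introducing the latent full empirical distribution $\widehat F_n(z)=\tfrac1n\sum_i\mathbb{1}\{Z_i\le z\}$, used only as a proof device, gives $\sup_z|\widehat F_n(z)-\widehat W_n^{e}(z)|=\sup_z\tfrac1n\sum_i(1-D_i)\mathbb{1}\{Z_i\le z\}\le n_c/n$.

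Chaining the three facts through the triangle inequality yields
\[
\sup_{\bm\theta\in\Theta}\sup_{z\in\Omega}\big|\widehat W_n(z;\bm\theta)-W_0(z)\big|
\;\le\;\frac{2n_c}{n}+\sup_{z\in\Omega}\big|\widehat F_n(z)-W_0(z)\big|.
\]
The last term vanishes in probability by the classical Glivenko--Cantelli theorem, because the latent $Z_1,\dots,Z_n$ are i.i.d.\ with law $W_0$: only the censoring mechanism varies with $n$, not the marginal of $Z$. For the remaining term I would show $n_c/n\xrightarrow{p}0$ by writing $\pi_n(z)$ for the probability of an exact observation at $z$, so that $E[n_c/n]=E[1-\pi_n(Z)]\to0$ by the hypothesis $\pi_n(z)\to1$ for every $z$ combined with dominated convergence ($0\le 1-\pi_n\le1$); since $n_c/n\ge0$, Markov's inequality then delivers convergence in probability.

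The main obstacle I anticipate is not the limit itself but the mass bookkeeping that isolates the $\bm\theta$-dependence: one must verify rigorously that summing~\eqref{eq:mle} assigns each observation total weight exactly $1/n$ and that the within-interval shape fixed by \autoref{A4} cannot inflate the censored part of the cumulative function beyond $n_c/n$, both uniformly over $\bm\theta\in\Theta$. The finiteness hypothesis on $\lfloor Z_L,Z_R\rfloor$ serves to keep the non-degenerate part of the augmented partition generated from a fixed finite collection of endpoints, ensuring $\widehat W_n$ and the self-consistent equations~\eqref{eq:mle} are well defined and the decomposition stable across $n$; once the mass accounting is secured, the probabilistic convergence is routine.
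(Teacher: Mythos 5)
Your mass-accounting observations are correct as far as they go: summing \eqref{eq:mle} over $j$ does give each observation total weight $1/n$, an exact observation deposits its $1/n$ on a degenerate interval with no dependence on $\bm\theta$, and the cumulative contribution of the censored observations is bounded by the censored fraction $n_c/n$ uniformly in $z$ and $\bm\theta$. The proof nevertheless fails at the step where you need $n_c/n \xrightarrow{p} 0$, because that step rests on a misreading of the hypothesis. You interpret \enquote{the probability of observing an exact value $z$ converges to one} as a statement about the per-observation censoring mechanism, i.e., that censoring itself vanishes asymptotically. The intended meaning---and the one the paper's proof in \appref{AppA4} actually uses---is that for each $z$ in the support, the probability that the \emph{sample of size $n$} contains at least one exact observation at $z$ tends to one; this holds whenever exact observation at each support point has fixed positive per-observation probability, which is why the paper invokes it to conclude $Pr(Z_i\neq s_k,\ \forall i=1,\dots,n)\to 0$. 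Under this reading a positive fraction of observations may remain censored forever (say each observation is independently censored with probability $1/2$), so $n_c/n$ converges to a positive constant and your bound $2n_c/n+\sup_z|\widehat F_n(z)-W_0(z)|$ does not vanish. The paper's discussion after the theorem confirms this reading: the condition is said to fail only when some region of positive $W_0$-mass is \emph{unreachable} by exact observations (the carotenoid limit-of-detection example), not whenever a non-vanishing share of observations is censored---which is the regime the method is designed for.

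What your decomposition discards is precisely the substance of the theorem: one must show that the $\bm\theta$-dependent redistribution of the \emph{persistent} censored mass, governed by the self-consistent equations \eqref{eq:mle}, still converges to $W_0$. The paper does this by (i) reducing to a finitely supported $W_0^*$, which is where the finitely-many-intervals assumption genuinely matters (not merely for \enquote{stability of the partition}); (ii) applying the Helly selection theorem to extract a subsequential limit $H$ of $\widehat W_n$; (iii) passing to the limit in the empirical self-consistency integral equation, using convergence of the empirical law $P_n$ of the observed data, to show that any such $H$ satisfies the population equation and hence $H=W_0^*$; (iv) upgrading pointwise to uniform convergence via finiteness of the support, with uniformity in $\bm\theta$ coming from compactness of $\Theta$ and continuity of $\widehat W_n(z;\cdot)$; and (v) approximating a general $W_0$ by such a $W_0^*$. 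None of this can be replaced by Glivenko--Cantelli, because under the actual hypothesis $\widehat W_n$ is not a vanishing perturbation of the empirical CDF of the latent $Z_i$. Relatedly, your claim that the supremum over $\Theta$ is controlled \enquote{for free} is an artifact of the same misreading: once censored mass persists, the $\bm\theta$-dependence is substantive and must be controlled as in the paper.
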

The proof is provided in~\appref{AppA4}. 
The assumption that $\lfloor Z_L,Z_R \rfloor$ ranges over a finite set of intervals is reasonable: in follow-up studies, visit times are recorded on a discrete scale (e.g., days or months) over a finite study period; and measurements are often obtained with finite decimal precision. Hence, even if the true $Z$ varies continuously, the observable endpoints $Z_L$ and $Z_R$ lie on a finite set.
By contrast, the condition on the support $\Omega$ fails if there exist a region that has positive probability for $Z$ but is unreachable for $Z_L$ and $Z_R$. This is the case, for example, for single carotenoids $Q_c$ under the limit of detection: the true $Q_c$ can lie in $[0,{\rm LoD}_c)$, but it cannot occur that $0<Z_L \leq Z_R < {\rm LoD}_c$.

\section{Asymptotic properties of the GELc estimator}
\label{s:asym}

Let $\bm\theta_0$ denote the true parameter value. The estimating equations \eqref{eq:score} are solved iteratively, with $W$ replaced by the augmented Turnbull NPMLE $\widehat W_n(z;\bm\theta)$ at each update of $\bm\theta$.
Let $\Theta$ and $\mathcal{W}$ denote the parameter and function spaces of $\bm\theta$ and $W$, respectively. 

\begin{theorem} \label{thm:consistent}
    Under conditions \ref{thm:consis:iid}--\ref{thm:consis:w}, the GELc estimator is consistent, in the sense that $\| \bm{\hat\theta}_n - \bm\theta_0 \| \xrightarrow[]{p} 0$ as $n\to\infty$.
    \begin{enumerate}[label=(C\arabic*), leftmargin=*]
    
    \item \label{thm:consis:iid}
    The observations $\{(y_i,z_{l_i},z_{r_i},\bm x_i)\}_{i=1}^n$ are independent and identically distributed (i.i.d.) with true distribution satisfying model~\eqref{eq:glm} evaluated at $\bm\theta_0\in\Theta$; and $Z_1,\dots, Z_n$ are i.i.d. with true distribution $W_0\in\mathcal{W}$.

    \item \label{thm:consis:collinearity}
    No perfect multicollinearity among $(\bm X, Z)$ is assumed.
    
    \item \label{thm:consis:w}
    $\widehat W_n$ converges uniformly to $W_0$, that is,
    \[
    \sup_{\bm\theta\in\Theta} \sup_{z\in\Omega} |\widehat W_n(z;\bm\theta)-W_0(z)| \xrightarrow[n\to\infty]{p} 0.
    \]
    \end{enumerate}
    Additional standard conditions concerning the compactness of $\Theta$, entropy of $\mathcal W$, and the continuity of $S(\bm\theta,W;y,z_{l},z_{r},\bm x)$ are stated in~\appref{AppA5}.
\end{theorem}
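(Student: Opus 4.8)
The plan is to treat GELc as a Z-estimator (equivalently, an M-estimator maximizing the profile log-likelihood) whose estimating function carries an estimated infinite-dimensional nuisance parameter $W$, and to invoke the standard consistency theorem for such estimators, namely the nuisance-parameter version of the argmax / Z-estimator theorem. Writing $\Psi_n(\bm\theta) = \frac1n\sum_i S(\bm\theta,\widehat W_n;y_i,\bm x_i,z_{l_i},z_{r_i})$ and $\Psi(\bm\theta) = E_{\bm\theta_0,W_0}[S(\bm\theta,W_0;Y,\bm X,Z_L,Z_R)]$, the GELc estimator satisfies $\Psi_n(\bm{\hat\theta}_n)=0$ by~\eqref{eq:score}. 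Consistency will follow once I establish (a) that $\bm\theta_0$ is a well-separated zero of the limiting map $\Psi$, and (b) that $\Psi_n$ converges to $\Psi$ uniformly over $\Theta$.

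For (a), identification, I would argue through the underlying likelihood rather than the vector field directly. Define $m(\bm\theta,W)=E\big[\log\int_{Z_L}^{Z_R} f_{Y|\bm X,Z}(Y|\bm X,z;\bm\theta)\,dW(z)\big]$. Since $\partial_{\bm\theta} f = \mathcal S\cdot f$, the function $S$ is exactly the $\bm\theta$-gradient of the per-observation integrand inside $m$, so $\Psi(\bm\theta)=\nabla_{\bm\theta}m(\bm\theta,W_0)$. The difference $m(\bm\theta_0,W_0)-m(\bm\theta,W_0)$ is a Kullback--Leibler divergence between the observed-data densities generated by $(\bm\theta_0,W_0)$ and $(\bm\theta,W_0)$ (the $\bm\theta$-free censoring and $\bm X$ factors cancel in the ratio), hence nonnegative and zero only when the two densities coincide almost surely. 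It then remains to show that distinct $\bm\theta$ produce distinct observed-data densities; this is where condition~\ref{thm:consis:collinearity} enters, guaranteeing that the linear predictor $\alpha+\bm\beta'\bm X+\gamma Z$ --- and therefore the GLM mean --- is not left invariant under a change of $\bm\theta$, so that $\bm\theta_0$ is the unique maximizer of $m(\cdot,W_0)$ and hence the unique zero of $\Psi$. Well-separation then follows from the continuity of $\Psi$ and the compactness of $\Theta$ invoked in~\appref{AppA5}.

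For (b), the uniform convergence, I would decompose $\Psi_n(\bm\theta)-\Psi(\bm\theta)$ into an empirical-process term $(\mathbb P_n-P)S(\bm\theta,\widehat W_n;\cdot)$ and a bias term $P[S(\bm\theta,\widehat W_n;\cdot)-S(\bm\theta,W_0;\cdot)]$. The bias term is controlled by the continuity of $S$ in $W$ together with the uniform consistency~\ref{thm:consis:w} of the augmented Turnbull estimator (Theorem~\ref{thm:TBconsistent}), pushing the limit through the integral by dominated convergence. The empirical-process term is bounded by $\sup_{\bm\theta\in\Theta,\,W\in\mathcal W}|(\mathbb P_n-P)S(\bm\theta,W;\cdot)|$ and handled by a uniform law of large numbers over the product class $\{(\bm\theta,W)\mapsto S(\bm\theta,W;\cdot):\bm\theta\in\Theta,\ W\in\mathcal W\}$, which is Glivenko--Cantelli under the compactness of $\Theta$, the bracketing-entropy bound on $\mathcal W$, and the domination/continuity conditions collected in~\appref{AppA5}.

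The main obstacle I anticipate is this last uniform law of large numbers over the infinite-dimensional nuisance space: because $\widehat W_n$ depends on $\bm\theta$, the convergence must hold simultaneously in $\bm\theta$ and $W$, which is precisely why the sup over $\Theta$ already appears in~\ref{thm:consis:w}. The key technical hazard is that $S$ contains the denominator $\int_{z_l}^{z_r} f_{Y|\bm X,Z}\,dW$, which must be bounded away from zero uniformly over $\Theta\times\mathcal W$ in order to keep the integrand class integrable with a valid envelope. I would secure this using the compactness of $\Theta$, the strict positivity of the exponential-family density on the censoring intervals, and the fact that each observed interval is a union of augmented Turnbull intervals carrying positive mass under $\widehat W_n$. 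Once the envelope and bracketing bounds are in place, the Glivenko--Cantelli property follows and ingredients (a)--(b) combine through the standard argument to yield $\|\bm{\hat\theta}_n-\bm\theta_0\|\xrightarrow{p}0$.
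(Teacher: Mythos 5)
Your proposal is correct and follows essentially the same route as the paper's proof, which is built on the semiparametric Z-estimator consistency framework of \cite{chen2003}: identifiability of $\bm\theta_0$ as the unique zero of the population moment map (argued via distinctness of the induced observed-data densities, where the no-multicollinearity condition enters), combined with uniform convergence of the empirical moment map over $\Theta\times\mathcal W$ using the uniform consistency of $\widehat W_n$, compactness of $\Theta$, continuity of $S$ in $W$, and the bracketing-entropy condition on $\mathcal W$. Your split into an empirical-process term and a bias term is exactly how the paper verifies the Chen--Linton--van Keilegom conditions (its stochastic equicontinuity lemma handles the former, its continuity condition the latter), and your KL-divergence identification argument is an equivalent population-level version of the paper's density-equality argument in Lemma~\ref{lemma:W0:iden}.
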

The proof, provided in~\appref{AppA5}, builds on the arguments of \cite{chen2003}. Consistency follows from the identifiability of $\bm\theta_0$, and the uniform convergence of the estimating function $n^{-1}\sum_{i=1}^n S\big(\bm\theta, \widehat W_n; y_i,\bm x_i,z_{l_i},z_{r_i}\big)$ to $\mathbb{E}\big[S\big(\bm\theta, W_0; y,\bm x,z_{l},z_{r}\big)\big]$. Conditions \ref{thm:consis:iid} and \ref{thm:consis:collinearity} are standard for identifiability. Uniform convergence of $\widehat W_n$ is essential for the consistency of $\bm{\hat\theta}_n$, and it must hold in the entire support of $Z$. We have provided sufficient conditions in~\autoref{thm:TBconsistent}.

\begin{theorem} \label{thm:asymnorm}
    Suppose that the conditions of~\autoref{thm:consistent} are satisfied. 
    If $\frac{d}{d \bm\theta} \mathcal{S}(\bm\theta; y,\bm x,z)$ has full column rank and $\widehat W_n$ converges to $W_0$ at a rate faster than $n^{-1/4}$, that is,
    \[
    \sup_{\bm\theta\in\Theta} \sup_{z\in\Omega} |\widehat W_n(z;\bm\theta)-W_0(z)| =o_p(n^{-1/4}),
    \]
    then, under additional smoothness and differentiability conditions stated in~\appref{AppA5}, $\sqrt{n}(\bm{\hat\theta}_n-\bm\theta_0)$ is asymptotically normal with with covariance matrix $\mathcal I^{-1}$, that is,    
    \[ \sqrt{n}(\bm{\hat\theta}_n-\bm\theta_0) \xrightarrow[n\to\infty]{d} N(0,\mathcal I^{-1}), \]
    where $\mathcal I = \mathbb{E}\big[ S(\bm\theta_0,W_0;\bm o)S(\bm\theta_0,W_0;\bm o)' \big]$ is the Fisher information matrix.
\end{theorem}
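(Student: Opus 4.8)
The plan is to treat $\bm{\hat\theta}_n$ as a Z-estimator with an infinite-dimensional nuisance $W$, estimated by the profiled NPMLE $\widehat W_n(\cdot;\bm\theta)$, and to follow the master argument of \cite{chen2003}. Write $\bm o_i=(y_i,\bm x_i,z_{l_i},z_{r_i})$, $\Psi_n(\bm\theta,W)=\frac1n\sum_{i=1}^n S(\bm\theta,W;\bm o_i)$, and $\Psi(\bm\theta,W)=\mathbb E[S(\bm\theta,W;\bm o)]$, so that $\Psi_n(\bm{\hat\theta}_n,\widehat W_n)=0$ and $\Psi(\bm\theta_0,W_0)=0$. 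Adding and subtracting gives
\[
0 = (\Psi_n-\Psi)(\bm\theta_0,W_0) + \big[(\Psi_n-\Psi)(\bm{\hat\theta}_n,\widehat W_n)-(\Psi_n-\Psi)(\bm\theta_0,W_0)\big] + \Psi(\bm{\hat\theta}_n,\widehat W_n).
\]
The first step is to dispatch the bracketed empirical-process term by stochastic equicontinuity: under the entropy bound on $\mathcal W$ stated in~\appref{AppA5} and the $L_2$-continuity of $S$, the class $\{S(\bm\theta,W;\cdot):\bm\theta\in\Theta,\,W\in\mathcal W\}$ is $P$-Donsker on a shrinking neighbourhood of $(\bm\theta_0,W_0)$; combined with the consistency $(\bm{\hat\theta}_n,\widehat W_n)\to(\bm\theta_0,W_0)$ from~\autoref{thm:consistent}, this bracketed term is $o_p(n^{-1/2})$.

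Next I would expand the population drift $\Psi(\bm{\hat\theta}_n,\widehat W_n)$ about $(\bm\theta_0,W_0)$. Since $\Psi$ is differentiable in $\bm\theta$ and Hadamard differentiable in $W$ --- the map $W\mapsto S$ enters only through the ratio of the linear integrals $\int\mathcal S\,f\,dW$ and $\int f\,dW$ --- a first-order expansion yields
\[
\Psi(\bm{\hat\theta}_n,\widehat W_n)=\Gamma_1(\bm{\hat\theta}_n-\bm\theta_0)+\Gamma_2[\widehat W_n-W_0]+R_n,
\]
with $\Gamma_1=\partial_{\bm\theta}\Psi(\bm\theta,W_0)|_{\bm\theta_0}$, $\Gamma_2$ the pathwise derivative in $W$, and $\|R_n\|\lesssim\|\bm{\hat\theta}_n-\bm\theta_0\|^2+\sup_z|\widehat W_n(z)-W_0(z)|^2$. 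The quadratic nuisance part of $R_n$ is $o_p(n^{-1/2})$ precisely because of the imposed $o_p(n^{-1/4})$ rate. The decisive point is that the linear nuisance term is negligible: because $\widehat W_n(\cdot;\bm\theta)$ is the NPMLE that annihilates the score in the $W$-direction (\autoref{thm:mle}), the profile estimating function $S$ is orthogonal to the nuisance tangent space, so $\Gamma_2[\widehat W_n-W_0]=o_p(n^{-1/2})$. This Neyman-type orthogonality is what keeps the slowly converging $\widehat W_n$ from contaminating the $\sqrt n$-limit.

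Collecting the pieces and using the invertibility of $\Gamma_1$ --- guaranteed by the full-column-rank condition on $\frac{d}{d\bm\theta}\mathcal S$ and the absence of collinearity~\ref{thm:consis:collinearity}, which together identify $\bm\theta_0$ --- yields the linear representation
\[
\sqrt n(\bm{\hat\theta}_n-\bm\theta_0)=-\Gamma_1^{-1}\,\frac{1}{\sqrt n}\sum_{i=1}^n S(\bm\theta_0,W_0;\bm o_i)+o_p(1),
\]
where I used $\Psi(\bm\theta_0,W_0)=0$. The classical CLT gives $\frac1{\sqrt n}\sum_i S(\bm\theta_0,W_0;\bm o_i)\xrightarrow{d}N(0,\mathcal I)$. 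Finally, because $S$ is the score of the observed-data likelihood~\eqref{eq:lik}, the second Bartlett identity gives the information equality $\Gamma_1=-\mathcal I$, so the sandwich collapses: $\Gamma_1^{-1}\mathcal I\,(\Gamma_1^{-1})'=\mathcal I^{-1}\mathcal I\,\mathcal I^{-1}=\mathcal I^{-1}$, establishing $\sqrt n(\bm{\hat\theta}_n-\bm\theta_0)\xrightarrow{d}N(0,\mathcal I^{-1})$.

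I expect the main obstacle to be the simultaneous control of the empirical-process term and the $\Gamma_2$-term: verifying the $P$-Donsker property for the $W$-indexed class $\{S(\bm\theta,W;\cdot)\}$ (the denominators $\int f\,dW$ can approach zero, so envelope and entropy bounds must be handled with care) and rigorously establishing the orthogonality $\Gamma_2[\widehat W_n-W_0]=o_p(n^{-1/2})$ for the self-consistently defined profile NPMLE, whose implicit dependence on $\bm\theta$ must be differentiated through. Moreover, the $o_p(n^{-1/4})$ condition is assumed rather than derived; showing that the augmented Turnbull estimator actually attains this rate is the genuinely hard analytic ingredient and the natural remaining gap.
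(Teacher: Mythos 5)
Your proposal is correct and follows essentially the same route as the paper: both instantiate the Chen--Linton--van Keilegom framework, combining a first-order expansion of the estimating equation in $(\bm\theta,W)$, stochastic equicontinuity of the score class (Lemma~\ref{lem:stochastic_equicontinuity}), negligibility of the nuisance-direction derivative term, the CLT with Slutsky's theorem, the information identity $\Gamma_1=-\mathcal I$ that collapses the sandwich, and invertibility of $\Gamma_1$ from the full-rank condition (Lemma~\ref{lem:nonsingular}). The one mechanistic difference is your justification of $\Gamma_2[\widehat W_n-W_0]=o_p(n^{-1/2})$: you invoke profile-NPMLE orthogonality via Theorem~\ref{thm:mle} (which you rightly flag as needing rigorous work, since the self-consistency equations alone do not deliver population orthogonality), whereas the paper's Lemma~\ref{lem:orthogonality} derives this directly from the score structure of $S$ --- linearity of the observed-data likelihood in $W$ together with the second-order remainder condition~\ref{c:c1w}(i) --- so that the bound holds for any plug-in estimator attaining the $o_p(n^{-1/4})$ rate, without using that $\widehat W_n$ is an NPMLE.
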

The proof is derived from the first-order Taylor expansion of the estimating equations \eqref{eq:score} around $(\bm\theta_0,W_0)$ evaluated at $(\bm{\hat\theta}_n,\widehat W_n)$, which yields the classical asymptotic linearization
\begin{equation} \label{eq:linearization}
    \sqrt{n}(\bm{\hat\theta}_n-\bm\theta_0) \approx
    -J_n^{-1} \sqrt{n}\, A_n 
\end{equation}
where $J_n = n^{-1} \sum_{i=1}^n \frac{d}{d \bm\theta} S(\bm\theta, W_0; y_i,\bm x_i,z_{l_i},z_{r_i})\rvert_{\bm\theta_0}$. Unlike in the standard linearization, the leading term $A_n$ includes an additional differential component due to the estimation of $W$, which converges to zero faster than $n^{-1/2}$ if $\widehat W_n$ converges uniformly to $W_0$ at a rate faster than $n^{-1/4}$. 
The convergence rate of $\widehat W_n$ heavily depends on the censoring process and, by analogy with Turnbull’s estimator, is expected to be faster than $n^{-1/4}$ \citep[][Section 3.6.2]{sun2006}. 
Asymptotic normality then follows from Slutsky’s theorem. Finally, the full column rank condition on $\frac{d}{d \bm\theta} \mathcal{S}(\bm\theta; y,\bm x,z)$ guarantees that $\mathcal I$ is invertible. The complete proof is provided in~\appref{AppA5}. Fisher information matrix is consistently estimated as $\widehat{\mathcal I} = - \frac{d^2}{d\bm\theta^2} \ell(\bm\theta,W)$ evaluated at $(\bm{\hat\theta}_n, \widehat W_n)$, the MLE.

\section{Simulation Study}
\label{s:simstud}

We conducted a simulation study to evaluate how well the asymptotic properties hold for moderate sample sizes. Performance of $\bm{\hat\theta}_n$ was assessed in terms of bias and empirical standard error, and coverage probability of 95\% confidence intervals was also evaluated.

\subsection{Design of scenarios}

We considered sample sizes $n = 100$, $300$, and $500$.
The true $Z_i$ values were simulated from an exponential distribution with mean $12$ and then censored using the process described below.
Since the performance may also be influenced by the width of the observed intervals, the censoring process was replicated with three increasing mean widths: $3$, $6$, and $9$.
To assess sensitivity to the outcome distribution, we considered two options from the GLM family: a Gamma model with $\log E[Y|Z] = \alpha+\gamma Z$; and a Logistic regression model for a binary outcome, ${\rm logit} E[Y|Z] = \alpha+\gamma Z$.
For the same purpose, we set the covariate effect $\gamma$ and the dispersion parameter $\phi$ to several values. These choices were inspired by the two illustrations presented in Section~\ref{s:appl}. The $72$ scenarios are summarized in Table~\ref{t:scenarios}.
\clearpage

\begin{table}[!ht]
\caption{Summary of the simulation scenarios with $n=100$, $300$, and $500$.}
\label{t:scenarios}
\begin{center}
\begin{tabular}{lll}
\toprule
\multicolumn{3}{l}{\textbf{Censoring process}} \\[1ex]
 & Fully observed data: $[z_i, z_i]$ for $i=1,\dots,n$ &  \\ 
 \noalign{\vskip 4pt}\cline{2-3}\noalign{\vskip 4pt}
\multirow{3}{*}{} & \multirow{3}{9cm}{Intervals $[z_{l_i}, z_{r_i})$ generated with mean $\mu$ and standard deviation $\sigma = 0.75\mu$} & 
$\mu = 9,\ \sigma = 6.75$ \\
&&$\mu = 6,\ \sigma = 4.50$ \\
&&$\mu = 3,\ \sigma = 2.25$ \\
\hline
\multicolumn{3}{l}{\textbf{Outcome $Y_i$}} \\
\multirow{4}{*}{} & \multirow{4}{9cm}{Gamma with mean $\mu_i = \exp(10 + \gamma z_i)$ and variance $\phi\mu_i^2$} & 
$\gamma = 0.02,\ \phi = 0.02$ \\
&&$\gamma = 0.02,\ \phi = 1$ \\
&&$\gamma = -0.05,\ \phi = 0.02$ \\
&&$\gamma = -0.05,\ \phi = 1$ \\
\noalign{\vskip 4pt}\cline{2-3}\noalign{\vskip 4pt}
\multirow{2}{*}{} & \multirow{2}{9cm}{Bernoulli with mean $p_i = \mathrm{logit}^{-1}(\gamma z_i)$ and variance $p_i(1 - p_i)$} & 
$\gamma = 0.1$ \\
&&$\gamma = -0.1$ \\
\bottomrule
\end{tabular}
\end{center}
\end{table}

The censoring process is as follows. For each simulated $z_i$, an initial gap $\xi_{i1}\sim U(0,\mu)$ and a sequence of independent and identically distributed $N(\mu,\sigma)$ gap values $\{\xi_{ir}\}_{r=2}^{R_i}$ are generated. These gaps are summed accumulatively, starting with $\tau_{0} = 0$ and setting $\tau_{r} = \sum_{s=1}^r \xi_{is}$ for $r=1,2,\dots$, until the condition $z_i< \tau_{R_i}$ is met. The observed interval is then defined as $[z_{l_i}, z_{r_i})$ with $z_{l_i}=\tau_{R_i-1}$ and $z_{r_i}=\tau_{R_i}$. With this generating mechanism the noninformative condition is guaranteed \citep{gomez2009}. The mean width $\mu$ is set to three increasing values, and the standard deviation is chosen so that the coefficient of variation is fixed at $0.75$.
In Figure~\ref{fig:simstudy:data}, the simulated observed intervals are displayed.

\begin{figure}
\begin{center}
\centerline{\includegraphics[width=\linewidth]{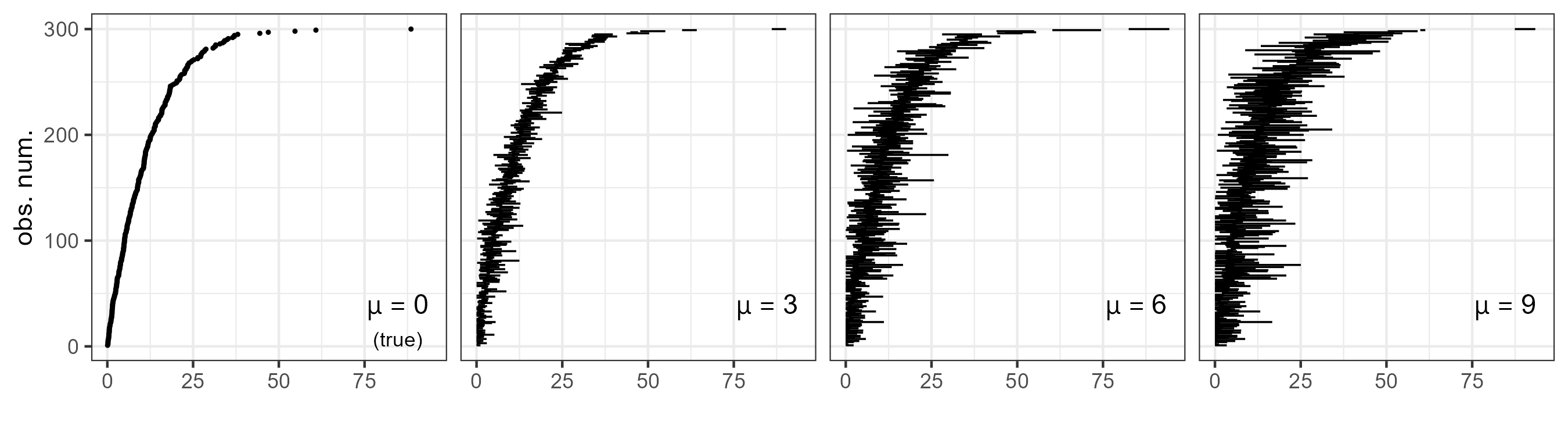}}
\end{center}
\caption{Simulated observed intervals for $n=300$ under the four censoring scenarios: the true $Z_i$ values ($\mu=0$) and the three scenarios of censoring process with increasing mean widths. Observations are ordered by the true $Z_i$ values to facilitate comparison of how censoring interval widths vary across scenarios.
\label{fig:simstudy:data}}
\end{figure}

A total of $500$ datasets were generated per scenario.
To reduce the MCSE, data were generated according to the following procedure:
\begin{enumerate}
    \item Simulate the values $\{z_i\}_{i=1}^n$ for $n=500$ (which in practice are not observed);
    \item Simulate the outcomes $\{y_i\}_{i=1}^n$ for each distribution scenario;
    \item Simulate the observed intervals $\{z_{l_i},z_{r_i}\}_{i=1}^n$ for each censoring scenario;
    \item Construct full datasets by combining the variables simulated at steps 2 and 3;
    \item Select the first $n = 100$, $300$, or $500$ observations, according to the scenario.
\end{enumerate}
In short, the same underlying $\{z_i\}_{i=1}^n$ is used across all scenarios, and the simulated variables $\{y_i, z_{l_i},z_{r_i}\}_{i=1}^n$ are reused where appropriate. This guarantees that differences in results across scenarios are attributable solely to the factors of interest (outcome distribution, censoring process, and sample size), rather than to variability in the data generation \citep{morris2019}. The procedure is repeated $500$ times, and the state of the random-number generator at the start of each repetition is stored to ensure reproducibility.

\subsection{Results}

As shown in~\appref{AppB2}, relative bias is negligible for the regression parameters $(\alpha,\gamma)$ at sample sizes $n=300$ and $500$, and it becomes noticeable at $n=100$ for some scenarios. Regarding the width of the observed intervals, relative bias increases slightly with wider intervals, though the trend is modest and does not raise major concerns about estimator robustness.
In contrast, the dispersion parameter $\phi$ exhibits systematic downward bias across all scenarios. This aligns with known behavior of the MLE for the dispersion parameter in GLMs \citep{dunn2018}.
Empirical standard errors decrease as the sample size increases, revealing that the precision of the estimator improves with sample size. The root mean squared error is nearly identical to the empirical standard error, indicating that the mean squared error is influenced primarily by estimator variability rather than bias. Figure~\ref{fig:simstudy:result} presents the relative bias and empirical standard error for varying sample sizes and mean widths of the observed intervals in the gamma regression scenario with parameters $\alpha_0 = 10$, $\gamma_0 = -0.05$, and $\phi_0 = 1$. Monte Carlo standard errors (MCSEs), computed following \cite{morris2019}, were small and are reported in~\appref{AppB23}. The formulas are given in~\appref{AppB1}.

\begin{figure}[!ht]
\begin{center}
\centerline{\includegraphics[width=\linewidth]{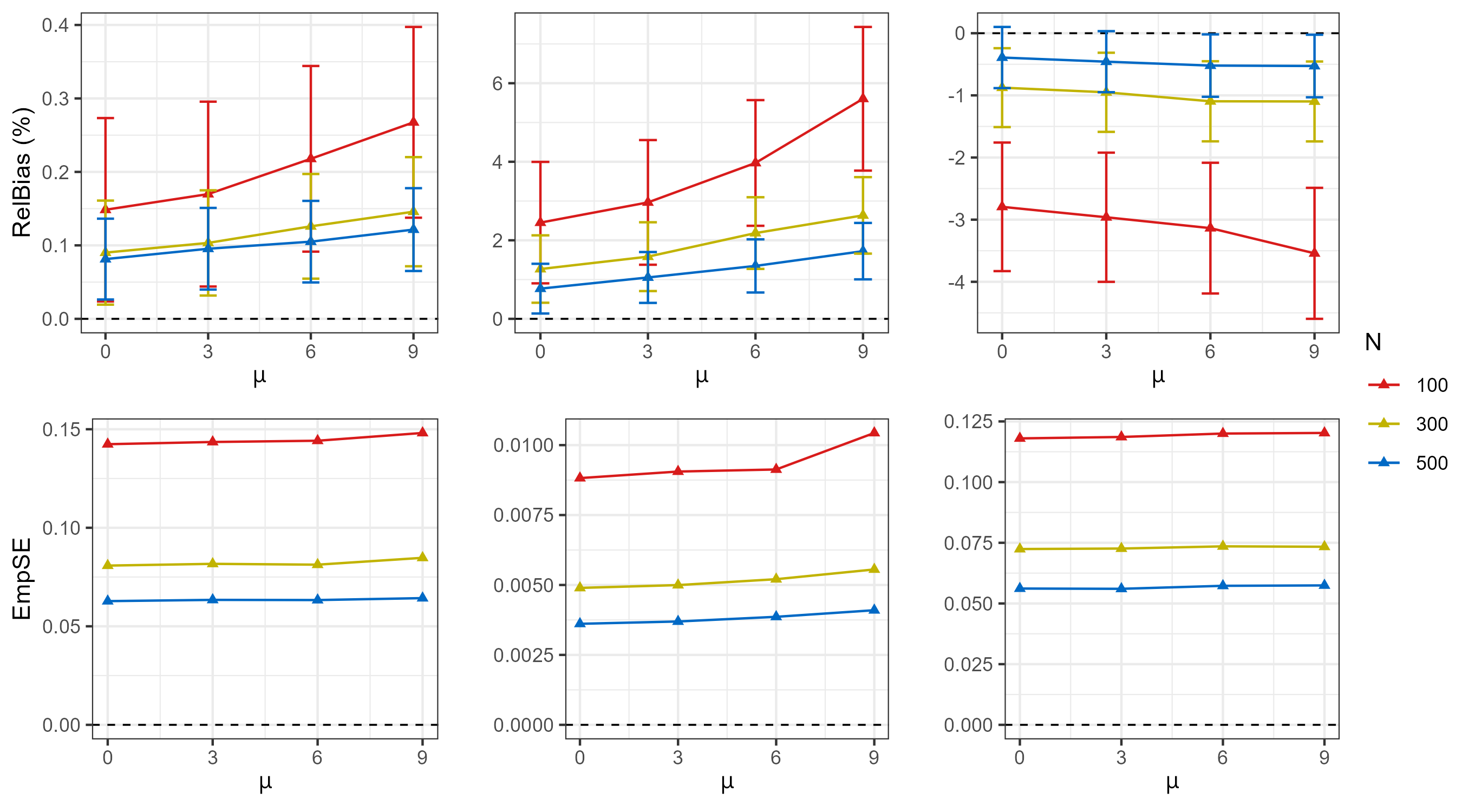}}
\end{center}
\caption{Results for the gamma regression with true parameters $\alpha=10$, $\gamma = -0.05$ and $\phi = 1$, showing relative bias (with Monte Carlo error bars) and empirical standard error. From left to right: $\alpha$, $\gamma$, and $\phi$ parameters.
\label{fig:simstudy:result}}
\end{figure}

\medskip

Coverage probability of the 95\% confidence intervals is reported in~\appref{AppB3}. For the regression parameters $(\alpha, \gamma)$, coverage is generally close to the target level of $0.95$. Slight undercoverage is observed in some scenarios with $n=100$ and wider observed intervals, but this deviation diminishes as sample size increases.
A likely explanation is the extra variability introduced by estimating the covariate distribution $W$, as exposed in \eqref{eq:linearization}. The additional term in the asymptotic law of the estimator is of order $n^{-1/2}$, so it affects finite-sample behavior but shrinks with sample size.

\appref{AppB4} reports the mean and standard deviation of computation time, grouped by sample size and mean width of the observed intervals. Instances of non-convergence were rare and negligible. As expected, the mean computation time increases with both the width of the observed intervals and the sample size. For example, for the mean width of 9, the mean computation time per model is 8 minutes for $n=100$, 1 hour for $n=300$, and 3 hours for $n=500$. This increase is attributable to the greater computational burden associated with a higher number of augmented Turnbull intervals $m$, which affects both the self-consistent equations and the subsequent likelihood maximization.

\section{Illustrative examples}
\label{s:appl}

\subsection{Study of carotenoids}
Carotenoids are a family of dietary metabolites found in fruits and vegetables, with well-documented antioxidant properties.
As explained in the introduction, the study of \cite{marhuenda2022} aimed to investigate the association between circulating plasma carotenoids and a range of anthropometric, clinical, and biochemical parameters related to cardiovascular risk .
The data corresponds to the baseline visit of a subgroup ($n=230$) of participants in the PREDIMED-Plus study~\citep{predimedplus}, a Spanish primary prevention trial of cardiovascular disease in older adults (aged 55 to 75) at high cardiovascular risk.

As illustration, we consider diabetes ($Y$) as the outcome of interest, where $Y$ is a binary variable equal to $1$ for diabetic individuals. We explore its association with total carotenoid concentration ($Z$) using a logistic regression model adjusted for sex ($X_1$) and Body Mass Index (BMI; $X_2$).
\begin{equation*}
    \log\Big( \frac{E[Y]}{1-E[Y]} \Big) = \alpha+ \beta_1 X_1 + \beta_2 X_2 +\gamma Z
\end{equation*}

Of the 230 participants, 63 (27\%) were diabetic, 124 (54\%) were men, and the mean BMI is 33 kg/m$^2$ (SD: 3.5). 
Total carotenoid concentration is interval-censored, with observed left endpoints ($z_{l_i}$) starting from $0$ and right endpoints ($z_{r_i}$) reaching up to 41.4 $\mu$mol/L. The observed intervals have a mean width of 0.44 (SD: 0.41), and eight observations (3\%) are uncensored.
Parameter estimates obtained using the GELc estimator, along with estimated standard errors, are presented in Table~\ref{t:table1}. Parameter interpretation follows as in standard logistic regression: adjusting for sex and BMI, each one-unit increase in carotenoid concentration is associated with an odds ratio of exp(–0.08) = 0.92 (95\%-CI: 0.86 to 0.98) for having diabetes.

Simulation results allow us to evaluate the impact of bias on the odds ratio. In the scenarios for logistic regression with $n=100$ and interval width of 9, relative bias for $\gamma$ reached up to 14\%. For the estimation obtained this means that $(-0.08-\gamma_0)/\gamma_0 = 0.14$, which gives $\gamma_0=-0.07$. The corresponding bias-adjusted odds ratio is then 0.93, which is very close to the estimated one. On the other hand, the confidence intervals demonstrate good coverage. We can therefore conclude that higher carotenoid levels are associated with lower odds of diabetes.

\begin{table}[!ht]
\caption{Estimates obtained by the GELc estimation approach.}
\centering
\label{t:table1}
\begin{tabular}{lccccc}
  \toprule
  $\hat\alpha$ {\small (se)} & $\hat\beta_1$ {\small (se)} & $\hat\beta_2$ {\small (se)}& $\hat\gamma$ {\small (se)}  \\
    \hline
    $-1.24$ {\small ($1.45$)}        
        & $-0.53$ {\small ($0.32$)} 
        & $0.03$ {\small ($0.04$)}
        & $-0.08$ {\small ($0.03$)}  \\
    \bottomrule
\end{tabular}
\end{table}

\subsection{AIDS Clinical Trials Group Study 359}

The data ($n=81$) is from a randomized clinical trial designed to compare six different antiretroviral treatment regimens in HIV-infected individuals who had previously failed the indinavir therapy, an inhibitor of the HIV \citep{gulick2000}. 
One research question of interest was whether delays in initiating the new treatment could influence viral load levels at baseline, and so potentially affect treatment prognosis.
Indinavir failure was defined as a rebound in viral load ($>500$ copies/mL) after an initial period of virologic suppression, with viral load monitored at regular visits. Therefore, the time between indinavir failure and study start is interval-censored.
At study start, the median viral load was 12,000 copies/mL (IQR: 5,000-35,000). Participants were 40 years old in mean (SD: 9), ranging from 24 to 64.
Observed left endpoints for time between indinavir failure and study are as early as 1 week, and right endpoints as late as 90 weeks. The observed intervals have a mean width of 12 weeks (SD: 9).

This study served as illustration in \cite{gomez2003} and \cite{langohr2014}. The authors fitted a linear model between the $\log_{10}$-transformed viral load at study start ($Y$) and the time since indinavir failure ($Z$), adjusted by age ($X$). Also, \cite{morrison2021} modelled the distribution of HIV viral load as a function of time since HIV seroconversion.
Here, for illustration purposes, we consider a Gamma GLM with a log link function:
\[ \log (E[Y]) = \alpha+\beta X+\gamma Z, \]
where $Z$ is interval-censored and $\gamma$ is the parameter of interest. 
Equivalently, a gamma distribution is assumed for viral load ($Y$), that is, the conditional density of $Y$ is given by
\begin{equation*}
    f_{Y| X, Z}(y ; \mu_i, \phi) = \exp\bigg\{ \frac{y (-1/\mu_i) - \log\mu_i}{\phi} + 
  \log(y/\phi)/\phi - \log(y\Gamma(1/\phi)) \bigg\},
\end{equation*}
with covariate-depending mean $\mu_i = \exp(\alpha+\beta x_i+\gamma Z_i)$ and variance $\sigma_i^2 = \phi \mu_i^2$.

In Table~\ref{t:table2} we provide the parameter estimates obtained using the GELc estimator, along with the estimated standard errors. Parameter interpretation is the same as in a standard gamma regression model: adjusting for age, RNA copies/mL increase by exp(3$\times$0.02) - 1 = 5\% (95\%-CI: 1 to 9\%) with every 3-week increase in time between indinavir failure and study start. 
Comparing with the simulation scenario of Gamma regression with $\gamma=0.02$, $\phi=1$, $N=100$, and $\mu=9$, we deduce the relative bias for $\gamma$ is approximately $3\%$; that is $\gamma_0\approx 0.02/1.03 = 0.02$.

\begin{table}[!ht]
\caption{Estimates obtained by the GELc estimation approach.}
\centering
\label{t:table2}
\begin{tabular}{lccccc}
  \toprule
  $\hat\alpha$ {\small (se)} & $\hat\beta$ {\small (se)}& $\hat\gamma$ {\small (se)}  & $\hat\phi$ \\
    \hline
    $10.2$ {\small ($0.64$)}        
        & $-0.01$ {\small ($0.014$)} 
        & $0.02$ {\small ($0.007$)} 
       &  $1.165$ \\
    \bottomrule
\end{tabular}
\end{table}

Although reproducing previous analyses is not the focus here, it is interesting to compare our results with those in in \cite{gomez2003} and \cite{langohr2014}, where a linear model was used for $\log_{10}$-transformed viral load. To make our estimates comparable, we can apply the identity $\log(E[Y]) = \log_{10}(E[Y]) \cdot \log 10$, which is equivalent to dividing our regression coefficients by $\log 10$. After this adjustment, the resulting estimates are similar in magnitude. Regarding error variability, the linear model yielded $\widehat {\rm Var}(\log_{10}(Y))=0.2732$, whereas the Gamma regression model implies $\widehat {\rm Var}(Y) = \hat\phi \hat{\mu}^2$ with $\hat\phi=1.165$. Using the delta method to translate the Gamma result to the $\log_{10}$ scale,
\[
\widehat {\rm Var}_{\rm G}(\log_{10}(Y)) \approx \Big(\frac{1}{y \log 10}\Big)^2\, \widehat {\rm Var}(Y) \approx \hat\phi/\log(10)^2 = 0.2197.
\]
Since $0.2197<0.2732$, this suggests that a Gamma GLM for RNA may fit the ACTG359 data better than an LM for $\log_{10}$RNA. Nevertheless, any model preference and distributional assumption should be validated with appropriate diagnostic tools.

\section{Discussion}
\label{s:discus}

In this paper, we proposed the GELc (Gomez-Espinal-Lagakos for continuous support) estimator to fit GLMs when a covariate is interval-censored. 
We extended the estimator introduced by \cite{gomez2003} to accommodate continuous covariates, and we showed that it is consistent and asymptotically normal under standard regularity conditions.
Two methodological considerations originally observed by \cite{gomez2003} are particularly relevant in the context of censored covariates and were retained in our extension. 
First, the estimation of the censored covariate’s distribution must account for its association with the response variable.
Ignoring this dependence would compromise the consistency 
of the regression parameter $\gamma$, which is the central goal of the analysis. 
Second, assuming a parametric form for the covariate's distribution may result in model misspecification, which could introduce substantial bias in the regression parameters of the GLM.

The \textit{augmented Turnbull’s estimator} we propose is the nonparametric maximum likelihood estimator for the distribution of the interval-censored covariate. It is defined over a partition of the covariate’s support that generalizes Turnbull’s construction by accounting for all observed endpoints, ensuring that each subject’s interval contributes information precisely where it is observed. 
Right-censored observations produce an interval of the form $\lfloor k, \infty)$ in the constructed partition, over which the estimator is currently assumed to increase uniformly. This reflects a standard limitation of nonparametric estimators, including the Kaplan–Meier estimator for right-censored data, which provide no information about the distribution beyond the largest finite right endpoint.
Recent work by \cite{lee2025} and \cite{vazquez2024} have explored the estimation of regression parameters for right-censored covariates using inverse probability weighting methods, offering promising directions for future developments.

In addition, the current formulation assumes that the censored covariate $Z$ is independent of the additional covariates $\bm X$. This simplifying assumption was necessary to develop the GELc estimator and to establish its asymptotic properties. It could be relaxed, however, by specifying a semiparametric accelerated failure time model (AFTM) for $Z|\bm X$. One of the particularities of AFTMs is that the model expression is equivalent to $Z=e^{\bm{x'\vartheta}} E$, with $E=\exp(\epsilon)$ now independent of $\bm X$. With this reparametrization of the estimation problem, the augmented Turnbull estimator is used on the distribution of $E$ while $\vartheta$ is estimated jointly with the regression parameters.
Because $E=e^{-\bm{x'\vartheta}} Z$ is a one-to-one transformation, the model is identifiable and the asymptotic results established in this paper are applicable.

Besides GLMs, the GELc estimator can be extended to accommodate interval-censored covariates in other regression settings,
provided the conditional distribution of the response is fully specified. 
This includes not only parametric models with fully observed outcomes, but also the accelerated failure time model (AFTM) and other parametric survival models for right- and interval-censored responses. These extensions are particularly relevant in survival analysis, where time-to-event data may arise in the covariate as well as in the response.

Despite its flexibility, the proposed methodology has some practical limitations. 
The augmented Turnbull estimator depends on the regression parameters through the conditional density function. This increases the computational burden and slows down the estimation process. It would therefore be valuable to develop an EMICM-type algorithm, building on the approach of \cite{anderson2017b} but adapted to the augmented Turnbull estimator, as this could offer a more efficient alternative to solving the self-consistent equations directly.

Besides this, there are several interesting lines of work to develop. One is the extension of the GELc estimator to the Cox proportional hazards model and other semiparametric survival models. 
In contrast to parametric survival models, estimation is based on the partial likelihood and the response distribution is left unspecified.
Another relevant direction would be to enable the method to simultaneously handle multiple censored covariates.
Finally, because the censoring of covariates also affects the construction of residuals, it is essential to develop diagnostic tools to appropriately verify the model assumptions.

\section*{Supplementary material}

The GELc estimator is implemented in R and available in GitHub at 
\href{https://github.com/atoloba/ICenCov}{\texttt{github.com/atoloba/ICenCov}}; the repository also contains reproducible materials for the illustrations presented in Section~\ref{s:appl}. The R code for the simulation study can be found at \href{https://github.com/atoloba/Simulations_icglm_2025}{\texttt{github.com/atoloba/Simulations\_icglm\_2025}}.

\section*{Funding}

This work was supported by the Ministerio de Ciencia, Innovación y Universidades (Spain) under grant PID2023-148033OB-C21, and by the Departament d’Empresa i Coneixement de la Generalitat de Catalunya under grant 2021 SGR 01421 (GRBIO). A. Toloba was additionally supported by the Departament de Recerca i Universitats de la Generalitat de Catalunya through the 2022 FISDU 00277 fellowship.

\printbibliography[segment=0]
\clearpage

\appendix
\newrefsegment

\addappendix{Appendix A}

\subsection{Likelihood function and estimating equations}
\label{AppA1}

This section provides the derivation of the simplified likelihood expression given in Equation~\eqref{eq:lik}, as well as the corresponding estimating equations in~\eqref{eq:score}.

As in introduced in Section~\ref{s:model}, the full likelihood function is given by
\begin{equation*}
    L_{\rm full}(\bm\theta|\{ y_i,\bm x_i,z_{l_i},z_{r_i} \}_{i=1}^n) 
    = \prod_{i=1}^n Pr\big(Y\in dy_i, Z_L\in dz_{l_i}, Z_R\in dz_{r_i}, 
    \bm X\in d\bm{x_i}, Z_i\in\lfloor z_{l_i},z_{r_i} \rfloor\big)
\end{equation*}
which can be rewritten as
\begin{align*}
    &L_{\rm full}(\bm\theta|\{ y_i,\bm x_i,z_{l_i},z_{r_i} \}_{i=1}^n) 
    =  \prod_{i=1}^n \int_{z_{l_i}}^{z_{r_i}} 
    Pr\big(Y\in dy_i, Z_L\in dz_{l_i}, Z_R\in dz_{r_i},
    \bm X\in d\bm{x_i}, Z_i\in dz\big) dz \\
    &= \prod_{i=1}^n \Big\{ \int_{z_{l_i}}^{z_{r_i}}
    Pr\big(Y\in dy_i | Z_L=z_{l_i}, Z_R=z_{r_i},\bm X = \bm{x_i}, Z_i=z\big) \cdot \\
    &\qquad Pr(Z_L\in dz_{l_i}, Z_R\in dz_{r_i} | Z_i=z, \bm X=\bm{x_i})\cdot
    Pr(Z_i\in dz | \bm X=\bm{x_i})\cdot
    Pr(\bm X\in d\bm{x_i})\, dz \Big\}.
\end{align*}
Under~\autoref{A1}, the first term of the integrand corresponds to the conditional density $f_{Y|\bm{X},Z}(y_i|\bm x_i,z;\bm\theta)$.
By the non-informative censoring~\autoref{A2}, the second term is constant over the interval $\lfloor z_{l_i}, z_{r_i} \rfloor$. Consequently, $Pr(Z_L\in dz_{l_i}, Z_R\in dz_{r_i} | Z_i=z, \bm X=\bm{x_i})\equiv K(z_{l_i}, z_{r_i}, \bm x_i)$ doesn't depend on $z$ and can be factorized out of the integral.
That is,
\begin{align*}
    L_{\rm full}(\bm\theta|\{ y_i,\bm x_i,z_{l_i},z_{r_i} \}_{i=1}^n) 
    =& \Big( \prod_{i=1}^n \int_{z_{l_i}}^{z_{r_i}} f_{Y|\bm{X},Z}(y_i|\bm x_i,z;\bm\theta)\, Pr(Z\in dz \mid \bm X=\bm{x}_i) \Big)  \\
    &\cdot\Big( \prod_{i=1}^n K(z_{l_i}, z_{r_i}, \bm x_i)\, Pr(\bm X\in d\bm{x}_i) \Big).
\end{align*}
Assuming the parameter spaces of $(Y,Z)|\bm X$ and $(\bm X, Z_L, Z_R)$ are disjoint, the second term can be treated as a constant with respect to $\bm\theta$ and thus omitted from the likelihood for estimation purposes. Under~\autoref{A3} and since $Z_i \sim_{\text{iid}} Z$, it holds that $Pr(Z \in dz | \bm X = \bm{x}_i) = Pr(Z \in dz) = dW(z)$. This yields the simplified likelihood function:
\[
L\big(\bm\theta, W |\{ y_i,\bm x_i,z_{l_i},z_{r_i} \}_{i=1}^n \big) = \prod_{i=1}^n \int_{z_{l_i}}^{z_{r_i}} f_{Y|\bm{X},Z}(y_i|\bm x_i,z;\bm\theta)\, dW(z).
\]

\bigskip
Consider the loglikelihood function conditional on $W$, that is,
\[
\ell\big(\bm\theta; W \mid \{ y_i,\bm x_i,z_{l_i},z_{r_i} \}_{i=1}^n\big) = \sum_{i=1}^n \log\left( \int_{z_{l_i}}^{z_{r_i}} f_{Y|\bm{X},Z}(y_i|\bm x_i,z;\bm\theta)\, dW(z) \right).
\]
The score vector is given by
\[
\frac{d}{d\bm\theta} \ell(\bm\theta; W) = 
\sum_{i=1}^n \left( \int_{z_{l_i}}^{z_{r_i}} f_{Y|\bm{X},Z}(y_i|\bm x_i,z;\bm\theta)\, dW(z) \right)^{-1} 
\int_{z_{l_i}}^{z_{r_i}} \frac{d}{d\bm\theta} f_{Y|\bm{X},Z}(y_i|\bm x_i,z;\bm\theta)\, dW(z),
\]
where
\[
\frac{d}{d\bm\theta} f_{Y|\bm{X},Z}(y|\bm x,z;\bm\theta) = 
\underbrace{\frac{d}{d\bm\theta}\log f_{Y|\bm{X},Z}(y|\bm x,z;\bm\theta)}_{\displaystyle\mathcal{S}(\bm\theta; y,\bm x,z)}
\cdot f_{Y|\bm{X},Z}(y|\bm x,z;\bm\theta),
\]
and $\mathcal{S}(\bm\theta; y,\bm x,z)$ is the score vector for given $z$. Therefore, the score equations for $\bm\theta$ are 
\begin{equation} \label{eq:sc}
    \sum_{i=1}^n S\big(\bm\theta, W; y_i,\bm x_i,z_{l_i},z_{r_i}\big) = 0
\end{equation}
with 
\begin{align*}
    &S\big(\bm\theta, W; y_i,\bm x_i,z_{l_i},z_{r_i}\big) = 
    \frac{d}{d\bm\theta} \ell(\bm\theta; W)= \\[1ex]
    &\int_{z_{l_i}}^{z_{r_i}}
        \mathcal{S}(\bm\theta; y_i,\bm x_i,z)
        \frac{
        f_{Y|\bm{X},Z}(y_i|\bm x_i,z;\bm\theta) 
        }{
        \int_{z_{l_i}}^{z_{r_i}} 
        f_{Y|\bm{X},Z}(y_i|\bm x_i,u;\bm\theta) \, dW(u)
        }\, dW(z).
\end{align*}

\medskip

The estimating equations~\eqref{eq:score} result from multiplying \eqref{eq:sc} by $1/n$.

\subsection{Implementation of GELc}
\label{AppA2}


The GELc estimation algorithm alternates between updating the distribution weights $\bm w$ and the parameter vector $\bm\theta$ until joint convergence is achieved.
We use a combined convergence criterion. The
condition for parameter convergence $\delta_\theta+\delta_w < \epsilon_p$ indicates that the relative changes in both parameter vectors are sufficiently small. In parallel, the loglikelihood stabilization condition $\delta_\ell < \epsilon_\ell$ monitors the change in the loglikelihood value, and captures situations in which the loglikelihood reaches a flat region even if the first criterion has not yet been met.

The $\bm w$-estimation stage involves element-wise matrix operations, with a per-iteration computational cost of $O(nm)$, where $n$ is the sample size and $m$ the number of augmented Turnbull intervals. 
For the $\bm{\theta}$-estimation stage, we maximize the loglikelihood using the Limited-memory Broyden–Fletcher–Goldfarb–Shanno algorithm (L-BFGS-B) implemented in the \texttt{optim} function, which is that of \cite{byrd1995}. This is accessed via the \texttt{mle2} function from the \texttt{bbmle} package \citep{bbmle}, which provides a flexible interface for likelihood-based inference in R. The gradient of the loglikelihood function $\ell$ is provided analytically for the regression parameters $\alpha$, $\bm{\beta}$, and $\gamma$, which accelerates convergence. Evaluation of $\ell$ requires approximating at most $n \times m$ integrals, which is done using the \texttt{integrate} function; if numerical instability occurs, we use \texttt{cubintegrate} from the \texttt{cubature} package instead \citep{cubature}. The algorithm is implemented in R and available at \href{https://github.com/atoloba/ICenCov}{\texttt{github.com/atoloba/ICenCov}}.


\subsubsection*{Analytical gradient of $\ell(\bm\theta|\bm w)$}

Denote by $\bm\beta=(\beta_0,\beta_1,\dots,\beta_{p+1})'$ the regression parameters corresponding to intercept, the censored-covariate's effect, and the coefficients for the $p$-dimensional fully observed covariate vector $\bm X$, respectively. 
Because the conditional density function $f_{Y|\bm X,Z}$ belongs to the exponential dispersion family, the loglikelihood given $W$ can be written as
\begin{align*}
    \ell(\bm\beta; \phi | W) &= \sum_{i=1}^n \log \int_{z_{l_i}}^{z_{r_i}} f_{Y|\bm X,Z}(y_i |\bm x_i, s;\bm\theta)\, dW(s) \\
    &= \sum_{i=1}^n \log \int_{z_{l_i}}^{z_{r_i}} \exp\{ \underbrace{\{y_i\psi_i(s) - a(\psi_i(s))\}/\phi + c(y_i,\phi)}_{\displaystyle \tilde l_i(\bm\beta, s;\phi)} \}  \, dW(s),
\end{align*}
where $\psi_i(s) = \psi(g^{-1}(\beta_0+\beta_1 s + \beta_2 x_{i1}+\dots+ \beta_{p+1} x_{ip}))$.
Since $\frac{d}{d\bm\beta}\tilde l_i(\bm\beta, z;\phi)$ is the score function for given $z$,
\begin{align*}
    \frac{\partial}{\partial\beta_j} l(\bm\beta; \phi | W) &= \sum_{i=1}^n \frac{1}{\int_{z_{l_i}}^{z_{r_i}} f(y_i |\bm x_i,s;\bm\theta)\, dW(s)} \int_{z_{l_i}}^{z_{r_i}}  \frac{\partial}{\partial\beta_j} f(y_i |\bm x_i,s;\bm\theta) \, dW(s) \\
    &= \sum_{i=1}^n \int_{z_{l_i}}^{z_{r_i}} \frac{\partial}{\partial\beta_j} \tilde l_i(\bm\beta, s;\phi)\,  \frac{f(y_i |\bm x_i,s;\bm\theta) \, dW(s)}{\int_{z_{l_i}}^{z_{r_i}} f(y_i |\bm x_i,s;\bm\theta)\, dW(s)} ds \\
    &= \sum_{i=1}^n \frac{\int_{z_{l_i}}^{z_{r_i}} \frac{y_i-\mu_i}{\phi V(\mu_i)}\frac{d\mu}{d\eta} \tilde x_{ij} f(y_i |\bm x_i,s;\bm\theta) \, dW(s)}{\int_{z_{l_i}}^{z_{r_i}} f(y_i |\bm x_i,s;\bm\theta)\, dW(s)},
\end{align*}
where $\bm{\tilde x}_{i}=(1,s,\bm x_{i})$. In terms of $\bm w$:
\begin{equation*}
    \frac{\partial}{\partial\beta_j} l(\bm\beta; \phi | \bm w) =\sum_{i=1}^n \frac{\displaystyle    
    \sum_{J=1}^m \kappa^i_J \frac{w_J}{|I_J|} \int_{I_J} \frac{y_i-\mu_i}{\phi V(\mu_i)}\frac{d\mu}{d\eta} \tilde x_{ij} f(y_i |\bm{x}_i, s;\bm\theta) ds
    }{\displaystyle
    \sum_{k=1}^m \kappa^i_k \frac{w_k}{|I_k|} \int_{I_k} f(y_i |\bm{x}_i, s;\bm\theta)ds}.
\end{equation*}

The gradient function is provided to \texttt{optim} via the \texttt{mle2} function.

\subsection{Proof of Theorem~\ref{thm:mle}}
\label{AppA3}

Recall the equations in \eqref{eq:mle}:
\begin{equation*}
    \hat w_j = \frac{1}{n} \sum_{i=1}^n \kappa^i_j \, 
    \frac{\int_{I_j} f_{Y|\bm{X},Z}(y_i |\bm x_i, z;\bm\theta) d\widehat W_n(z;\bm\theta) }{\sum_{k=1}^m \kappa^i_k \int_{I_k} f_{Y|\bm{X},Z}(y_i |\bm x_i, u;\bm\theta) d\widehat W_n(u;\bm\theta)}, \qquad j=1,\dots,m
\end{equation*}
where the dependence of $\hat w_j$ to $\bm\theta$ is omitted for clarity.

\begin{theorem1}
    For fixed $\bm\theta$, any distribution function $H$ satisfying
    \begin{equation*}
        H(b_j)-H(a_j) = \hat w_j, \quad \text{for all } j=1,\dots,m
    \end{equation*}
    where $I_j=\lfloor a_j,b_j \rfloor$ are the augmented Turnbull intervals and $\hat w_j$ solves~\eqref{eq:mle} is a nonparametric maximum likelihood estimator (NPMLE) of $W$ for the likelihood in \eqref{eq:lik}. That is, the NPMLE is completely characterized by the $\bm\theta$-dependent self-consistent equations~\eqref{eq:mle}.
\end{theorem1}

\begin{proof}
    
Assume $\bm\theta$ known and fixed, consider the loglikelihood function of $W$
\begin{equation*}
    \ell\big(W(\cdot); \bm\theta\big| \{ y_i,\bm x_i,z_{l_i},z_{r_i} \}_{i=1}^n\big) = \sum_{i=1}^n \log\Big(\int_{z_{l_i}}^{z_{r_i}} f_{Y|\bm{X},Z}(y_i|\bm x_i,z;\bm\theta)\,dW(z) \Big),
\end{equation*}
and denote the ith likelihood contribution as:
\begin{equation*}
    L_i\big(W(\cdot); \bm\theta\big) = \int_{z_{l_i}}^{z_{r_i}} f_{Y|\bm{X},Z}(y_i|\bm x_i,z;\bm\theta)\,dW(z).
\end{equation*}
For a fixed $z_0\in\Omega$, define $A_{z_0}(s):=Pr(Z\leq s | Z\leq z_0) - Pr(Z\leq s)$. It follows that 
\[
A_{z_0}(s) = \left\{\begin{array}{ll}
    0 & {\rm if}\, W(z_0)=0 \\
    \frac{W(s \wedge z_0)}{W(z_0)} - W(s) & {\rm if}\, W(z_0)>0
\end{array}\right.
\]
where $a\wedge b = \min\{a,b\}$, because
\[ A_{z_0}(s) = \frac{Pr(Z\leq s, Z\leq z_0)}{Pr(Z\leq z_0)} - Pr(Z\leq s) = 
 \frac{Pr(Z\leq \min\{s,z_0\})}{Pr(Z\leq z_0)} - Pr(Z\leq s).\]

\medskip

We aim to differentiate the functional $\ell\big(W(\cdot); \bm\theta\big) = \sum_{i=1}^n \log L_i\big(W(\cdot); \bm\theta\big)$ at $W(s)=Pr(Z\leq s)$ in the direction $A_{z_0}(s)=Pr(Z\leq s | Z\leq z_0) - Pr(Z\leq s)$.  
The MLE will be that function $\widehat W_n$ that makes the differential equal to zero for all $z_0\in\Omega$, provided that the second differential is negative. The differential of $\ell\big(W(\cdot); \bm\theta\big)$ in the direction $A_{z_0}(s)$ is
\begin{equation}\label{eq:lim1}
    \lim_{h\to 0} \frac{1}{h} \big( \ell(W+hA_{z_0};\bm\theta) - \ell(W;\bm\theta\textbf{})\big).
\end{equation}

The following lemma and corollary are stated beforehand, as the development of \eqref{eq:lim1} relies on them.
For notational convenience, we write $L_i(W)$ instead of $L_i\big(W(\cdot); \bm\theta\big)$.

\begin{lemmaA} \label{lA1}
    Consider $W_1, W_2$ two possible distribution functions for $Z$, and let $a,b$ be two arbitrary constants. Then, $L_i(aW_1+bW_2) = a L_i(W_1) + b L_i(W_2)$.
\end{lemmaA}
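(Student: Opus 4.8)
The plan is to reduce the claim to the linearity of the Lebesgue--Stieltjes integral in its integrator, since the integrand $f_{Y|\bm X,Z}(y_i|\bm x_i,z;\bm\theta)$ is held fixed and does not depend on the distribution function $W$. First I would recall that each distribution function $W_k$, $k=1,2$, induces a (Lebesgue--Stieltjes) measure $\mu_k$ on $\Omega$ determined by $\mu_k((s,t]) = W_k(t)-W_k(s)$, and that $L_i(W_k)$ is, by definition, the integral of $f_{Y|\bm X,Z}(y_i|\bm x_i,\cdot\,;\bm\theta)$ against $\mu_k$ over the interval $\lfloor z_{l_i}, z_{r_i}\rfloor$.

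The key observation is that the function $aW_1+bW_2$ induces the signed measure $a\mu_1+b\mu_2$, because $(aW_1+bW_2)(t)-(aW_1+bW_2)(s) = a\,\mu_1((s,t]) + b\,\mu_2((s,t])$ on every interval; equivalently, $d(aW_1+bW_2) = a\,dW_1 + b\,dW_2$ as integrators. Substituting this into the definition of $L_i$ and invoking linearity of the integral over the common domain $\lfloor z_{l_i}, z_{r_i}\rfloor$ then yields
\[
    L_i(aW_1+bW_2) = \int_{z_{l_i}}^{z_{r_i}} f_{Y|\bm X,Z}(y_i|\bm x_i,z;\bm\theta)\, d(aW_1+bW_2)(z)
    = a\,L_i(W_1) + b\,L_i(W_2),
\]
which is precisely the asserted identity.

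I do not expect any genuine obstacle here; the only point requiring care is bookkeeping. Since $a$ and $b$ are arbitrary reals, $aW_1+bW_2$ need not itself be a bona fide distribution function (it may fail monotonicity or the normalization at the endpoints of $\Omega$), so I would emphasize that $L_i(\cdot)$ is to be read as integration against the induced signed measure, for which the above linearity remains valid regardless. Integrability is immediate: for fixed $\bm\theta$ the conditional density $f_{Y|\bm X,Z}(y_i|\bm x_i,\cdot\,;\bm\theta)$ is bounded on the finite interval $\lfloor z_{l_i}, z_{r_i}\rfloor$, so both $L_i(W_1)$ and $L_i(W_2)$ are finite and the decomposition raises no convergence issues. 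This linearity is exactly what will later let me expand $L_i(W+hA_{z_0})$ in the directional-derivative computation \eqref{eq:lim1}.
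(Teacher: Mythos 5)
Your proposal is correct and follows essentially the same route as the paper's proof, which likewise expands $d(aW_1+bW_2)(s) = a\,dW_1(s) + b\,dW_2(s)$ inside the integral and invokes linearity of the Lebesgue--Stieltjes integral in the integrator. Your added remarks on signed measures and integrability are careful bookkeeping the paper leaves implicit, but the core argument is identical.
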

\begin{proof} 
\[
    \begin{array}{rl}
        L_i(aW_1+bW_2) =&  \int_{z_{l_i}}^{z_{r_i}} f_{Y|\bm X, Z}(y_i |\bm x_i, s;\bm\theta) \, d(aW_1+bW_2)(s)\\
        =&\int_{z_{l_i}}^{z_{r_i}} f_{Y|\bm X, Z}(y_i |\bm x_i, s;\bm\theta) \, \big( a\cdot dW_1(s) + b\cdot dW_2(s) \big) 
        = a L_i(W_1) + b L_i(W_2)
    \end{array}
\]
\end{proof}
\begin{corollary} \label{cA1}
    Define $W_{z_0}(s):=W(s \wedge z_0)$, a function of $s$ that is constant for $s>z_0$. It follows from the lemma that $L_i(A_{z_0}) = \frac{1}{W(z_0)} L_i(W_{z_0}) - L_i(W)$ for $W(z_0)>0$.
\end{corollary}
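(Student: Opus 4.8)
The plan is to treat this corollary as an immediate algebraic consequence of the linearity recorded in \autoref{lA1}, with the only care needed being that one of the two arguments fed to $L_i$ is a sub-distribution function rather than a genuine distribution function.

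First I would rewrite the direction $A_{z_0}$ in terms of the two functions appearing on the right-hand side of the claimed identity. From the explicit form of $A_{z_0}$ derived just before \autoref{lA1}, valid whenever $W(z_0)>0$, together with the definition $W_{z_0}(s)=W(s\wedge z_0)$, one obtains the pointwise identity
\[
A_{z_0}(s) = \frac{W(s\wedge z_0)}{W(z_0)} - W(s) = \frac{1}{W(z_0)}\,W_{z_0}(s) - W(s).
\]
Thus $A_{z_0}$ is exactly the linear combination $a\,W_{z_0} + b\,W$ with $a = 1/W(z_0)$ and $b = -1$.

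Next I would invoke \autoref{lA1} with these coefficients and with $W_1 = W_{z_0}$, $W_2 = W$, which gives
\[
L_i(A_{z_0}) = L_i\!\Big(\tfrac{1}{W(z_0)}W_{z_0} - W\Big) = \frac{1}{W(z_0)} L_i(W_{z_0}) - L_i(W),
\]
establishing the corollary.

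The only subtlety worth flagging --- and the nearest thing to an obstacle --- is that $W_{z_0}$ is not a bona fide distribution function: it is constant for $s>z_0$ and in general does not approach $1$, so it is a sub-distribution function. \autoref{lA1} is phrased for distribution functions, but its proof uses nothing beyond the linearity of the Lebesgue--Stieltjes integral $\int_{z_{l_i}}^{z_{r_i}} f_{Y|\bm X,Z}(y_i\mid\bm x_i,s;\bm\theta)\,d(\cdot)(s)$ in its integrator. Since that linearity holds for arbitrary functions of bounded variation --- equivalently, for the signed measures they induce --- the identity extends verbatim to $W_{z_0}$, and no further argument is required.
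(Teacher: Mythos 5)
Your proposal is correct and matches the paper's (implicit) argument exactly: write $A_{z_0} = \tfrac{1}{W(z_0)}W_{z_0} - W$ using the explicit form of $A_{z_0}$ for $W(z_0)>0$, then apply the linearity in \autoref{lA1} with $a=1/W(z_0)$, $b=-1$. Your added remark that $W_{z_0}$ (and indeed $A_{z_0}$) is not a genuine distribution function, but that the lemma's proof only uses linearity of the Stieltjes integral in its integrator, is a sound clarification of a point the paper leaves unstated.
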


\bigskip

Therefore, \eqref{eq:lim1} develops as
\begin{align}
    &\lim_{h\to 0} \frac{1}{h} \big( \ell(W+hA_{z_0};\bm\theta) - \ell(W;\bm\theta\textbf{})\big) = \sum_{i=1}^n \lim_{h\to 0} \frac{1}{h} \log \frac{L_i(W+hA_{z_0})}{L_i(W)} \nonumber\\
    & \underset{\it lemma\,\ref{lA1}}{=}  \sum_{i=1}^n \lim_{h\to 0} \frac{1}{h} \log\frac{L_i(W) + hL_i(A_{z_0})}{L_i(W)}\nonumber\\
    & \underset{*}{=} \sum_{i=1}^n \log \lim_{1/h\to \infty}  \bigg( 1+\frac{1}{1/h}\frac{L_i(A_{z_0})}{L_i(W)} \bigg)^{1/h} = \sum_{i=1}^n \frac{L_i(A_{z_0})}{L_i(W)} \nonumber\\
    &\underset{\it coro.\,\ref{cA1}}{=} \sum_{i=1}^n \bigg( \frac{1}{W(z_0)} \frac{L_i(W_{z_0})}{L_i(W)} - 1 \bigg) = \bigg( \frac{1}{W(z_0)} \sum_{i=1}^n\frac{L_i(W_{z_0})}{L_i(W)} \bigg) - n, \label{eq:funcdif}
\end{align}
where in * we have used the fact that the logarithm can be interchanged with the limit due to its continuity and that $e^k = \lim_{n\to\infty} (1+\frac{k}{n})^{n}$ by definition.

\medskip

The MLE is the function $\widehat W_n$ that makes \eqref{eq:funcdif} equal to zero for all $z_0\in\Omega$. That is, it is the solution of
\begin{equation} \label{eq:WexprL}
     W(z) = \frac{1}{n} \sum_{i=1}^n \frac{L_i(W_{z})}{L_i(W)} \quad\forall z\in\Omega.
\end{equation}

This last part of the proof is to obtain the equations~\eqref{eq:mle} from \eqref{eq:WexprL}. The function $W_z(s)$ develops into
\[
W_{z}(s) = W(s \wedge z) = \mathbb{1}\{s< z\} W(s) + \mathbb{1}\{s\geq z\} W(z),
\]
\[
\frac{d}{ds}W_{z}(s) = \mathbb{1}\{s< z\} dW(s) + \mathbb{1}\{s\geq z\} \cdot 0,
\]
which means that the contributions
\[
\frac{L_i(W_{z})}{L_i(W)} = \frac{\int_{z_{l_i}}^{z_{r_i}} f_{Y|\bm X, Z}(y_i |\bm x_i, s;\bm\theta)\, dW_{z}(s)}{\int_{z_{l_i}}^{z_{r_i}} f_{Y|\bm X, Z}(y_i |\bm x_i, u;\bm\theta)\, dW(u)}
\]
depend on the position of $z$ with respect to the observed interval $\lfloor z_{l_i},z_{r_i} \rfloor$. That is:

\begin{itemize}
    \item If $z< z_{l_i}$, then $dW_{z}(s) = 0$ for all $s\in\lfloor z_{l_i},z_{r_i} \rfloor$. Therefore $L_i(W_z)/L_i(W)=0$.
    \item If $z\geq z_{r_i}$, then $dW_{z}(s) = dW(s)$ for all $s\in\lfloor z_{l_i},z_{r_i} \rfloor$ and $L_i(W_z)/L_i(W)=1$.
    \item Otherwise, $z\in \lfloor z_{l_i},z_{r_i} \rfloor$ and
    \begin{equation*}
        \frac{L_i(W_{z})}{L_i(W)} = \frac{\int_{z_{l_i}}^{z} f_{Y|\bm X, Z}(y_i |\bm x_i, s;\bm\theta)\, dW(s)}{\int_{z_{l_i}}^{z_{r_i}} f_{Y|\bm X, Z}(y_i |\bm x_i, u;\bm\theta)\, dW(u)}.
    \end{equation*}
\end{itemize}

\textit{Note: If the left endpoint is open, ie $\lfloor z_{l_i},z_{r_i} \rfloor=(z_{l_i},z_{r_i} \rfloor$, then the first case becomes $z\leq z_{l_i}$. It has been omitted for clarity.}

\textit{Note: It is irrelevant if the right endpoint is open or closed. If closed, $z=z_{r_i}$ in the third case gives $\frac{L_i(W_{z})}{L_i(W)} = 1$, the same result than in the second case.}

\medskip

Wrapping up, equation \eqref{eq:WexprL} is equal to 
\begin{align} \label{eq:WexprL2}
     W(z) = \frac{1}{n} \sum_{i=1}^n \bigg\{ 
     \mathbb{1}\{z\in (-\infty, z_{l_i}\rfloor \} \cdot 0 + 
     \mathbb{1}\{z\in \lfloor z_{r_i},\infty) \} \cdot 1  \nonumber \\
     + \mathbb{1}\{z\in \lfloor z_{l_i},z_{r_i} \rfloor \}
     \frac{\int_{z_{l_i}}^{z} f_{Y|\bm X, Z}(y_i |\bm x_i, s;\bm\theta)\, dW(s)}
     {\int_{z_{l_i}}^{z_{r_i}} f_{Y|\bm X, Z}(y_i |\bm x_i, u;\bm\theta)\, dW(u)}
     \bigg\}.
\end{align}
Moving $z$ along $\Omega$, any solution $\widehat W_n$ of \eqref{eq:WexprL2} can be expressed as
\begin{equation*}
    \widehat W_n(z) = \sum_{k=1}^{j-1} \hat w_k + \frac{1}{n} \sum_{i=1}^n\kappa^i_j
    \frac{\int_{a_j}^{z} f_{Y|\bm X, Z}(y_i |\bm x_i, s;\bm\theta)\, d\widehat W_n(s)}
     {\sum_{k=1}^m\kappa^i_k \int_{a_k}^{b_k} f_{Y|\bm X, Z}(y_i |\bm x_i, u;\bm\theta)\, d\widehat W_n(u)} , \quad \text{if } z\in I_j
\end{equation*}
where $w_j = Pr(Z\in I_j=\lfloor a_j,b_j \rfloor)$, $\kappa^i_j=\mathbb{1}\{I_j\subseteq \lfloor z_{l_i},z_{r_i} \rfloor\}$, and
\begin{equation} \label{eq:jfdoiljfsa}
    \hat w_j = \frac{1}{n} \sum_{i=1}^n\kappa^i_j
    \frac{\int_{a_j}^{b_j} f_{Y|\bm X, Z}(y_i |\bm x_i, s;\bm\theta)\, d\widehat W_n(s)}
     {\sum_{k=1}^m\kappa^i_k \int_{a_k}^{b_k} f_{Y|\bm X, Z}(y_i |\bm x_i, u;\bm\theta)\, d\widehat W_n(u)}.
\end{equation}
Observe that the specific shape of $d\widehat W_n(s)$ within each interval $I_j$ plays no role in the likelihood given the observed data and fixed $\bm\theta$. Indeed, different choices of shape will generally lead to different values of the masses $\hat w_1,\dots,\hat w_m$, but the loglikelihood value $\ell(\bm{\hat w};\bm\theta | \{ y_i,\bm x_i,z_{l_i},z_{r_i} \}_{i=1}^n)$ will remain the same. Consequently, the system of equations in \eqref{eq:jfdoiljfsa} define an entire class of functions that qualify as NPMLEs. In other words, the NPMLE is completely characterized by the $\bm\theta$-dependent self-consistent equations given by \eqref{eq:jfdoiljfsa}.

\end{proof}

\subsection{Proof of Theorem~\ref{thm:TBconsistent}}
\label{AppA4}

\begin{theorem2} 
Assume that the random censoring interval $\lfloor Z_L,Z_R \rfloor$ can take only finitely many observed intervals. If for all $z \in\Omega$ the probability of observing an exact value $z$ converges to one as $n\to\infty$, then the augmented Turnbull estimator $\widehat W_n$ converges uniformly to the true distribution $W_0$, that is,
\[
\sup_{\bm\theta\in\Theta} \sup_{z\in\Omega} |\widehat W_n(z;\bm\theta)-W_0(z)| \xrightarrow[n\to\infty]{p}0
\]
for $\Theta$ compact parameter space.
\end{theorem2}

\begin{proof}

We first prove the consistency of $\widehat W_n$ assuming that the true distribution has finite support, denoted by $W_0^*$, and later extend the results for arbitrary $W_0$. We use a rationale similar to \cite{yu2000}.

Let $\Omega^* = \{s_1,\dots, s_{m}\}$ denote the finite support of $W_0^*$. Since the probability of observing each value $s_k$ converges to one as the sample size increases, it follows from the strong law of large numbers that 
\[
Pr(Z_i\neq s_k, \forall i=1,\dots,n)\xrightarrow[n\to\infty]{}0.
\]
Therefore, in the limit, the augmented Turnbull intervals are precisely $I_j=[s_j,s_j]$, for $j=1,\dots, m$, covering the entire support $\Omega^*$.

From \eqref{eq:WexprL2}, it follows that $\widehat W_n$ satisfies the self-consistent equation
\begin{align} \label{eq:integraleq}
    H_n(z) = \int \frac{
\int_{z_{l}}^{z} f_{Y|\bm X, Z}(y |\bm x, s;\bm\theta)\, dH_n(s)
} {
\int_{z_{l}}^{z_{r}} f_{Y|\bm X, Z}(y |\bm x, u;\bm\theta)\, dH_n(u) 
} \mathbb{1}\{z\in\lfloor z_l,z_r \rfloor\} \, dP_n(y,\bm x, z_l,z_r)
\nonumber\\
+ \frac{1}{n}\sum_{i=1}^n  \mathbb{1}\{z\in\lfloor z_r,\infty)\},
\end{align}
where $P_n$ denotes the empirical distribution function of the observed random variable $\bm O = (Y,\bm X, Z_L,Z_R)$.

Since each $H_n$ is a distribution function, the sequence $\{H_n\}_{n\geq1}$ is bounded and monotone. Therefore, by the Helley-Bray selection theorem, there exist a subsequence $\{H_{n_k}\}_{k\geq1}$ and a function $H$ such that $H_{n_k}(z)\to H(z)$ pointwise for all $z\in\Omega$ \citep{rudin1976}. Moreover, since the empirical cdf $P_n$ converges to the true cdf $P$ with probability one, the limit of \eqref{eq:integraleq} as $k\to\infty$ is
\[
H(z) = \int \frac{
\int_{z_{l}}^{z} f_{Y|\bm X, Z}(y |\bm x, s;\bm\theta)\, dH(s)
} {
\int_{z_{l}}^{z_{r}} f_{Y|\bm X, Z}(y |\bm x, u;\bm\theta)\, dH(u) 
} \mathbb{1}\{z\in\lfloor z_l,z_r \rfloor\} \, dP(y,\bm x, z_l,z_r)
\,+ Pr\big(z\in\lfloor Z_R,\infty)\big).
\]
which simplifies to
\[
H(z) = \int Pr(Z\leq z|y,x,zl,zr)\; dP(y,x,zl,zr) = Pr(Z\leq z) = W_0^*(z).
\]
Given that the support of $W_0^*$ is finite, pointwise convergence implies uniform convergence, so
we conclude that $\widehat W_n\to W_0^*$ uniformly on $\Omega^*$ for fixed $\bm\theta$.
Finally, uniform convergence in $\bm\theta\in\Theta$ follows from the compactness of $\Theta$ and the fact that $\widehat W_n(z;\bm\theta)$ is uniformly continuous in $\bm\theta$ \citep{vaart1998}.

\bigskip

Consider $W_0$ arbitrary.
Since the support of $\lfloor Z_L,Z_R \rfloor$ is finite, the sequence of augmented Turnbull intervals $\big\{\{I_j\}_{j=1}^{m(n)}\big\}_{n\geq 1}$ stabilizes as $n\to\infty$. Consequently, for any $\epsilon>0$, we can construct a discrete distribution $W_0^*$ such that $\sup_{z\in\Omega^*} |W_0^*-W_0|<\epsilon$, and which also satisfies the self-consistent equation \eqref{eq:integraleq}. Since we have already established that $\sup_{\theta\in\Theta} \|\widehat W_n(\cdot;\bm\theta)-W_0^*(\cdot)\|\to 0$, the conclusion follows from the triangle inequality:
\[
\sup_{\theta\in\Theta} \|\widehat W(\cdot;\bm\theta)-W_0(\cdot)\|\to 0.
\]

\end{proof}

\subsection{Asymptotic properties of the GELc estimator}%
\label{AppA5}

Let $\bm O=(Y,Z_L,Z_R,\bm X)$ be the random variable for the observed data. Denote by $\bm o=(y, z_{l}, z_{r}, \bm{x})$ an arbitrary realization of $\bm O$ and by
$\{\bm{o}_i = (y_i, z_{l_i}, z_{r_i}, \bm{x}_i)\}_{i=1}^n$ the sample observed data. 


The estimating equations for $\bm{\hat\theta}_n$ in \eqref{eq:score} correspond to the empirical mean equation
\begin{equation*}
    g_n\big(\bm\theta,\widehat W_n(\cdot;\bm\theta)\big) = n^{-1}\sum_{i=1}^n S\big(\bm\theta, \widehat W_n(\cdot;\bm\theta); \bm o_i\big)=0.
\end{equation*}
Define the associated population expectation
\begin{equation*}
    G\big(\bm\theta,\widehat W_n(\cdot;\bm\theta)\big) = \mathbb{E}_{\bm O}\big[ S\big(\bm\theta, \widehat W_n(\cdot;\bm\theta); \bm o\big) \big],
\end{equation*}
where
\begin{equation*}
    S(\bm\theta, W;\bm o) =
    \int_{z_{l}}^{z_{r}}
        \mathcal{S}(\bm\theta; y,\bm x,z)
        \frac{
        f_{Y|\bm{X},Z}(y|\bm x,z;\bm\theta) 
        }{
        \int_{z_{l}}^{z_{r}} 
        f_{Y|\bm{X},Z}(y|\bm x,u;\bm\theta) \,
        dW(u)
        }\, dW(z)
\end{equation*}
with $\mathcal{S}(\bm\theta; y,\bm x,z) = \frac{d}{d\bm\theta}\log f_{Y|\bm{X},Z}(y|\bm x,z;\bm\theta)$ the score vector for given $z$.

Let $\Theta$ denote the parameter space of $\bm\theta$ and $\mathcal{W}$ the function space of $W$. Assume that $\bm\theta_0 \in \Theta$ and $W_0(\cdot) \in \mathcal{W}$, where the subscript $0$ indicates the true (but unknown) parameter and distribution of $Z$.

\subsubsection{\texorpdfstring{Consistency of $\hat\theta_n$}{Consistency of the GELc estimator}}

At each step of the iterative process to solve $g_n\big(\bm\theta,W(\cdot)\big)=0$, the nuisance distribution $W$ is replaced by the NPMLE $\widehat W_n(\cdot;\bm\theta)$. This plug-in step affects the asymptotic behavior of the empirical mean $g_n$ as $n\to\infty$. To facilitate comprehension, we first show the consistency of $\bm{\hat\theta}_n$ assuming $W_0$ is known, and then extend the proof for $W$ estimated by the augmented Turnbull estimator $\widehat W_n(\cdot;\bm\theta)$.

\subsubsubsection{\texorpdfstring{Consistency assuming $W_0$ is known}{Consistency assuming W0 is known}}

In addition to \ref{thm:consis:iid} and \ref{thm:consis:collinearity}, the required conditions for the consistency of $\bm{\hat\theta}_n$ assuming that $W_0$ is known are:
\begin{enumerate}[label=($\rm\tilde C$\arabic*), leftmargin=*, start=3]
\item \label{c:w0:compactness}
The parameter space $\Theta$ is compact
\item \label{c:w0:continuity}
The score function $S(\bm{\theta}, W_0; \bm{o})$ is continuous at $\bm\theta\in\Theta$ with probability one.
\item \label{c:w0:score-bound}
There exists a constant $c > 0$ such that $\|\mathcal{S}(\bm\theta; y, \bm x, z)\| \leq c$ for all $\bm\theta \in \Theta$, with probability one. This holds if $(\bm X, Z)$ have bounded support and $Y$ does not exhibit heavy tails.
\end{enumerate}

Let $\tilde g_n(\bm\theta)=g_n\big(\bm\theta, W_0\big)$ and $\tilde G(\bm\theta)=G\big(\bm\theta, W_0\big)$ denote the sample and population means of $S(\bm\theta, W;\bm o)$ if $W=W_0$ were known. Then the estimating equations $\tilde g_n(\bm\theta) = 0$ match the structure of the Generalized Method of Moments (GMM). 
Following Theorem 2.1 in \cite{newey1994b}, the consistency of $\bm{\hat\theta}_n$ requires showing that (i) the true parameter $\bm\theta_0$ is identifiable through the population equations $\tilde G(\bm\theta)=0$; and (ii) the sample mean $\tilde g_n(\bm\theta)$ converges in probability to the population mean $\tilde G(\bm\theta)$ uniformly for all $\bm\theta\in\Theta$.
Therefore, given that $S(\bm\theta, W_0;\bm o)$ is continuous in $\bm\theta$~\ref{c:w0:continuity}, and $\Theta$ is compact~\ref{c:w0:compactness}, $\bm{\hat\theta}_n$ converges in probability to $\bm\theta_0$.

\begin{lemmaA} \label{lemma:W0:iden}
    The true parameter $\bm\theta_0$ is identifiable, in the sense that there is a unique $\bm\theta=\bm\theta_0$ satisfying $\tilde G(\bm\theta)=0$.
\end{lemmaA}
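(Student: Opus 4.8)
The plan is to recognize that $S(\bm\theta,W_0;\bm o)$ is exactly the score of the conditional observed-data log-density and then run the classical Kullback--Leibler identification argument. First I would note that, because $W_0(z_r)-W_0(z_l)$ does not depend on $\bm\theta$, the weighted score coincides with $S(\bm\theta,W_0;\bm o)=\frac{d}{d\bm\theta}\log q_{\bm\theta}(y\mid z_l,z_r,\bm x)$, where $q_{\bm\theta}(y\mid z_l,z_r,\bm x)=\big(\int_{z_l}^{z_r} f_{Y|\bm X,Z}(y|\bm x,z;\bm\theta)\,dW_0(z)\big)\big/\big(W_0(z_r)-W_0(z_l)\big)$ is the model-implied conditional density of $Y$ given the observed interval and covariates. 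Consequently $\tilde G(\bm\theta)=\mathbb{E}_{\bm O}\big[\tfrac{d}{d\bm\theta}\log q_{\bm\theta}(Y\mid Z_L,Z_R,\bm X)\big]$ is the gradient of the population contrast $M(\bm\theta):=\mathbb{E}_{\bm O}[\log q_{\bm\theta}(Y\mid Z_L,Z_R,\bm X)]$, so the lemma becomes an identifiability statement for a genuine likelihood.

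Second, I would check that $\bm\theta_0$ solves $\tilde G(\bm\theta)=0$. Conditioning on $(Z_L,Z_R,\bm X)$, the inner expectation is $\int \frac{d}{d\bm\theta}\log q_{\bm\theta}(y\mid\cdot)\big|_{\bm\theta_0}\,q_{\bm\theta_0}(y\mid\cdot)\,dy=\frac{d}{d\bm\theta}\int q_{\bm\theta}(y\mid\cdot)\,dy\big|_{\bm\theta_0}=0$, the usual mean-zero-score (Bartlett) identity; the interchange of differentiation and integration is licensed by the uniform bound on the pointwise score $\mathcal S$ together with dominated convergence. For uniqueness I would invoke the information inequality: for every $\bm\theta$, $M(\bm\theta)-M(\bm\theta_0)=-\mathbb{E}_{\bm O}\big[\mathrm{KL}\big(q_{\bm\theta_0}(\cdot\mid Z_L,Z_R,\bm X)\,\|\,q_{\bm\theta}(\cdot\mid Z_L,Z_R,\bm X)\big)\big]\le 0$, with equality if and only if $q_{\bm\theta}=q_{\bm\theta_0}$ almost surely. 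Thus $\bm\theta_0$ is the unique maximizer of $M$ provided $\bm\theta\mapsto q_{\bm\theta}$ is injective.

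The injectivity step is where the censoring structure enters. Equality $q_{\bm\theta}=q_{\bm\theta_0}$ a.s. means $\int_{z_l}^{z_r}\{f(\cdot;\bm\theta)-f(\cdot;\bm\theta_0)\}\,dW_0=0$ for almost every observed $(z_l,z_r)$. Exact observations give this at single points, since $z_l=z_r=z$ yields $f(y|\bm x,z;\bm\theta)=f(y|\bm x,z;\bm\theta_0)$; more generally, richness of the endpoint distribution (e.g.\ differentiating in $z_r$ at continuity points of $W_0$) forces $f(\cdot;\bm\theta)=f(\cdot;\bm\theta_0)$ for $W_0$-almost all $z$. Either way the problem reduces to identifiability of the GLM family: the link $g$ is injective so the linear predictor $\eta=\alpha+\bm\beta'\bm x+\gamma z$ is identified, the no-perfect-multicollinearity condition then separates $(\alpha,\bm\beta,\gamma)$, and $\phi$ is recovered from the variance function $\phi V(\mu)$. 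Hence $\bm\theta=\bm\theta_0$.

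I expect two obstacles. The main one is the injectivity of $\bm\theta\mapsto q_{\bm\theta}$ through the mixture over $Z$ when exact observations are rare: equality of the averaged densities $q_{\bm\theta}$ is weaker than equality of the GLM densities $f(\cdot;\bm\theta)$, so recovering parameter equality requires exploiting variation in the observed endpoints and the monotonicity of the exponential-dispersion density in the linear predictor. The second is formal: the lemma asks for $\bm\theta_0$ to be the unique zero of $\tilde G$, whereas the information inequality only delivers the unique global maximizer of $M$; bridging these requires either reframing GELc as an M-estimator so that the unique-maximizer identification of \cite{newey1994b} applies directly, or ruling out spurious stationary points of $M$ via concavity of $\log q_{\bm\theta}$ in a natural parametrization. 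I would adopt the M-estimator reframing, since it matches the likelihood-based construction and sidesteps global concavity.
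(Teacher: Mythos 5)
Your proposal is correct and takes essentially the same route as the paper's own proof: both identify $S(\bm\theta,W_0;\bm o)$ as the score of a genuine likelihood, obtain $\tilde G(\bm\theta_0)=0$ from the mean-zero score (Bartlett) identity, and prove uniqueness by showing that distinct parameters induce distinct likelihoods --- collapsing the censoring interval to an exact observation $z_l=z_r$ to force pointwise equality of the GLM densities, and then invoking correct specification and no-multicollinearity to conclude $\bm\theta=\bm\theta_0$. If anything, your version is the more careful one: the explicit KL/$q_{\bm\theta}$ framing, and the obstacle you flag about zeros of $\tilde G$ versus maximizers of $M$, is exactly the step the paper passes over with the unjustified assertion that any other solution of $\tilde G(\bm\theta_1)=0$ ``is only possible if $\bm\theta_1$ is also a maximum of the likelihood function.''
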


\begin{proof}
The vector-valued function $S(\bm\theta, W_0;\bm o)$ corresponds to the score equations of the likelihood function given in equation~(\ref{eq:lik}):
\[ L(\bm\theta| W_0, \{y_i,\bm x_i,z_{l_i},z_{r_i}\}_{i=1}^n) = \prod_{i=1}^n \int_{z_{l_i}}^{z_{r_i}} f_{Y|\bm{X},Z}(y_i|\bm x_i,z;\bm\theta)\, dW_0(z). \]
It is well known that the expected value of the score equations, when evaluated at the true parameter, equals zero. Therefore, it is immediate to see that $\tilde G(\bm\theta_0)=0$.

Now, assume there exists another solution $\bm{\theta}_1\neq \bm{\theta}_0$ such that $\tilde G(\bm{\theta}_1) = \tilde G(\bm{\theta}_0) =0$.
This is only possible if $\bm{\theta}_1$ is also a maximum of the likelihood function. We will prove that different parameter values $\bm{\theta}_1\neq \bm{\theta}_0$ induce different likelihood functions. Suppose the likelihoods are identical, that is
\[ \int_{z_{l}}^{z_{r}} f_{Y|\bm{X},Z}(y|\bm x,z;\bm\theta_1)\, dW_0(z) = \int_{z_{l}}^{z_{r}} f_{Y|\bm{X},Z}(y|\bm x,z;\bm\theta_0)\, dW_0(z) \]
Therefore, for any pair $z_{l},z_{r}\in\Omega$,
\[ \int_{z_{l}}^{z_{r}} \big[f_{Y|\bm{X},Z}(y|\bm x,z;\bm\theta_1) - f_{Y|\bm{X},Z}(y|\bm x,z;\bm\theta_0)\big] dW_0(z)=0. \]
Taking $z_l=z_r$ we get that $f_{Y|\bm{X},Z}(y|\bm x,z;\bm\theta_1) = f_{Y|\bm{X},Z}(y|\bm x,z;\bm\theta_0)$ for all $z\in\Omega$,
which means that both probability distributions are identical. This contradicts~\ref{thm:consis:iid}, hence $\bm\theta_1=\bm\theta_0$.

\end{proof}

\begin{lemmaA}
    The sample-level mean $\tilde g_n(\bm\theta)$ converges uniformly in probability to the population-level mean $\tilde G(\bm\theta)$, that is $\sup_{\bm{\theta} \in \Theta} \| \tilde g_n(\bm{\theta}) - \tilde G(\bm{\theta}) \| \xrightarrow[n\to\infty]{p} 0$.
\end{lemmaA}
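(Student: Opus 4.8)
The plan is to recognize this claim as a uniform law of large numbers (ULLN) for the i.i.d. array $\{S(\bm\theta, W_0;\bm o_i)\}_{i=1}^n$ indexed by $\bm\theta\in\Theta$, and to verify the standard hypotheses of such a result—for instance Lemma 2.4 of \cite{newey1994b}. That ULLN asserts that if (a) the data are i.i.d., (b) $\Theta$ is compact, (c) the summand $S(\bm\theta, W_0;\bm o)$ is continuous in $\bm\theta$ for almost every $\bm o$, and (d) the summand admits an integrable envelope $d(\bm o)\ge \sup_{\bm\theta\in\Theta}\|S(\bm\theta, W_0;\bm o)\|$ with $\mathbb{E}[d(\bm O)]<\infty$, then $\sup_{\bm\theta\in\Theta}\|\tilde g_n(\bm\theta)-\tilde G(\bm\theta)\|\xrightarrow{p}0$. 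Conditions (a)--(c) are supplied directly by \ref{thm:consis:iid}, \ref{c:w0:compactness}, and \ref{c:w0:continuity}, so the only substantive work is to produce the envelope in (d).

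For the envelope I would exploit the conditional-expectation structure of $S$. Writing
\[
dP^{\bm\theta}_{\bm o}(z) := \frac{f_{Y|\bm X,Z}(y|\bm x,z;\bm\theta)\, dW_0(z)}{\int_{z_{l}}^{z_{r}} f_{Y|\bm X,Z}(y|\bm x,u;\bm\theta)\, dW_0(u)},
\]
this is a probability measure on $\lfloor z_l,z_r\rfloor$ (the posterior law of $Z$ given $\bm o$), so that $S(\bm\theta, W_0;\bm o)=\int_{z_l}^{z_r}\mathcal S(\bm\theta;y,\bm x,z)\, dP^{\bm\theta}_{\bm o}(z)$ is a posterior average of the pointwise score $\mathcal S$. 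By the uniform bound in \ref{c:w0:score-bound}, $\|\mathcal S(\bm\theta;y,\bm x,z)\|\le c$ for all $\bm\theta\in\Theta$ and all $z$, and by the triangle inequality for integrals $\|S(\bm\theta, W_0;\bm o)\|\le \int_{z_l}^{z_r}\|\mathcal S(\bm\theta;y,\bm x,z)\|\, dP^{\bm\theta}_{\bm o}(z)\le c$ with probability one. The envelope can therefore be taken to be the constant $d(\bm o)\equiv c$, which is trivially integrable, so condition (d) holds.

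With the constant envelope established, the ULLN applies and delivers the claimed uniform convergence. The main obstacle is really step (d): ensuring that the posterior-averaging structure transmits the pointwise bound of \ref{c:w0:score-bound} to $S$ without the normalizing denominator degenerating. I would address this by noting that, with probability one, the observed data are generated from the true model, so the denominator $\int_{z_l}^{z_r} f_{Y|\bm X,Z}(y|\bm x,u;\bm\theta)\, dW_0(u)$ is strictly positive on the support of $\bm O$; the same constant envelope together with \ref{c:w0:continuity} then also justifies, via dominated convergence, that $\tilde G(\bm\theta)=\mathbb{E}[S(\bm\theta, W_0;\bm O)]$ is well defined and continuous on $\Theta$, which is the remaining ingredient the ULLN requires.
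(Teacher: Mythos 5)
Your proposal is correct and follows essentially the same route as the paper: both invoke the ULLN of Lemma 2.4 in Newey and McFadden, citing \ref{c:w0:compactness} for compactness, \ref{c:w0:continuity} for continuity, and deriving a constant envelope for $S(\bm\theta, W_0;\bm o)$ from the pointwise score bound in \ref{c:w0:score-bound}. If anything, your envelope argument is slightly more careful than the paper's (which only says $S$ is ``a definite integral of $\mathcal{S}$ multiplied by a positive term''), since you make explicit that the weights $f_{Y|\bm X,Z}\,dW_0 \big/ \int f_{Y|\bm X,Z}\,dW_0$ form a probability measure on $\lfloor z_l,z_r\rfloor$, so the bound $c$ transfers directly to $S$.
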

\begin{proof}
    The Uniform Law of Large Numbers (ULLN) \citep[][Lemma 2.4]{newey1994b} states that, under compactness of $\Theta$~\ref{c:w0:compactness} and continuity of $S(\bm{\theta}, W_0; \bm{o})$ at each $\bm{\theta}\in\Theta$ with probability one~\ref{c:w0:continuity}, if there exists a bound function $d(\bm o)$ such that $\big\| S(\bm{\theta}, W_0; \bm{o}) \big\| \leq d(\bm o)$ $\forall \bm\theta\in\Theta$ and $\mathbb{E}[d(\bm o)]<\infty$ then $G(\bm{\theta})$ is continuous and
    \[\sup_{\bm{\theta} \in \Theta} \| \tilde g_n(\bm{\theta}) - \tilde G(\bm{\theta}) \| \xrightarrow{p} 0 \quad \text{as } n \to \infty.\]
    From~\ref{c:w0:score-bound}, there exists a constant $c>0$ such that $\|\mathcal{S}(\bm\theta;y,\bm x, z)\| \leq c$ for all $\bm\theta\in\Theta$, with probability one.
    The function $S(\bm{\theta}, W_0; \bm{o})$ is a definite integral of $\mathcal{S}(\bm\theta;y,\bm x, z)$ multiplied by a positive term. Therefore, it's also bounded.
    
\end{proof}

\subsubsubsection{Proof of Theorem \ref{thm:consistent}}

Under conditions \ref{c:iid}--\ref{c:compactness}, the GELc estimator is consistent, in the sense that $\| \bm{\hat\theta}_n - \bm\theta_0 \| \xrightarrow[]{p} 0$ as $n\to\infty$.

\begin{enumerate}[label=(C\arabic*), leftmargin=*]

 \item \label{c:iid}
    The observations $\{(y_i,z_{l_i},z_{r_i},\bm x_i)\}_{i=1}^n$ are independent and identically distributed (i.i.d.), with the true data-generating mechanism governed by the parameter $\bm{\theta}_0\in\Theta$, so that the model is correctly specified; and the non-observed $Z_1,\dots, Z_n$ are i.i.d. with true distribution $W_0\in\mathcal{W}$.

    \item \label{c:collinearity}
   The covariates $(\bm X, Z)$ cannot exhibit  perfect multicollinearity. That is, the design matrix $(\bm 1, \bm x_{\cdot 1},\dots, \bm x_{\cdot p}, \bm s)$, where $\bm x_{\cdot j}=(x_{1j},\dots,x_{nj})'$ for $j=1,\dots,p$ and $\bm s=(s_1,\dots, s_n)$ with $s_i\in\lfloor z_{l_i},z_{r_i}\rfloor$,   
    has full column rank.

    \item \label{c:consistent-w}
    $\widehat W_n$  has to converge unifomly to $W_0$ in the entire support, that is,
    \[
    \sup_{\bm\theta\in\Theta} \sup_{z\in\Omega} |\widehat W_n(z;\bm\theta)-W_0(z)| \xrightarrow[n\to\infty]{p} 0.
    \]

    \item \label{c:continuity}
    The vector-valued function $S(\bm{\theta}, W; \bm{o})$ is (i) continuous in $W$ near $W_0$ uniformly for all $\theta\in\Theta$; (ii) continuous with respect to $\bm\theta\in\Theta$ at $W=W_0$ with probability one. 

    \item \label{c:holder}
    Define $\left\| W_1 - W_2 \right\|_{\infty} = 
    \sup_{\bm\theta\in\Theta} \sup_{z\in\Omega} |W_1(z;\bm\theta)-W_2(z;\bm\theta)|$.
    Each component of $S(\bm{\theta}, W; \bm{o})$ is Hölder continuous in the sense of \cite{chen2003}, meaning that, for all $j=1,\dots,p+3$,
    \[
    \left| S_j(\bm{\theta}_1, W_1; \bm{o}) - S_j(\bm{\theta}_2, W_2; \bm{o}) \right| 
    \leq b_j(z) \left\{ 
    \left\| \bm\theta_1 - \bm\theta_2 \right\|^{s_{1j}} 
    + \left\| W_1 - W_2 \right\|_{\infty}^{s_j} 
    \right\}
    \]
    for some constants $s_{1j}, s_j \in (0, 1]$ and a measurable function $b_j(\cdot)$ with $\mathbb{E}[b_j(Z)]^r < \infty$ for some $r \geq 2$.

    \item \label{c:compactness}
    The parameter space $\Theta$ is compact, and the function space $\mathcal{W}$ satisfies, for each $j=1,\dots,p+3$, the entropy bound
    \[
    \int_0^\infty \sqrt{\log N\bigl(\varepsilon^{1/s_j}, \mathcal{W}, \|\cdot\|_{\infty}\bigr)}\, d\varepsilon < \infty,
    \]
    where 
    $N(\varepsilon,\mathcal W, ||\cdot||_{\infty})$ denotes the covering number of $\mathcal W$, that is, the minimal number of balls of radius $\varepsilon$ in the sup-norm $ \left\| W_1 - W_2 \right\|_{\infty}$ required to cover $\mathcal{W}$. 
\end{enumerate}

\medskip
To prove the consistency of $\bm{\hat\theta}_n$ we follow 
Theorem 1 in \cite{chen2003}, which provides conditions for consistency of estimators that approximately solve the estimating equations $g_n\big(\bm\theta, \widehat W(\cdot,\bm\theta)\big) = 0$, provided the true parameter $\bm\theta_0$ is identifiable by $g_n\big(\bm\theta, W_0\big) = 0$ (proved in Lemma \ref{lemma:W0:iden}).
The leading ideas are the following.
\begin{enumerate}
    \item The value of the estimating function at the chosen estimate $\bm{\hat\theta}_n$ is asymptotically as small as possible, up to a negligible error, that is,
    \[ \| g_n\big(\bm{\hat\theta}_n,\widehat W_n(\cdot;\bm{\hat\theta}_n)\big) \| \leq 
    \inf_{\bm\theta\in\Theta} \| g_n\big(\bm\theta,\widehat W_n(\cdot;\bm\theta)\big) \| + o_p(1). \]
    This condition states that $\bm{\hat\theta}_n$ approximately solves the estimating equations, in the sense that it minimizes  $g_n\big(\bm\theta,\widehat W_n(\cdot;\bm\theta)\big)$ up to a term that converges to zero in probability as $n\to\infty$. We ensure this point by computing $\bm{\hat\theta}_n$ via numerical optimization with a small tolerance, so that the algorithmic error is asymptotically negligible.
    
    \item The population moment equation $G\big(\bm\theta,W_0\big)=0$ is uniquely solved at $\bm\theta_0$ and is bounded away from zero elsewhere. Formally, $G(\bm\theta_0,W_0)=0$ and, for all $\delta>0$, there exists $\epsilon(\delta)>0$ such that $\inf_{\|\bm\theta-\bm\theta_0\| >\delta} G(\bm\theta,W_0) \geq \epsilon(\delta)>0$. This condition holds in our setting by Lemma \ref{lemma:W0:iden} (identifiability) together with the continuity assumption \ref{c:continuity}.

    \item With respect to the chosen metric of the type $\| W_1 - W_2 \|_{\mathcal W} = \sup_{\bm\theta\in\Theta} \| W_1(\cdot;\bm\theta) - W_2(\cdot;\bm\theta) \|$, the estimator $\widehat W_n(\cdot;\bm\theta)$ satisfies
    \[\| \widehat W_n - W_0 \|_{\mathcal W} \xrightarrow[n\to\infty]{p} 0.  \]
    In our case, condition \ref{c:consistent-w} is formulated equivalently with the metric
    \[
    \left\| W_1 - W_2 \right\|_{\infty} = 
    \sup_{\bm\theta\in\Theta} \sup_{z\in\Omega} |W_1(z;\bm\theta)-W_2(z;\bm\theta)|.
    \]
    Moreover, for the augmented Turnbull estimator, condition \ref{c:consistent-w} holds by Theorem~\ref{thm:TBconsistent}, so this is satisfied in our setting.

    \item Finally, uniform convergence of $g_n$ to $G$ must hold in a neighborhood of $W_0$. Specifically, for all positive sequences converging to zero $\delta_n=o_p(1)$,
    \begin{equation*}
        \sup_{\substack{\|W - W_0\|_{\mathcal W} \leq \delta_n \\ \bm\theta \in \Theta}}
        \| g_n(\bm\theta,W) - G(\bm\theta,W) \| \xrightarrow[n\to\infty]{p} 0.
    \end{equation*}
    This is implied by Lemma~\ref{lem:stochastic_equicontinuity} (via the triangle inequality), and holds under \ref{c:holder} and \ref{c:compactness}.
\end{enumerate}

\begin{lemmaA} \label{lem:stochastic_equicontinuity}
    \textbf{Stochastic equicontinuity.}
    Under regularity conditions \ref{c:holder} and \ref{c:compactness}, 
    for all positive sequences $\delta_n$ such that $\delta_n \xrightarrow[]{p} 0$,
    \[
    \sup_{\substack{\|W - W_0\|_{\infty} \leq \delta_n \\ \bm\theta \in \Theta}} \| g_n(\bm\theta,W) - G(\bm\theta,W) - g_n(\bm\theta_0,W_0) \| = o_p(n^{-1/2}).
    \]
\end{lemmaA}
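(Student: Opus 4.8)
Here is the plan. The key realization is that the displayed quantity is an \emph{increment of the empirical process} evaluated at the base point $(\bm\theta_0,W_0)$. Define the (vector-valued) empirical process
\[
\nu_n(\bm\theta,W) \;=\; \sqrt n\,\bigl[g_n(\bm\theta,W)-G(\bm\theta,W)\bigr] \;=\; \frac{1}{\sqrt n}\sum_{i=1}^n\bigl\{ S(\bm\theta,W;\bm o_i)-G(\bm\theta,W)\bigr\},
\]
indexed by $(\bm\theta,W)\in\Theta\times\mathcal W$. First I would exploit that $\bm\theta_0$ is a zero of the population score, $G(\bm\theta_0,W_0)=0$, which was established in Lemma~\ref{lemma:W0:iden}. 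Hence $g_n(\bm\theta_0,W_0)=n^{-1/2}\nu_n(\bm\theta_0,W_0)$, and combining this with $g_n-G=n^{-1/2}\nu_n$ yields the exact identity
\[
g_n(\bm\theta,W)-G(\bm\theta,W)-g_n(\bm\theta_0,W_0)
\;=\; n^{-1/2}\bigl\{\nu_n(\bm\theta,W)-\nu_n(\bm\theta_0,W_0)\bigr\}.
\]
The lemma is therefore equivalent to showing $\sup_{\|W-W_0\|_\infty\le\delta_n,\,\bm\theta\in\Theta}\|\nu_n(\bm\theta,W)-\nu_n(\bm\theta_0,W_0)\|=o_p(1)$, i.e.\ that the oscillation of $\nu_n$ over the shrinking neighbourhood of $(\bm\theta_0,W_0)$ vanishes.

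To control this oscillation I would invoke the empirical-process machinery of \cite{chen2003}, whose hypotheses are precisely \ref{c:holder} and \ref{c:compactness}. The first step is to show that each coordinate class $\mathcal F_j=\{\,\bm o\mapsto S_j(\bm\theta,W;\bm o):\bm\theta\in\Theta,\;W\in\mathcal W\,\}$ is $P$-Donsker. The H\"older bound \ref{c:holder} converts the $L_r(P)$ distance between two elements into
\[
\|S_j(\bm\theta_1,W_1;\cdot)-S_j(\bm\theta_2,W_2;\cdot)\|_{P,r}
\;\le\; \|b_j\|_{P,r}\bigl\{\|\bm\theta_1-\bm\theta_2\|^{s_{1j}}+\|W_1-W_2\|_\infty^{s_j}\bigr\},
\]
with $\mathbb E[b_j(Z)]^r<\infty$. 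A covering of $\Theta$ (finite dimensional, hence polynomial covering numbers) together with a covering of $\mathcal W$ then induces a covering of $\mathcal F_j$, and the change of variables $\varepsilon\mapsto\varepsilon^{1/s_j}$ turns the entropy integral in \ref{c:compactness} into a finite entropy integral for $\mathcal F_j$, which is the Donsker criterion.

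With the Donsker property in hand, the $o_p(1)$ rate follows from the asymptotic equicontinuity of $\nu_n$, equivalently from a maximal inequality of the form $\mathbb E^*\sup_{\mathcal F_\delta}\|\nu_n\|\lesssim J_{[]}(\delta,\mathcal F,L_2(P))$ (cf.\ \cite{vaart1998}). The increments in the restricted class $\{S(\bm\theta,W;\cdot)-S(\bm\theta_0,W_0;\cdot)\}$ have $L_2(P)$-seminorm bounded, by the same H\"older inequality, by a multiple of $\|\bm\theta-\bm\theta_0\|^{s_1}+\delta_n^{s}$; since this seminorm shrinks as $W\to W_0$ (and, in the localized regime relevant to the asymptotic-normality application, as $\bm\theta\to\bm\theta_0$), the entropy integral evaluated at the shrinking radius tends to zero, so that $\sup\|\nu_n(\bm\theta,W)-\nu_n(\bm\theta_0,W_0)\|=o_p(1)$, which is the claim.

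The main obstacle will be the verification of \ref{c:holder} for the concrete score, because $S(\bm\theta,W;\bm o)$ is a \emph{ratio} of two $W$-linear integrals, so continuity in $W$ is not automatic. One must bound the denominator $\int_{z_l}^{z_r} f_{Y|\bm X,Z}(y|\bm x,u;\bm\theta)\,dW(u)$ away from zero uniformly in $(\bm\theta,W)$ near $(\bm\theta_0,W_0)$ and over the observed intervals, and bound the numerator using the score bound $\|\mathcal S(\bm\theta;y,\bm x,z)\|\le c$ from \ref{c:w0:score-bound}. A first-order expansion of the ratio then delivers Lipschitz-in-$W$ control (H\"older with $s_j=1$), while smoothness of $f_{Y|\bm X,Z}$ in $\bm\theta$ with integrable derivatives supplies the $\bm\theta$-increment bound. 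Establishing these positivity and boundedness facts, and thereby transferring the covering numbers of $\mathcal W$ through the nonlinear map $W\mapsto S(\cdot,W;\cdot)$, is the technical crux; once it is secured, the Donsker argument and the maximal inequality are routine.
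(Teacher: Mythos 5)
Your proposal is correct and follows essentially the same route as the paper's proof: both reduce the lemma to controlling the oscillation of the empirical process over the class $\{S(\bm\theta,W;\cdot)\}$ and invoke the entropy machinery of \cite{chen2003}, with the H\"older condition \ref{c:holder} transferring covering numbers of the parameter class $\Theta\times\mathcal W$ into bracketing entropy of the score class and \ref{c:compactness} making the resulting entropy integral finite. Your two refinements --- the explicit identity exploiting $G(\bm\theta_0,W_0)=0$, and the parenthetical caveat that the increment's $L_2(P)$-seminorm (hence the oscillation of $\nu_n$) vanishes only in the localized regime $\bm\theta\to\bm\theta_0$ --- are both sound, and the second is in fact necessary: over all of $\Theta$ the increment $S(\bm\theta,W;\cdot)-S(\bm\theta_0,W_0;\cdot)$ has non-degenerate variance and the left-hand side is only $O_p(n^{-1/2})$, so what the Donsker argument (and the paper's own proof) really delivers is the shrinking-neighborhood version over $\Theta_{\delta_n}\times\mathcal W_{\delta_n}$, which is precisely the form invoked later in the asymptotic-normality argument.
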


\begin{proof}
    Consider the class of functions
    \[
    \mathcal{G}_\delta 
    = \bigl\{ S(\bm{\theta}, W; \bm{o}) : \bm\theta \in \Theta,\, \|W-W_0\|_{\infty} \leq \delta \bigr\},
    \]
    indexed by $(\bm\theta, W)$ and equipped with the $L_2(P)$ metric, a natural choice in empirical process theory \citep[see][]{Vaart1996}, defined as
    \[
    \bigl\| S(\bm\theta_1, W_1; \bm{o}) - S(\bm\theta_2, W_2; \bm{o}) \bigr\|_{L_2(P)}
    = 
    \Bigg(
    \mathbb{E}_{\bm O}\Big[
    \sum_{j=1}^{p} 
    \big(
    S_j(\bm\theta_1, W_1; \bm{o}) 
    -
    S_j(\bm\theta_2, W_2; \bm{o})
    \big)^2
    \Big]
    \Bigg)^{1/2}.
    \]

    As stated in Section~4 of \cite{chen2003}, verifying stochastic equicontinuity reduces to bounding the bracketing number
    \[
    N_{[]}(\varepsilon, \mathcal{G}_\delta, \|\cdot\|_{L_2(P)}).
    \]
    If each component of $S(\bm{\theta}, W; \bm{o})$ is Hölder continuous as in Condition~\ref{c:holder}, then the bracketing number of $\mathcal{G}_\delta$ can be controlled by the covering number of the parameter class
    \[
    \bigl\{ (\bm\theta, W): \bm\theta \in \Theta,\, \|W - W_0\|_{\infty} \le \delta \bigr\}.
    \]

    Since $\Theta$ is a compact subset of $\mathbb{R}^p$, its covering number is finite. Moreover, Condition~\ref{c:compactness} bounds the complexity of $\mathcal{W}$ by controlling the growth of its covering number as $\varepsilon \to 0$. Therefore, the entropy of $\mathcal{G}_\delta$ is uniformly bounded as $\delta \to 0$, and the stochastic equicontinuity condition follows.
\end{proof}


\subsubsection{\texorpdfstring{Asymptotic normality of $\hat\theta_n$}{Asymptotic normality of the GELc estimator}}




\subsubsubsection{\texorpdfstring{Asymptotic normality assuming $W_0$ is known}{Asymptotic normality assuming W0 is known}}

Assume that conditions \ref{thm:consis:iid} and \ref{thm:consis:collinearity} of Theorem 3 hold, and that conditions \ref{c:w0:compactness} to \ref{c:w0:score-bound} (which guarantee the consistency of $\bm{\hat\theta}_n$ if $W_0$ is known) are satisfied.

Under the regularity conditions \ref{c:w0:interior} to \ref{c:w0:nonsingular}, we have
\[
\sqrt{n}(\bm{\hat\theta}_n-\bm\theta_0)\xrightarrow{d} N(0,\tilde I(\bm\theta_0)^{-1})
\]
where $\tilde I(\bm\theta_0)$ is the expected information matrix at the true value.

\begin{enumerate}[label=($\rm\tilde R$\arabic*), leftmargin=*]
    
    \item \label{c:w0:interior}
    The true parameter $\bm\theta_0$ is interior to the compact parameter space $\Theta$.

    \item \label{c:w0:c1}
    The score function $S(\bm{\theta}, W_0; \bm{o})$ is continuously differentiable in a neighborhood $\Theta_\delta$ of $\bm\theta_0$ with probability approaching one; and $\mathbb{E}\left[\sup_{\theta \in \Theta_\delta} \| \frac{d}{d \bm\theta} S(\bm\theta, W_0; \bm o)\|\right] < \infty$.

    \item \label{c:w0:bounded}
    For $\bm{\theta}_0$ and $W_0$ fixed, $\mathbb{E}[\| S(\bm{\theta}_0, W_0; \bm{o})\|^2]<\infty$, so that the information matrix exists.
        
    \item \label{c:w0:nonsingular}
    The Jacobian matrix for given $z$, $\frac{d}{d\bm\theta} \mathcal{S}(\bm\theta; y,\bm x,z)$, is nonsingular at $\theta_0$, ie it has full column rank.
\end{enumerate}

\bigskip

Developing the first-order Taylor expansion of $g_n(\bm\theta)$ around $\bm\theta_0$, which is well-defined because of \ref{c:w0:interior}, it can be seen that $\sqrt{n}(\bm{\hat\theta}_n-\bm\theta_0)$ is asymptotically equivalent to
\begin{equation} \label{eq:AsEquiv0}
    \sqrt{n}(\bm{\hat\theta}_n-\bm\theta_0) \approx
    \big[- \tilde J_n(\bm\theta_0)^{-1}\big] 
    \big[\sqrt{n} \tilde g_n(\bm\theta_0) \big],
\end{equation}
where $\tilde J_n(\bm\theta_0)= \frac{d}{d \bm\theta}\tilde g_n(\bm\theta)\rvert_{\bm\theta=\bm\theta_0}$ is the Jacobian of the estimating function $\tilde g_n$ evaluated at the true parameter. 
We can write
\[
\tilde J_n(\bm\theta_0) = n^{-1} \sum_{i=1}^n \frac{d}{d \bm\theta} S(\bm\theta, W_0; \bm o_i) \rvert_{\bm\theta=\bm\theta_0}.
\]
Since $\mathbb{E}\left[\sup_{\theta \in \Theta_\delta} \| \frac{d}{d \bm\theta} S(\bm\theta, W_0; \bm o)\|\right] < \infty$~\ref{c:w0:c1}, the iid random variables $\frac{d}{d \bm\theta} S(\bm\theta, W_0; \bm o_i) \rvert_{\bm\theta=\bm\theta_0}$, $i=1,\dots,n$, have finite expectation and so, by the law of large numbers, the sample mean $\tilde J_n(\bm\theta_0)$ converges in probability to the expectation 
\[ \mathbb{E}\big[\frac{d}{d \bm\theta} S(\bm\theta, W_0; \bm o) \rvert_{\bm\theta=\bm\theta_0}\big] = \frac{d}{d \bm\theta} \mathbb{E}\big[S(\bm\theta, W_0; \bm o)\big]\rvert_{\bm\theta=\bm\theta_0} = \frac{d}{d\bm\theta}\tilde G(\bm\theta)\rvert_{\bm\theta=\bm\theta_0} = \tilde J(\bm\theta_0).\]
That is, $\tilde J_n(\bm\theta_0)\to \tilde J(\bm\theta_0)$ in probability as $n\to\infty$.

On the other hand, the sequence of random variables $S(\bm\theta, W_0; \bm o_i) \rvert_{\bm\theta=\bm\theta_0}$, $i=1,\dots,n$, are iid with mean 
$\mathbb{E}[S(\bm\theta_0, W_0; \bm o)]=G(\bm\theta_0)=0$ and variance
\[
    {\rm Var}\big(S(\bm\theta_0, W_0; \bm o)\big) = \mathbb{E}\big[ S(\bm\theta_0, W_0; \bm o) S(\bm\theta_0, W_0; \bm o)'\big] 
    - \underbrace{\mathbb{E}\big[S(\bm\theta_0, W_0; \bm o)\big]\mathbb{E}\big[S(\bm\theta_0, W_0; \bm o)\big]'}_{0}
    = \tilde I(\bm\theta_0).
\]

Since $\tilde I(\bm\theta_0)<\infty$~\ref{c:w0:bounded}, by the Central Limit Theorem,
\[
\sqrt{n} \frac{1}{n}\sum_{i=1}^n S(\bm\theta_0, W_0; \bm o_i) = \sqrt{n} \tilde g_n(\bm\theta_0) \to N\Big(0,\, \tilde I(\bm\theta_0)\Big)
\]
in distribution as $n\to\infty$.

Notice that Fisher's information matrix is equal to
\[
\tilde I(\bm\theta_0) = \mathbb{E}\big[ S(\bm\theta_0, W_0; \bm o) S(\bm\theta_0, W_0; \bm o)'\big] = -\frac{d}{d\bm\theta} \mathbb{E}\big[S(\bm\theta, W_0; \bm o)\big]\rvert_{\bm\theta=\bm\theta_0} = -\tilde J(\bm\theta_0).
\]

By Slutsky's theorem, since $\tilde J_n(\bm\theta_0)\xrightarrow[]{p} \tilde J(\bm\theta_0)$ and $\sqrt{n} \tilde g_n(\bm\theta_0) \xrightarrow[]{d} N\big(0,\tilde I(\bm\theta_0)\big)$,
\[ 
    \big[- \tilde J_n(\bm\theta_0)^{-1}\big] 
    \big[\sqrt{n} \tilde g_n(\bm\theta_0) \big]
    \xrightarrow[n\to\infty]{d} N(0, \tilde\Sigma) 
\]
with 
\[
\tilde\Sigma= \tilde J(\bm\theta_0)^{-1} \tilde I(\bm\theta_0) (\tilde J(\bm\theta_0)^{-1})' = 
\tilde I(\bm\theta_0)^{-1}.
\]

The two expressions in equation~(\ref{eq:AsEquiv0}) are equal except for a remaining term $- \sqrt{n} \tilde J_n(\bm\theta_0)^{-1} o_p(|\bm{\hat\theta}_n-\bm\theta_0|)$, which is asymptotically negligible. Indeed, first, $|\bm{\hat\theta}_n-\bm\theta_0|$ converges in probability to $0$, that is, $o_p(|\bm{\hat\theta}_n-\bm\theta_0|)=o_p(1)$ because of consistency. Secondly, $\tilde J_n(\bm\theta_0)\xrightarrow[]{p} \tilde J(\bm\theta_0)$, so $\tilde J_n(\bm\theta_0)$ is bounded in probability. Therefore, $\tilde J_n(\bm\theta_0)^{-1} o_p(|\bm{\hat\theta}_n-\bm\theta_0|)$ stills converges in probability to $0$, and the remaining term is $- \sqrt{n}o_p(1)$. The product of a $o_p(1)$ term with $\sqrt{n}$ grows strictly slower than $\sqrt{n}$. Consequently, the remaining term is asymptotically negligible compared to the dominant term in the expansion of $\sqrt{n}(\bm{\hat\theta}_n-\bm\theta_0)$.

\medskip

In the following lemma we prove that $\tilde J(\bm\theta_0)$ is invertible under regularity condition~\ref{c:w0:nonsingular}, which in turn implies the invertibility of $\tilde I(\bm\theta_0)$.

\begin{lemmaA} \label{lem:nonsingular}
    If \ref{c:w0:nonsingular}, then the Jacobian matrix $\tilde J(\bm\theta) = \frac{d}{d\bm\theta}\tilde G(\bm\theta)$ is nonsingular at $\bm\theta=\bm\theta_0$. 
\end{lemmaA}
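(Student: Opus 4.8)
The plan is to exploit the information identity already established immediately above, namely $\tilde J(\bm\theta_0)=-\tilde I(\bm\theta_0)$ with $\tilde I(\bm\theta_0)=\mathbb{E}[S(\bm\theta_0,W_0;\bm o)S(\bm\theta_0,W_0;\bm o)']$. Because $\tilde I(\bm\theta_0)$ is an expectation of outer products it is positive semidefinite, so proving that $\tilde J(\bm\theta_0)$ is nonsingular is equivalent to proving that $\tilde I(\bm\theta_0)$ is strictly positive definite. I would therefore argue by contradiction: suppose there is a vector $\bm v\neq\bm 0$ with $\bm v'\tilde I(\bm\theta_0)\bm v=0$. Since $\bm v'\tilde I(\bm\theta_0)\bm v=\mathbb{E}[(\bm v'S(\bm\theta_0,W_0;\bm o))^2]$, this forces $\bm v'S(\bm\theta_0,W_0;\bm o)=0$ almost surely, and the whole argument reduces to showing that no nontrivial linear combination of the observed-data score can vanish almost surely.

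The structural fact I would use is that the observed-data score is the posterior mean of the complete-data score: writing the conditional law of $Z$ given $\bm o$ as proportional to $f_{Y|\bm X,Z}(y|\bm x,z;\bm\theta_0)\,dW_0(z)$ on $\lfloor z_l,z_r\rfloor$ (valid under Assumptions~\ref{A1}--\ref{A3}), one has $S(\bm\theta_0,W_0;\bm o)=\mathbb{E}[\mathcal{S}(\bm\theta_0;Y,\bm X,Z)\mid \bm O=\bm o]$. On the exact observations $\lfloor z_l,z_r\rfloor=[z,z]$—which the mixed interval-censoring framework of Section~\ref{s:model} admits, and whose probability tends to one under the hypotheses of~\autoref{thm:TBconsistent}—this posterior collapses to a point mass, so that $S(\bm\theta_0,W_0;\bm o)=\mathcal{S}(\bm\theta_0;y,\bm x,z)$. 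Consequently the almost-sure identity $\bm v'S=0$ transfers to $\bm v'\mathcal{S}(\bm\theta_0;Y,\bm X,Z)=0$ almost surely under the complete-data law.

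It then remains to rule this out using~\ref{c:w0:nonsingular}. By the complete-data information identity, $\mathbb{E}[\mathcal{S}\mathcal{S}']=-\mathbb{E}[\tfrac{d}{d\bm\theta}\mathcal{S}(\bm\theta;Y,\bm X,Z)\rvert_{\bm\theta_0}]$, and the full column rank of $\tfrac{d}{d\bm\theta}\mathcal{S}$ assumed in~\ref{c:w0:nonsingular}—which for a GLM amounts to no perfect multicollinearity among $(1,\bm X,Z)$ together with identifiability of $\phi$—makes this complete-data Fisher information positive definite. Hence $\bm v'\mathcal{S}(\bm\theta_0;Y,\bm X,Z)=0$ almost surely yields $\bm v'\mathbb{E}[\mathcal{S}\mathcal{S}']\bm v=0$ and therefore $\bm v=\bm 0$, the desired contradiction. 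Invertibility of $\tilde I(\bm\theta_0)$ then gives invertibility of $\tilde J(\bm\theta_0)=-\tilde I(\bm\theta_0)$, as well as of the $\tilde I(\bm\theta_0)^{-1}$ appearing in the limiting covariance.

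I expect the delicate step to be the transfer from an observed-data null vector to a complete-data null vector in the second paragraph. It is clean precisely because exact observations occur with positive probability, but if one wished to dispense with them the argument would have to show that $\bm v'\mathbb{E}[\mathcal{S}\mid\bm O]=0$ almost surely already implies $\bm v'\mathcal{S}=0$ pointwise, which would require a completeness-type property of the conditional family rather than the mild rank condition used here. I would therefore lean on the mixed-censoring structure, consistent with the rest of the paper, and flag this as the only nontrivial point.
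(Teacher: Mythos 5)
Your proposal takes a genuinely different route from the paper's, but it contains a genuine gap.

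On the route: the paper never forms the observed-data information $\mathbb{E}[SS']$ at all. It differentiates $\tilde G(\bm\theta)$ under the integral sign, writes $\tilde J(\bm\theta_0)$ as an expectation, over the joint complete-data law of $(Y,\bm X,Z_L,Z_R,Z)$, of the matrix $\frac{d}{d\bm\theta}\mathcal{S}(\bm\theta_0;y,\bm x,z)+\mathcal{S}\mathcal{S}'$, and then argues pointwise invertibility of that integrand via the Sherman--Morrison identity under condition~\ref{c:w0:nonsingular}. Your frame --- use the identity $\tilde J(\bm\theta_0)=-\tilde I(\bm\theta_0)$ established just above the lemma, reduce nonsingularity to strict positive definiteness of $\mathbb{E}[SS']$, and hunt for a null vector $\bm v$ --- is legitimate and in one respect sharper: it isolates exactly what has to be proved (no nontrivial linear combination of the observed-data score vanishes almost surely), whereas pointwise invertibility of an integrand does not by itself give invertibility of its expectation.

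The gap is the transfer step, and it is not merely ``delicate'' but unavailable under the lemma's hypotheses. The lemma is asserted under condition~\ref{c:w0:nonsingular} together with the standing conditions; nothing there guarantees that exact observations $[z,z]$ occur with positive probability. The mixed-censoring framework of Section~\ref{s:model} \emph{admits} exact observations but does not require them --- under pure case II interval censoring there are none --- and appealing to Theorem~\ref{thm:TBconsistent} does not help, since its hypothesis (exact-observation probability tending to one as $n\to\infty$) is particular to that theorem and is not imposed here. Moreover, even when $Pr(Z_L=Z_R)>0$, your argument only yields $\bm v'\mathcal{S}(\bm\theta_0;Y,\bm X,Z)=0$ almost surely \emph{on the exact-observation event}, and conditioning on that event changes the law of $(Y,\bm X,Z)$: in the carotenoid example, exactness forces every $Q_c\geq{\rm LoQ}_c$, confining $Z$ to a strict subregion of $\Omega$. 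Your contradiction therefore needs positive definiteness of the complete-data information restricted to that subpopulation, not of the unconditional $\mathbb{E}[\mathcal{S}\mathcal{S}']$; the phrase ``almost surely under the complete-data law'' silently upgrades a conditional statement to an unconditional one. You flag the first problem yourself; the second goes unflagged. Finally, your closing step (pointwise full column rank of $\frac{d}{d\bm\theta}\mathcal{S}$ implies positive definiteness of $-\mathbb{E}[\frac{d}{d\bm\theta}\mathcal{S}]$) has the same expectation-versus-pointwise leap as the paper's Sherman--Morrison argument, although for GLMs it can be repaired directly from the no-multicollinearity condition. The paper's proof, whatever the merits of its own final step, avoids all of these issues because it works with the complete-data integrand inside $\tilde J$ and never conditions on the censoring pattern.
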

\begin{proof}
            
    Recall that under Assumptions~\ref{A1} to~\ref{A3}, the law of $\bm O$ can be expressed as
    \begin{align*}
        &Pr\big(Y\in dy_i, \bm X\in d\bm{x_i},
        Z_L\in dz_{l_i}, Z_R\in dz_{r_i}, 
        Z_i\in\lfloor z_{l_i},z_{r_i} \rfloor\big) 
        \\
        &=\Big( \int_{z_{l}}^{z_{r}} f_{Y|\bm{X},Z}(y|\bm x,z;\bm\theta)\, dW(z) \Big) \cdot
        \Big(K(z_{l}, z_{r}, \bm x) Pr(\bm X\in d\bm{x}) \Big).
    \end{align*}

    Therefore,
    \begin{align*}
        \frac{d}{d\bm\theta}&\tilde G(\bm\theta) =\frac{d}{d\bm\theta} \mathbb{E}\big[S(\bm{\theta}, W_0; \bm{o})\big] \\[1ex]
       &=\frac{d}{d\bm\theta} \int_{{\rm Sup}(\bm O)}
         \Big\{ \int_{z_{l}}^{z_{r}}
            \mathcal{S}(\bm\theta; y,\bm x,z)
            f_{Y|\bm{X},Z}(y|\bm x,z;\bm\theta) 
            \, dW_0(z)
        \Big\}
        \Big(K(z_{l}, z_{r}, \bm x) Pr(\bm X\in d\bm{x}) \Big)\\
        & = \int_{{\rm Sup}(\bm O)}  
         \Big\{ \int_{z_{l}}^{z_{r}}
             \frac{d}{d\bm\theta}\Big[\mathcal{S}(\bm\theta; y,\bm x,z)
            f_{Y|\bm{X},Z}(y|\bm x,z;\bm\theta) \Big]
            \, dW_0(z)
        \Big\}
        \Big(K(z_{l}, z_{r}, \bm x) Pr(\bm X\in d\bm{x}) \Big) = *
    \end{align*}
    Now,
    \begin{align*}
        &\frac{d}{d\bm\theta}\Big[\mathcal{S}(\bm\theta; y,\bm x,z) f_{Y|\bm{X},Z}(y|\bm x,z;\bm\theta) \Big]\\
        & =  
        \frac{d \mathcal{S}(\bm\theta; y,\bm x,z)}{d \bm\theta} f_{Y|\bm{X},Z}(y|\bm x,z;\bm\theta) 
        + \mathcal{S}(\bm\theta; y,\bm x,z) \frac{d f_{Y|\bm{X},Z}(y|\bm x,z;\bm\theta)}{d\bm\theta}\\
        & = \frac{d \mathcal{S}(\bm\theta; y,\bm x,z)}{d \bm\theta} f_{Y|\bm{X},Z}(y|\bm x,z;\bm\theta) 
        + \mathcal{S}(\bm\theta; y,\bm x,z)\mathcal{S}(\bm\theta; y,\bm x,z)' f_{Y|\bm{X},Z}(y|\bm x,z;\bm\theta).
    \end{align*}
    Hence,
    \begin{align*}
        & * = \int_{{\rm Sup}(\bm O)} \int_{z_{l}}^{z_{r}}
         \Big[\frac{d}{d\bm\theta} \mathcal{S}(\bm\theta; y,\bm x,z) + \mathcal{S}(\bm\theta; y,\bm x,z)\mathcal{S}(\bm\theta; y,\bm x,z)' \Big] \cdot \\        
         & \quad Pr\big(Y\in dy_i, \bm X\in d\bm{x_i}, Z_L\in dz_{l_i}, Z_R\in dz_{r_i}, Z_i\in dz\big) dz
    \end{align*}
    where $\frac{d}{d\bm\theta} \mathcal{S}(\bm\theta; y,\bm x,z)$ is the Jacobian matrix for given $z$. 

    By condition \ref{c:w0:nonsingular}, the matrix $\frac{d}{d\bm\theta} \mathcal{S}(\bm\theta; y,\bm x,z)$ is non-singular at $\bm\theta_0$.    
    The Sherman–Morrison principle states that if a matrix $A$ is invertible and $u$ is a vector, then
    \[A + uu' \text{ invertible} \quad{\rm iff}\quad u'A^{-1}u \neq -1.\]
    Since $\mathcal{S}(\bm\theta; y,\bm x,z)'\cdot \frac{d}{d\bm\theta} \mathcal{S}(\bm\theta; y,\bm x,z)^{-1}\cdot\mathcal{S}(\bm\theta; y,\bm x,z) = -1$ is an unlikely degenerate case, we conclude that $\tilde J(\bm\theta)$ is non-singular at $\bm\theta_0$.
        
\end{proof}

\subsubsubsection{Proof of Theorem \ref{thm:asymnorm}}

Assume that conditions \ref{c:iid} to \ref{c:compactness} for the GELc estimator $\bm{\hat\theta}_n$ to be consistent are satisfied and $\bm\theta_0\in {\rm int}(\Theta)$. Following \cite{chen2003}, define the shrinking sets $\Theta_\delta = \{\theta \in \Theta : \|\theta - \theta_0\| \leq \delta\}$ and $\mathcal W_\delta = \{ W\in\mathcal W : \sup_{\bm\theta\in\Theta_\delta} \|W(\cdot;\bm\theta)-W_0(\cdot)\| \leq\delta\}$ for some arbitrary small $\delta > 0$.

Recall 
\begin{align*}
    &g_n\big(\bm\theta,\widehat W_n(\cdot;\bm\theta)\big) = n^{-1} \sum_{i=1}^n S(\bm\theta,\widehat W_n(\cdot;\bm\theta);\bm o_i), \\
    &G\big( \bm\theta,\widehat W_n(\cdot;\bm\theta) \big) = \mathbb{E}\big[ S(\bm\theta,\widehat W_n(\cdot;\bm\theta);\bm o)\big].
\end{align*}

\medskip

The following regularity conditions are required to establish the asymptotic normality of $\bm{\hat\theta}_n$:

\begin{enumerate}[label=(R\arabic*), leftmargin=*]
    
    \item \label{c:c1theta}
    The function $S(\bm\theta, W_0; \bm o)$ is differentiable in $\Theta_\delta$, and the derivative $\frac{d}{d\bm\theta}S(\bm\theta, W_0; \bm o)$ is continuous at $\bm\theta=\bm\theta_0$

    \item \label{c:nonsingular}
    The Jacobian matrix for given $z$, $\frac{d}{d\bm\theta} \mathcal{S}(\bm\theta; y,\bm x,z)$, is nonsingular at $\theta_0$, ie it has full column rank.
    
    \item \label{c:c1w}
    For all $\bm\theta\in\Theta_\delta$, $S(\bm\theta, W; \bm o)$ is differentiable at $W_0$ in all directions $[W-W_0]\in\mathcal W$, meaning that $\{W_0 + h(W - W_0) : h \in [0, 1]\} \subset \mathcal{W}$ and the limit 
    \[
    \lim_{h\to 0} \frac{1}{h} \Big[ 
    S\big(\bm\theta, W_0(\cdot) + h\{W(\cdot;\bm\theta)-W_0(\cdot)\}; \bm o\big) -
    S(\bm\theta, W_0; \bm o )
    \Big] 
    \]
    exists for all $W$ such that $[W-W_0]\in\mathcal W$. The limit defines the gateaux derivative of $S(\bm\theta, W; \bm o)$ at $W_0$ in the direction $[W-W_0]$, denoted by $D S(\bm\theta, W_0; \bm o)[W-W_0]$.
     
    Consider a positive sequence $\delta_n$ such that $\delta_n\xrightarrow{p} 0$. 
    For all $(\bm\theta, W)\in \Theta_{\delta_n}\times\mathcal{W}_{\delta_n}$,
    \begin{enumerate}
        \item[(i)] $\|S(\bm\theta, W; \bm o) - S(\bm\theta, W_0; \bm o) - D S(\bm\theta, W_0; \bm o)[W-W_0]\| \le b(\bm o) \|W - W_0\|_{\infty}^2$,  with $\mathbb{E}[b(\bm o)]<\infty$.
        \item[(ii)] $\|D S(\bm\theta, W_0; \bm o)[W-W_0] - D S(\bm\theta_0, W_0; \bm o)[W-W_0] \| \le o(1) \delta_n$.
    \end{enumerate}
    
    \item \label{c:what}
    $\|\widehat W_n - W_0\|_{\infty} = o_p(n^{-1/4})$; and
    $\widehat W_n\in\mathcal W$ with probability tending to one.
\end{enumerate}

\begin{lemmaA}
    \label{lem:orthogonality}
    Let $\widehat{W}_n$ be a consistent estimator of $W_0$ such that $ \sup_{\bm\theta\in\Theta}\|\widehat W_n(\cdot;\bm\theta) - W_0(\cdot)\|_{\infty} = o_p(n^{-1/4})$, and suppose the moment function $S(\bm\theta,W;\bm o)$ is gateaux differentiable in $W$ at $W_0$, and behaves smoothly around $(\bm\theta_0, W_0)$ as \ref{c:c1theta} and \ref{c:c1w}.
 Then, the moment function is asymptotically orthogonal along the direction $[\widehat{W}_n - W_0]$, that is,
    \[
        \mathbb{E}\big[ D S(\bm\theta_0, W_0; \bm o)[\widehat{W}_n - W_0] \big] = o_p(n^{-1/2}).
    \]
\end{lemmaA}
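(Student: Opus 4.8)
The plan is to interchange the expectation with the Gateaux derivative and recognize $\mathbb{E}\big[DS(\bm\theta_0,W_0;\bm o)[v]\big]$, with $v=\widehat W_n-W_0$, as the directional $W$-derivative at $(\bm\theta_0,W_0)$ of the population moment $\Gamma(\bm\theta,W)=\mathbb{E}\big[S(\bm\theta,W;\bm o)\big]$. Since this is a bounded linear functional of $v$ by the Gateaux differentiability in~\ref{c:c1w} and the smoothness in~\ref{c:c1theta}, I would establish the claim by computing the functional explicitly, annihilating its leading piece with the mean-zero score identity, and controlling the remainder through the vanishing-censoring regime that already underlies~\ref{c:consistent-w}.

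First I would write $S=N/D$ with $N=\int_{z_l}^{z_r}\mathcal S(\bm\theta_0;y,\bm x,z)\,f_0\,dW(z)$ and $D=\int_{z_l}^{z_r}f_0\,dW(z)$, where $f_0=f_{Y|\bm X,Z}(y|\bm x,z;\bm\theta_0)$, and differentiate by the quotient rule to obtain
\[
DS(\bm\theta_0,W_0;\bm o)[v]=\frac{\int_{z_l}^{z_r}\mathcal S(\bm\theta_0;y,\bm x,z)\,f_0\,dv}{\int_{z_l}^{z_r}f_0\,dW_0}-S(\bm\theta_0,W_0;\bm o)\,\frac{\int_{z_l}^{z_r}f_0\,dv}{\int_{z_l}^{z_r}f_0\,dW_0}.
\]
Taking $\mathbb{E}_{\bm O}$ and inserting the factorized law of $\bm O$ from~\appref{AppA1}, namely $\big(\int_{z_l}^{z_r}f_0\,dW_0\big)K(z_l,z_r,\bm x)\,Pr(\bm X\in d\bm x)$, the denominator $\int_{z_l}^{z_r}f_0\,dW_0$ cancels against the density factor in both terms. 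In the first term one is then left integrating $\mathcal S(\bm\theta_0;y,\bm x,z)f_0$ over $y$ at fixed $z$, which vanishes by the fixed-$z$ score identity $\int \mathcal S(\bm\theta_0;y,\bm x,z)\,f_{Y|\bm X,Z}(y|\bm x,z;\bm\theta_0)\,dy=0$. The first term therefore contributes nothing, and the functional reduces to the negative cross-covariance
\[
\mathbb{E}\big[DS(\bm\theta_0,W_0;\bm o)[v]\big]=-\,\mathbb{E}\!\left[S(\bm\theta_0,W_0;\bm o)\,\frac{\int_{z_l}^{z_r}f_0\,dv}{\int_{z_l}^{z_r}f_0\,dW_0}\right]
\]
between the $\bm\theta$-score $S(\bm\theta_0,W_0;\bm o)$ and the $W$-score $\int f_0\,dv/\int f_0\,dW_0$.

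To bound this cross-covariance I would split $\mathbb{E}_{\bm O}$ into contributions from exact observations ($z_l=z_r$) and from censored ones. On an exact observation the conditioning interval collapses to a point, both scores reduce to their complete-data versions, and the same identity $\int\mathcal S f_0\,dy=0$ forces the integrand to vanish pointwise; thus only censored observations survive. Each censored contribution is $O(\|v\|_\infty)$, after integrating by parts in $z$ and using the envelope bound on the score in~\ref{c:holder}. Invoking the assumption underlying~\autoref{thm:TBconsistent}, and hence~\ref{c:consistent-w}, that the probability of observing an exact value tends to one, the total censored contribution is of order $\|v\|_\infty$ times a vanishing censoring probability; combined with $\|\widehat W_n-W_0\|_\infty=o_p(n^{-1/4})$ this delivers $o_p(n^{-1/2})$.

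The main obstacle is precisely this last bound on the censored part. Because the per-observation cross term does not merely vanish at $z_l=z_r$ but decays continuously as the interval width shrinks, the sharp statement requires tracking the joint rate at which the censoring intervals concentrate on points and at which $\widehat W_n$ converges, so that the residual genuinely beats $n^{-1/2}$ rather than only $n^{-1/4}$. Quantifying this interplay — playing the faster-than-$n^{-1/4}$ rate of $\widehat W_n$ against the decay of the censoring probability and of the interval widths — is the delicate heart of the argument, whereas the exact-observation cancellation and the score-identity step are comparatively routine.
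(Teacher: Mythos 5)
Your reduction of $\mathbb{E}\big[DS(\bm\theta_0,W_0;\bm o)[v]\big]$ to a cross-covariance is correct and is, up to the order of differentiation, the same computation as the paper's: the paper first computes $D\ell(\bm\theta,W_0;\bm o)[\Delta]=L(\bm\theta,\Delta;\bm o)/L(\bm\theta,W_0;\bm o)$ by linearity of $L$ in $W$ and then differentiates in $\bm\theta$, arriving at $\int\big(S(\bm\theta_0,W;\bm o)-S(\bm\theta_0,W_0;\bm o)\big)L(\bm\theta_0,W;\bm o)K$, which, after using the normalization identity $\int\frac{d}{d\bm\theta}L\,K=0$ and $\mathbb{E}[S(\bm\theta_0,W_0;\bm o)]=0$, is exactly your expression $-\mathbb{E}\big[S(\bm\theta_0,W_0;\bm o)\,\big(\int_{z_l}^{z_r} f_0\,dv\big)/\big(\int_{z_l}^{z_r} f_0\,dW_0\big)\big]$. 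Your quotient-rule derivation and the fixed-$z$ score identity $\int\mathcal{S}(\bm\theta_0;y,\bm x,z)f_0\,dy=0$ are a clean equivalent of that part of the paper's argument.

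The genuine gap is in your closing step. The bound you obtain for the censored contribution is of order $\|v\|_\infty\cdot Pr(\text{censored})$; with $\|v\|_\infty=o_p(n^{-1/4})$ and a censoring probability that is merely $o(1)$ — which is all the hypothesis of \autoref{thm:TBconsistent} provides, since it asserts convergence to one of the exactness probability with no rate — the product is only $o_p(n^{-1/4})$, not $o_p(n^{-1/2})$. To close the argument your way you would need $Pr(\text{censored})=o(n^{-1/4})$, an assumption made nowhere in the paper; moreover, vanishing censoring is not among the hypotheses of this lemma at all (it is stated under \ref{c:c1theta}, \ref{c:c1w} and the $o_p(n^{-1/4})$ rate, and it must hold in the standard regime where a non-negligible fraction of observations remain interval-censored, otherwise the method itself would be unnecessary). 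The paper's closing mechanism is different: it invokes the second-order condition \ref{c:c1w}(i), namely the quadratic remainder bound $\|S(\bm\theta,W;\bm o)-S(\bm\theta,W_0;\bm o)-DS(\bm\theta,W_0;\bm o)[W-W_0]\|\le b(\bm o)\|W-W_0\|_\infty^2$ with $\mathbb{E}[b(\bm o)]<\infty$, so that the nuisance-estimation error enters the expansion squared and $\big(o_p(n^{-1/4})\big)^2=o_p(n^{-1/2})$. That quadratic mechanism — not a vanishing-censoring argument — is what converts the faster-than-$n^{-1/4}$ rate of $\widehat W_n$ into the faster-than-$n^{-1/2}$ conclusion, and it is the ingredient your proof is missing; your own final paragraph correctly identifies the difficulty but does not resolve it.
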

\begin{proof}
    Let $\Delta = [W-W_0]\in\mathcal W$ denote any perturbation of $W$. Since $S(\bm\theta,W;\bm o) = \frac{d}{d\bm\theta} \ell(\bm\theta,W;\bm o)$ is the score vector,
    \[ 
    \mathbb{E}\big[ D S(\bm\theta_0,W_0;\bm o)[\Delta]\big] = \mathbb{E}\bigg[ \frac{d}{d\bm\theta}
    D \ell(\bm\theta,W_0;\bm o)[\Delta]
    \bigg]\rvert_{\bm\theta=\bm\theta_0}.
    \]    
    Using the same results as in~\appref{AppA3},
    \begin{gather*}
        D \ell(\bm\theta,W_0;\bm o)[\Delta] = 
         \lim_{h\to0} \frac{1}{h} \big(\ell(\bm\theta,W_0+h\Delta;\bm o) - \ell(\bm\theta,W_0;\bm o)\big) =\\
        \lim_{h\to0} \frac{1}{h} \log\frac{L(\bm\theta,W_0;\bm o)+hL(\bm\theta,\Delta;\bm o)}{L(\bm\theta,W_0;\bm o)} =
        \frac{L(\bm\theta,\Delta;\bm o)}{L(\bm\theta,W_0;\bm o)}.
    \end{gather*}
    
    Now, replacing $\Delta = [W-W_0]$,
    \begin{gather*}
        \frac{d}{d\bm\theta}\frac{L(\bm\theta,\Delta;\bm o)}{L(\bm\theta,W_0;\bm o)} = \frac{d}{d\bm\theta}\bigg( \frac{L(\bm\theta,W;\bm o)}{L(\bm\theta,W_0;\bm o)} - 1\bigg) = \\[1ex]
        \frac{\big(\frac{d}{d\bm\theta}L(\bm\theta,W;\bm o)\big)\cdot L(\bm\theta,W_0;\bm o) -
        L(\bm\theta,W;\bm o)\cdot\big(\frac{d}{d\bm\theta}L(\bm\theta,W_0;\bm o)\big)}{L(\bm\theta,W_0;\bm o)^2} = \\[1ex]
        \frac{\frac{d}{d\bm\theta}L(\bm\theta,W;\bm o)}{L(\bm\theta,W_0;\bm o)} - 
        \frac{L(\bm\theta,W;\bm o)}{L(\bm\theta,W_0;\bm o)}\cdot
        \frac{\frac{d}{d\bm\theta}L(\bm\theta,W_0;\bm o)}{L(\bm\theta,W_0;\bm o)} =\\
        \frac{L(\bm\theta,W;\bm o)}{L(\bm\theta,W_0;\bm o)} 
        \Big( S(\bm\theta,W;\bm o) - S(\bm\theta,W_0;\bm o) \Big)
    \end{gather*}  
    since $S(\bm\theta,W;\bm o) = \frac{d}{d\bm\theta} \ell(\bm\theta,W;\bm o) = \frac{\frac{d}{d\bm\theta}L(\bm\theta,W;\bm o)}{L(\bm\theta,W;\bm o)}$.

    Recall that, under Assumptions~\ref{A1}) to~\ref{A3}, the true law of $\bm O = (Y,Z_L,Z_R,\bm X)$ can be expressed as
    \begin{gather*}
        Pr\big(Y\in dy, \bm X\in d\bm{x},
        Z_L\in dz_{l}, Z_R\in dz_{r}, 
        Z\in\lfloor z_{l},z_{r} \rfloor\big) =   \\
        \Big( \int_{z_{l}}^{z_{r}} f_{Y|\bm{X},Z}(y|\bm x,z;\bm\theta_0)\, dW_0(z) \Big) \cdot
        \Big(K(z_{l}, z_{r}, \bm x) Pr(\bm X\in d\bm{x}) \Big) =\\
        L(\bm\theta_0,W_0;\bm o) K(z_{l}, z_{r}, \bm x) Pr(\bm X\in d\bm{x}).
    \end{gather*}
    Evaluating at $\bm\theta_0$ and taking expectation,
    \begin{gather*}
        \mathbb{E}\bigg[
        \frac{d}{d\bm\theta}D \ell(\bm\theta,W_0;\bm o)[\Delta]
        \rvert_{\bm\theta=\bm\theta_0}\bigg] = \\
        \mathbb{E}\bigg[ 
        \frac{L(\bm\theta_0,W;\bm o)}{L(\bm\theta_0,W_0;\bm o)} 
        \bigg( S(\bm\theta_0,W;\bm o) - S(\bm\theta_0,W_0;\bm o) \bigg)
        \bigg] =\\
        \int_{{\rm Sup}(\bm O)} 
        \big( S(\bm\theta_0,W;\bm o) - S(\bm\theta_0,W_0;\bm o) \big) 
        L(\bm\theta_0,W;\bm o)\, K(z_{l}, z_{r}, \bm x) Pr(\bm X\in d\bm{x}) dy\, dz_l\, dz_r\,  d\bm x,
    \end{gather*}
    which is exactly zero only for $W=W_0$, and it is $o_p(n^{-1/2})$ for $W=\widehat W_n$. This rate relies on condition \ref{c:c1w}(i), so that the impact of estimating $W$ vanishes faster than $n^{-1/2}$ as long as $\widehat W_n$ converges to $W_0$ at a rate faster than $n^{-1/4}$.
\end{proof}

The first-order Taylor expansion of $g_n\big(\bm\theta,W\big)$ around $(\bm\theta_0,W_0)$ develops into
\begin{equation}\label{eq:AsEquiv1}
    \sqrt{n}(\bm{\hat\theta}_n-\bm\theta_0) \approx
    -J_n(\bm\theta_0,W_0)^{-1} \sqrt{n} \big\{ g_n(\bm\theta_0,W_0) + D g_n(\bm\theta_0,W_0)[\widehat W_n-W_0] \big\},
\end{equation}
where
$J_n(\bm\theta_0,W_0)= \frac{d}{d \bm\theta}g_n(\bm\theta,W_0)\rvert_{\bm\theta=\bm\theta_0}$ is the Jacobian of $g_n$ with respect to $\bm\theta$, and
$D g_n(\bm\theta_0,W_0)[\widehat W_n-W_0]$ is the gateaux derivative of $g_n$ at $W_0$ in the direction $[\widehat W_n-W_0]$. These derivatives are well defined under the regularity conditions  \ref{c:c1theta} and \ref{c:c1w}.

\medskip

On one hand, the Jacobian can be expressed as the sample average
\[
J_n(\bm\theta_0,W_0) = n^{-1} \sum_{i=1}^n \frac{d}{d \bm\theta} S(\bm\theta, W_0; \bm o_i) \rvert_{\bm\theta=\bm\theta_0}.
\]
To apply the law of large numbers, we need that the random variable $\frac{d}{d \bm\theta} S(\bm\theta, W_0; \bm o) \rvert_{\bm\theta=\bm\theta_0}$ has finite expectation. A sufficient condition for this is
\[
\mathbb{E} \left[ \sup_{\bm\theta \in \Theta_\delta} \left\| \frac{d}{d \bm\theta} S(\bm\theta, W_0; \bm o) \right\| \right] < \infty,
\]
which is satisfied due to the Hölder continuity assumption in \ref{c:holder} together with the smoothness condition \ref{c:c1theta}.

Therefore, by the law of large numbers, $J_n(\bm\theta_0,W_0)\to  J(\bm\theta_0,W_0)$ in probability as $n\to\infty$, where
\[
J(\bm\theta_0,W_0) = \frac{d}{d \bm\theta} G(\bm\theta,W_0)\rvert_{\bm\theta=\bm\theta_0} = \mathbb{E}\big[\frac{d}{d \bm\theta} S(\bm\theta, W_0; \bm o) \rvert_{\bm\theta=\bm\theta_0}\big].
\]

\medskip
On the other hand,
\begin{gather}
    \big\{ g_n(\bm\theta_0,W_0) + D g_n(\bm\theta_0,W_0)[\widehat W_n-W_0] \big\} = \notag\\
    \Big( n^{-1} \sum_{i=1}^n S(\bm\theta_0,W_0;\bm o_i) \Big) +
    \Big( n^{-1} \sum_{i=1}^n D S(\bm\theta_0,W_0;\bm o_i)[\widehat W_n-W_0] \Big). \label{eq:limiting}
\end{gather}
In Lemma \ref{lem:orthogonality} we showed that under regularity conditions \ref{c:c1theta} and \ref{c:c1w}, if $
\|\widehat W_n - W_0\|_{\infty} = o_p(n^{-1/4})$ \ref{c:what} then 
\[
    \mathbb{E}\big[ D S(\bm\theta_0, W_0; \bm o)[\widehat{W}_n - W_0] \big] = o_p(n^{-1/2}).
\]
In addition, under the given regularity conditions, the random variable $D S(\bm\theta_0, W_0; \bm o)[\widehat{W}_n - W_0]$ has uniformly bounded variance. As a result, the second term in \eqref{eq:limiting} is asymptotically negligible for the limiting distribution of $n^{1/2}(\bm{\hat\theta}_n-\bm\theta_0)$. 
By the Central Limit Theorem, it is straightforward that:
\[
\sqrt{n} \big\{ g_n(\bm\theta_0,W_0) + D g_n(\bm\theta_0,W_0)[\widehat W_n-W_0] \big\} \xrightarrow[]{d} N\big(0, I(\bm\theta_0,W_0)\big)
\]
where
\[
I(\bm\theta_0,W_0) = \rm{Var}\big( S(\bm\theta_0,W_0;\bm o)\big) = \mathbb{E} \big[ S(\bm\theta_0,W_0;\bm o)S(\bm\theta_0,W_0;\bm o)' \big]
\]
denotes the expected information matrix. 
The condition $\mathbb{E}[\|S(\bm\theta_0,W_0;\bm o)\|^2]<\infty$, required for the CLT to apply, is implied by the Hölder continuity condition \ref{c:holder}.

\bigskip

In Theorem 2, \cite{chen2003} give sufficient conditions under which the second-order remainder in the Taylor expansion is $o_p(1)$, so that $ \sqrt{n}(\widehat{\bm\theta}_n - \bm\theta_0) $ is asymptotically equivalent to the right-hand side of \eqref{eq:AsEquiv1}. Besides the regularity conditions stated at the beginning of the proof, it is required that 
\begin{itemize}
    \item The estimator $\bm{\hat\theta}_n$ approximately minimizes the sample moment condition within an $o_p(n^{-1/2})$ neighborhood. Formally,
    \[
    \| g_n\big(\bm{\hat\theta}_n,\widehat W_n(\cdot;\bm{\hat\theta}_n)\big) \| =
        \inf_{\bm\theta\in\Theta_\delta} \| g_n\big(\bm\theta,\widehat W_n(\cdot;\bm\theta)\big) \| +o_p(1/\sqrt{n}).
    \]
    As mentioned in the proof of Theorem 3, this point is ensured via the numerical optimization algorithm.

    \item Stochastic equicontinuity condition: for all positive sequences $\delta_n=o_p(1)$,
    \begin{equation*}
        \sup_{\substack{\bm\theta\in\Theta_{\delta_n},\\
        W\in\mathcal W_{\delta_n}}}
        \| g_n(\bm\theta,W) - G(\bm\theta,W) - g_n(\bm\theta_0,W_0) \| = o_p(n^{-1/2}).
    \end{equation*}
    This is proven in Lemma \ref{lem:stochastic_equicontinuity}.
\end{itemize}

Wrapping up,
\[
J_n(\bm\theta_0,W_0)\xrightarrow[]{p}  J(\bm\theta_0,W_0)
\]
\[
\sqrt{n} \big\{ g_n(\bm\theta_0,W_0) + D g_n(\bm\theta_0,W_0)[\widehat W_n-W_0] \big\} \xrightarrow[]{d} N\big(0, I(\bm\theta_0,W_0)\big).
\]
Thus, by Slutsky's theorem, 
\[-J_n(\bm\theta_0,W_0)^{-1} \sqrt{n} \big\{ g_n(\bm\theta_0,W_0) + D g_n(\bm\theta_0,W_0)[\widehat W_n-W_0] \big\} \xrightarrow[]{d}  N\big(0, \Sigma\big)\]
where
\[
\Sigma = J(\bm\theta_0,W_0)^{-1}\, I(\bm\theta_0,W_0)\, (J(\bm\theta_0,W_0)^{-1})',
\]
which, because of $I(\bm\theta_0,W_0) = -J(\bm\theta_0,W_0)$,  simplifies to 
\[
\Sigma = I(\bm\theta_0,W_0)^{-1}.
\]
In Lemma \ref{lem:nonsingular} we proved that, under condition \ref{c:nonsingular}, the Jacobian $J(\bm\theta_0,W_0)$ is invertible, which guarantees that the covariance matrix $\Sigma$ is well defined.


\printbibliography[heading=subbibliography, filter=appendixOnlyFilter]

\clearpage
\addappendix{Appendix B}

\subsection{Metrics and MCSE formulae}
\label{AppB1}

Relative bias (RelBias; \%), empirical standard error (EmpSE), root mean squared error (RMSE), and the coverage probability of 95\% confidence intervals (CP;\%) are computed over $R=500$ replications. Relative bias is reported for all parameters, except for the intercept $\alpha$ in the logistic regression scenario, for which the true value is zero and absolute bias (Bias;\%) is reported instead.

The expressions are given for the parameter $\gamma$, analogous formulas were used for $\alpha$ and $\phi$. Let $\bar{\gamma} = \frac{1}{R} \sum_{r=1}^R \hat{\gamma}^{(r)}$, then
\begin{gather*}
    \widehat{\rm RelBias}(\hat\gamma) = \frac{1}{R} \sum_{r=1}^R (\hat\gamma^{(r)} - \gamma_{0})/\gamma_{0}, \\[2ex]
    \widehat{\rm EmpSE}(\hat{\gamma}) = \sqrt{\frac{1}{R-1} \sum_{r=1}^R \left( \hat{\gamma}^{(r)} - \bar{\gamma} \right)^2}, \\[2ex]
    \widehat{\rm RMSE}(\hat{\gamma}) = \sqrt{\frac{1}{R} \sum_{r=1}^R \left( \hat{\gamma}^{(r)} - \gamma_0 \right)^2}, \\[2ex]
    \widehat{\rm CP}(\hat{\gamma}) = \frac{1}{R} \sum_{r=1}^{R} \mathbb{1}\{
{\rm LB}_{\hat\gamma}^{(r)} \leq \gamma_{0} \leq {\rm UB}_{\hat\gamma}^{(r)} \},
\end{gather*}
where ${\rm LB}_{\hat\gamma}^{(r)}$ and ${\rm UB}_{\hat\gamma}^{(r)}$ denote the lower and upper bounds of the 95\% confidence interval in the $r$th replication.

Monte Carlo standard errors of the estimated performance measures are computed as follows:
\begin{gather*}
    \text{MCSE}(\widehat{\text{RelBias}}(\hat\gamma)) = 
\sqrt{\frac{1}{R(R-1)} \sum_{r=1}^R 
\big( \text{rb}^{(r)} - \widehat{\text{RelBias}}(\hat\gamma) \big)^2},    \\[2ex]
\text{MCSE}(\widehat{\text{EmpSE}}(\hat\gamma)) = 
\widehat{\text{EmpSE}}(\hat\gamma) / \sqrt{2(R-1)},                       \\[2ex]
\text{MCSE}(\widehat{\text{RMSE}}(\hat\gamma)) = 
\frac{1}{2\, \widehat{\text{RMSE}}(\hat\gamma)}
\sqrt{\frac{1}{R(R-1)} \sum_{r=1}^R \Big( 
\text{mse}^{(r)} - \overline{\text{mse}} \Big)^2} ,                        \\[2ex]
\text{MCSE}(\widehat{\text{CP}}(\hat\gamma)) = \sqrt{ \frac{1}{R} \widehat{\text{CP}}(\hat\gamma) \big( 1 - \widehat{\text{CP}}(\hat\gamma) \big) },
\end{gather*}
where $\text{rb}^{(r)} = (\hat{\gamma}^{(r)} - \gamma_{0})/\gamma_{0}$ and $\text{mse}^{(r)} = (\hat{\gamma}^{(r)} - \gamma_0)^2$.

\subsection{Performance of the GELc estimator}
\label{AppB2}

\subsubsection{Figures}
\label{AppB21}

Figures \ref{fig:simstudy:1} to \ref{fig:simstudy:5} illustrate the bias with respect to sample size and interval width for the estimated parameters. Relative bias (RelBias) is reported for all parameters except those with true values fixed at zero, for which absolute bias is reported instead. In Figures \ref{fig:simstudy:6} and \ref{fig:simstudy:7}, empirical standard error (EmpSE) is presented only for $\hat{\gamma}$, as the trends are identical for the remaining parameters.

\textbf{Gamma regression}
\begin{figure}[!ht] 
\begin{center}
\centerline{\includegraphics[width=.65\linewidth]{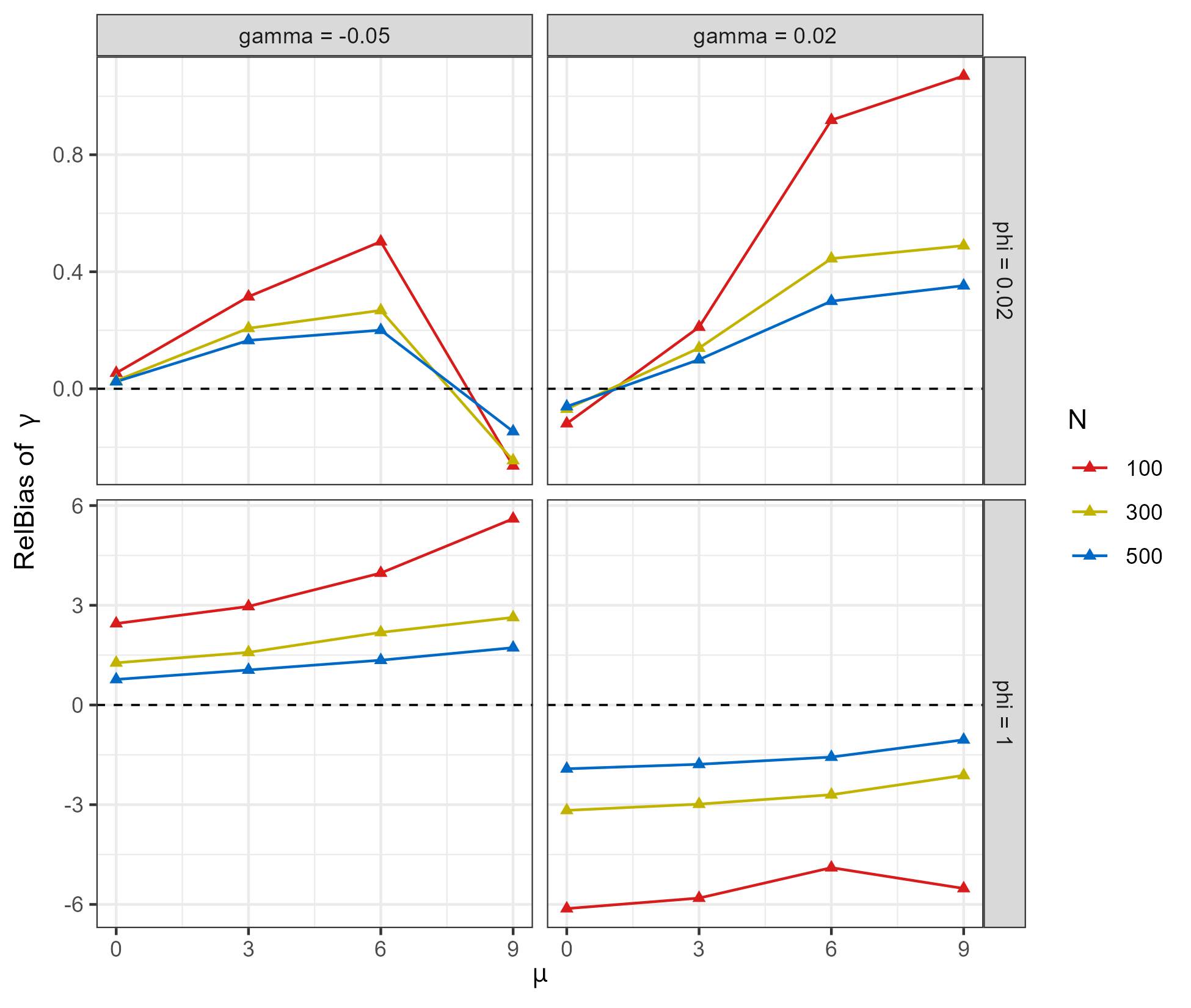}}
\vspace{-3em}
\end{center}
\caption{
Relative bias of $\gamma$ for increasing mean width of observed intervals (x-axis) and sample size, in the Gamma regression model for varying $\gamma_0$ and $\phi_0$.} \label{fig:simstudy:1}
\end{figure}

\textbf{Logistic regression}
\begin{figure}[!ht]
\begin{center}
\centerline{\includegraphics[width=.65\linewidth]{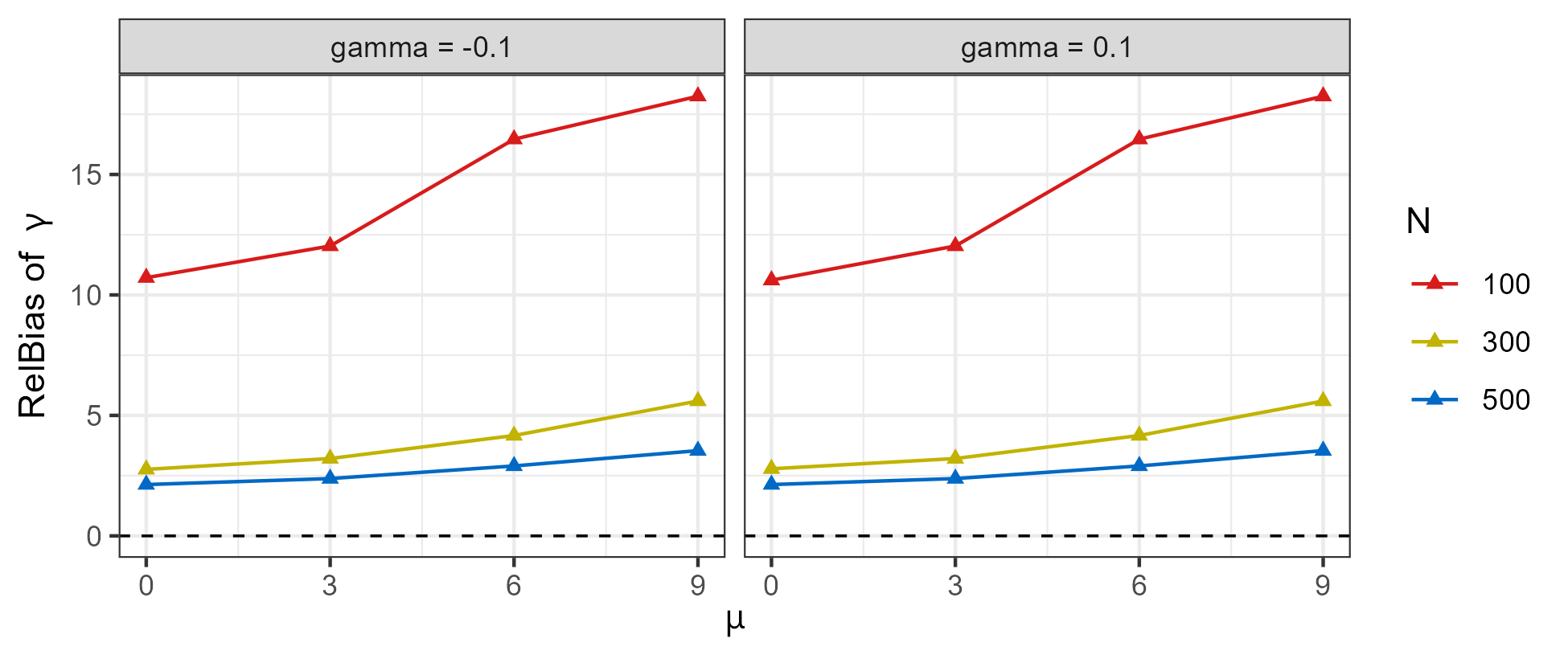}}
\vspace{-3em}
\end{center}
\caption{Relative bias of $\gamma$ for increasing mean width of observed intervals (x-axis) and sample size, in the Logistic regression model for varying $\gamma_0$.} \label{fig:simstudy:2}
\end{figure}

\clearpage

\textbf{Gamma regression}
\begin{figure}[!ht]
\begin{center}
\centerline{\includegraphics[width=.65\linewidth]{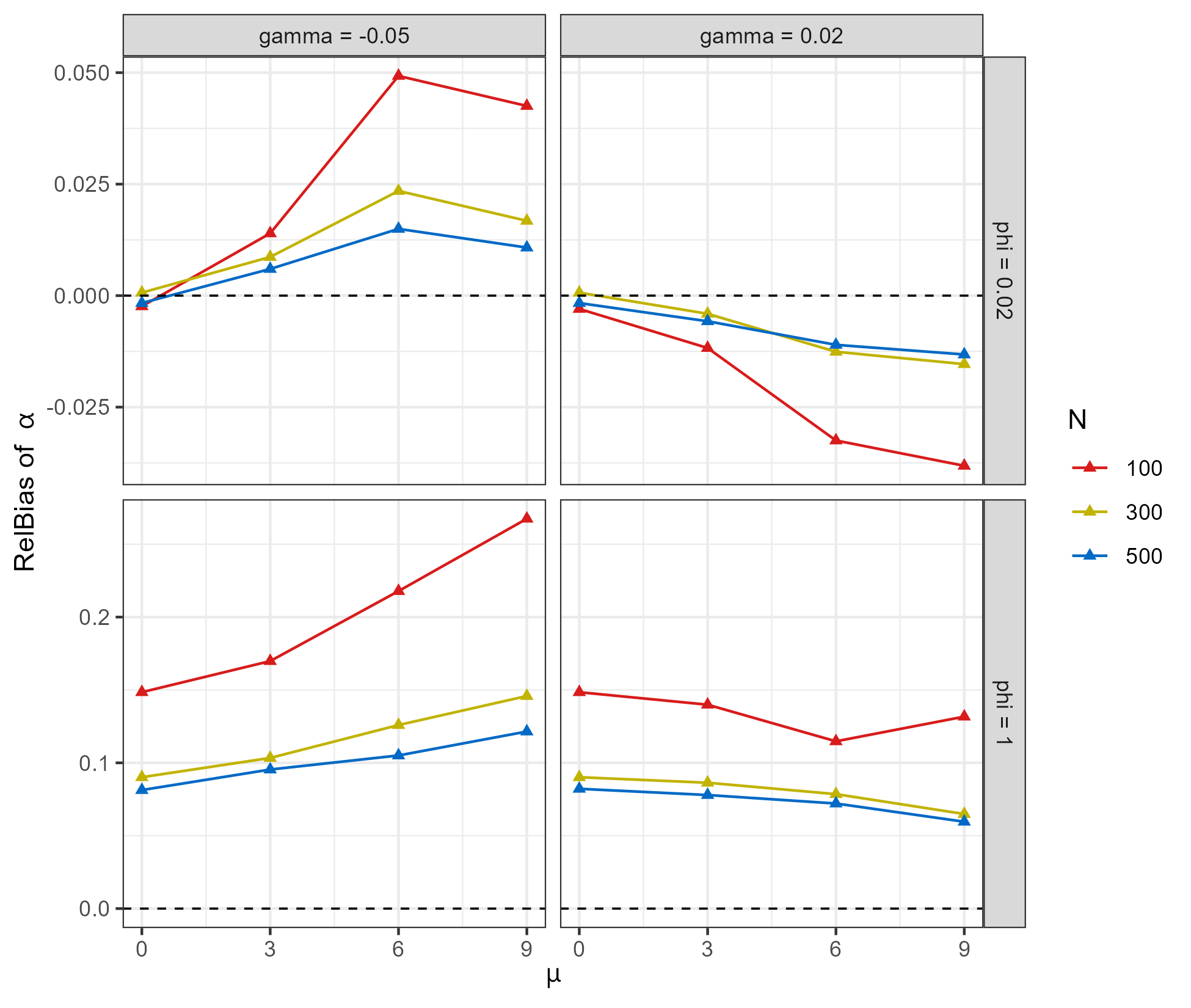}}
\end{center}
\caption{Relative bias of $\alpha$ for increasing mean width of observed intervals (x-axis) and sample size, in the Gamma regression model for $\alpha_0=10$, and varying $\gamma_0$ and $\phi_0$.} \label{fig:simstudy:3}
\end{figure}

\textbf{Logistic regression}
\begin{figure}[!ht]
\begin{center}
\centerline{\includegraphics[width=.65\linewidth]{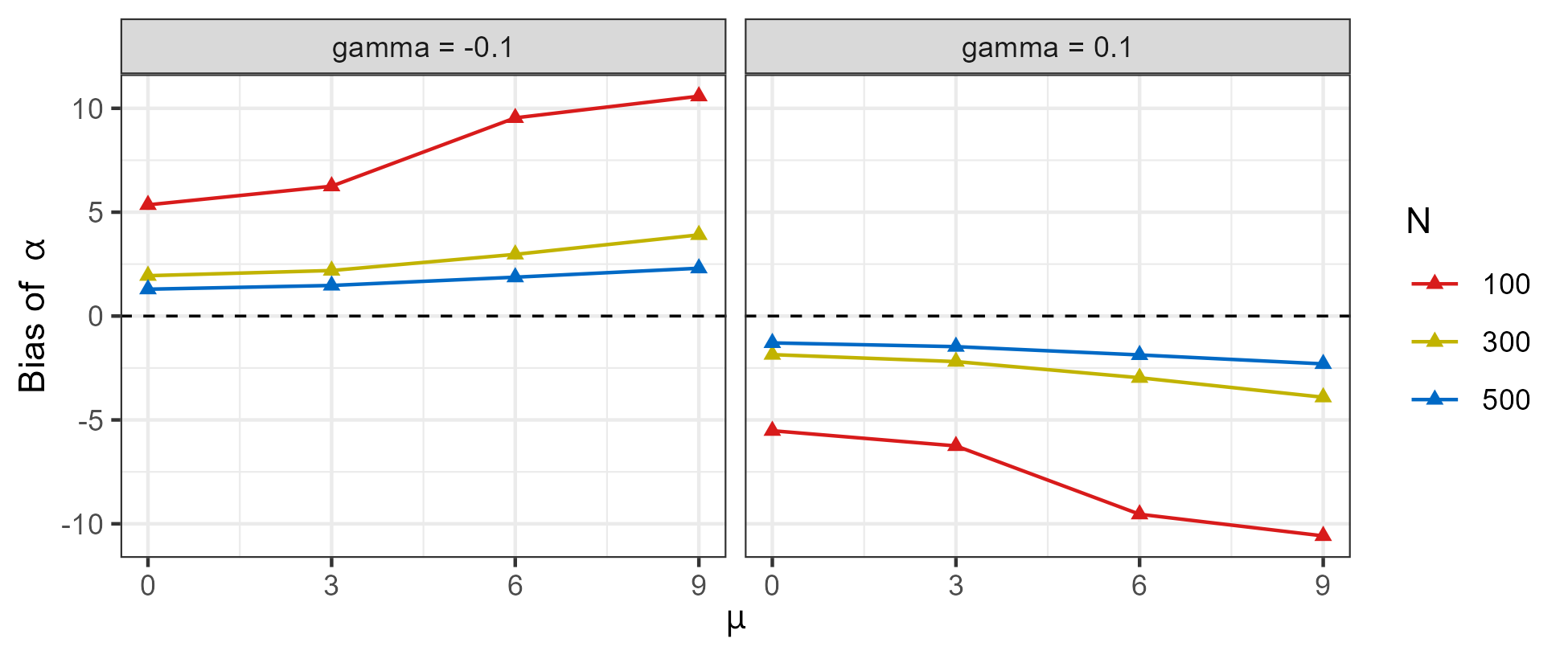}}
\end{center}
\caption{Absolute bias of $\alpha$ for increasing mean width of observed intervals (x-axis) and sample size, in the Logistic regression model for $\alpha_0=0$ and varying $\gamma_0$.} \label{fig:simstudy:4}
\end{figure}

\clearpage

\textbf{Gamma regression}
\begin{figure}[!ht]
\begin{center}
\centerline{\includegraphics[width=.7\linewidth]{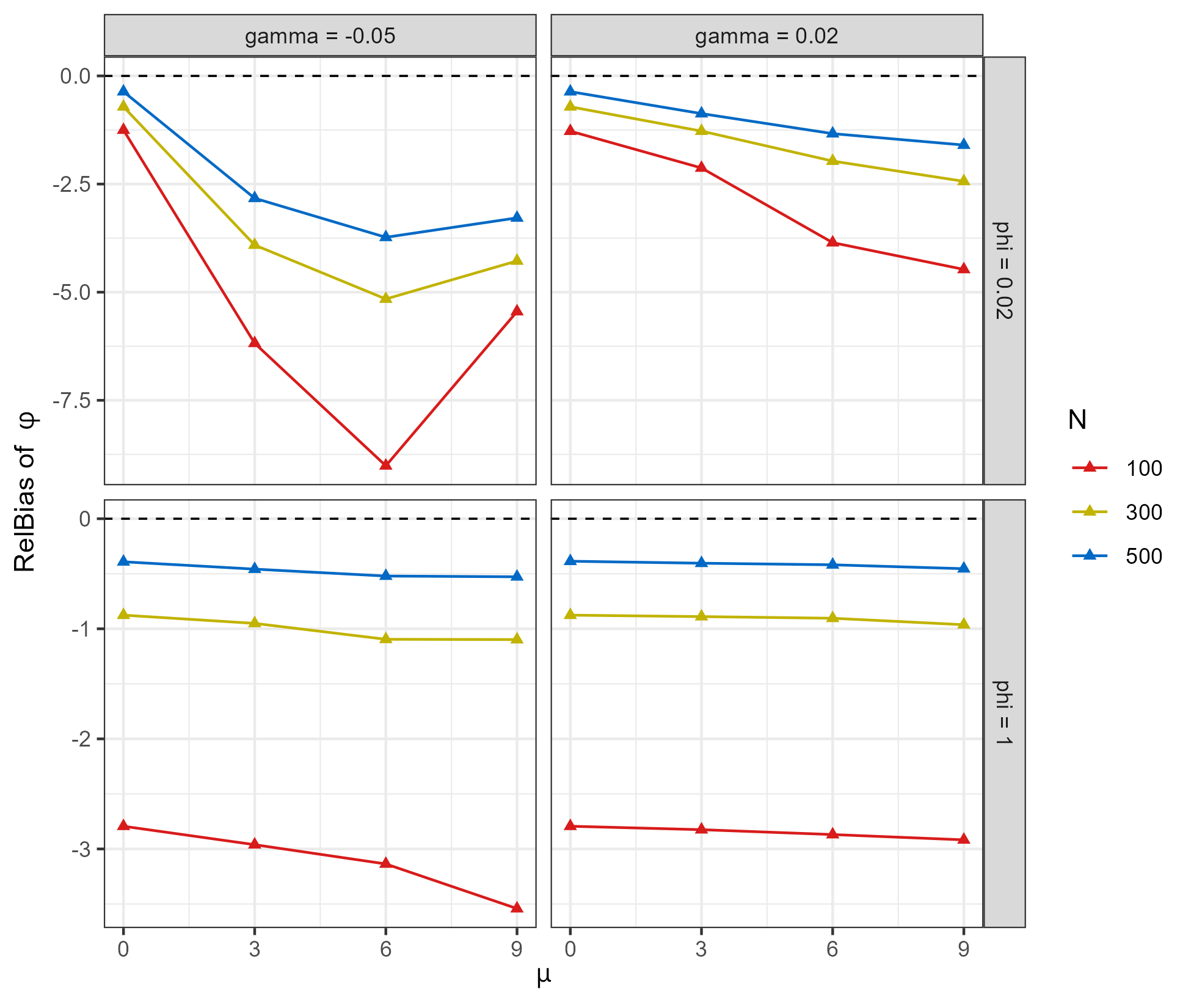}}
\end{center}
\caption{Relative bias of $\phi$ for increasing mean width of observed intervals (x-axis) and sample size, in the Gamma regression model for varying $\gamma_0$ and $\phi_0$.} \label{fig:simstudy:5}
\end{figure}

\clearpage

\textbf{Gamma regression}
\begin{figure}[!ht]
\begin{center}
\centerline{\includegraphics[width=.65\linewidth]{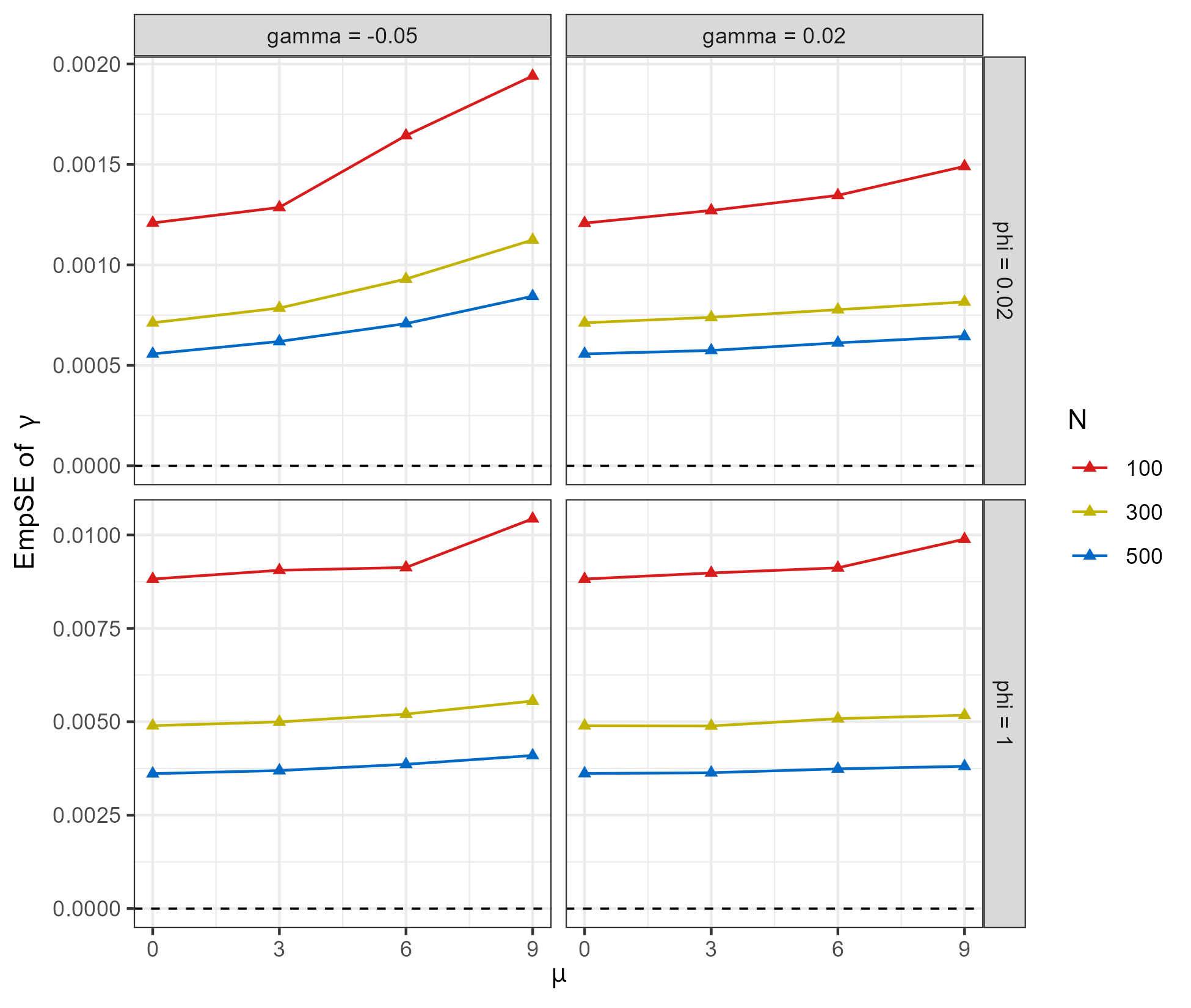}}
\end{center}
\caption{Empirical standard error of $\gamma$ for increasing mean width of observed intervals (x-axis) and sample size, in the Gamma regression model for varying $\gamma_0$ and $\phi_0$.} \label{fig:simstudy:6}
\end{figure}

\textbf{Logistic regression}
\begin{figure}[!ht]
\begin{center}
\centerline{\includegraphics[width=.65\linewidth]{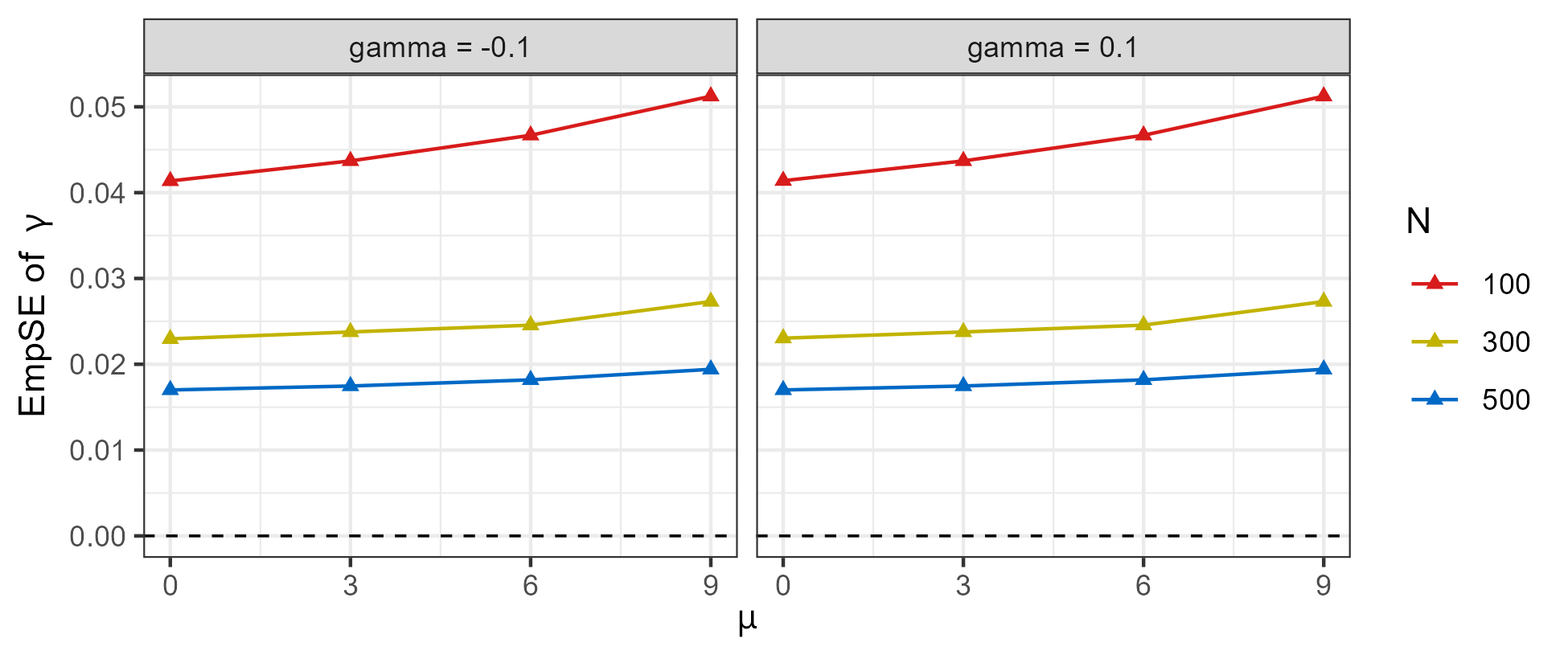}}
\end{center}
\caption{Empirical standard error of $\gamma$ for increasing mean width of observed intervals (x-axis) and sample size, in the Logistic regression model for varying $\gamma_0$.} \label{fig:simstudy:7}
\end{figure}

\clearpage
\subsubsection{Tables of performance metrics}
\label{AppB22}

\begin{table}[!ht]
\caption{Gamma regression with $\alpha=10$, $\gamma = -0.05$ and $\phi = 0.02$}
\centering
\resizebox{\ifdim\width>\linewidth\linewidth\else\width\fi}{!}{
\begin{tabular}[t]{ccccccccccc}
\toprule
\multicolumn{2}{c}{ } & \multicolumn{3}{c}{$\alpha$} & \multicolumn{3}{c}{$\gamma$} & \multicolumn{3}{c}{$\phi$} \\
\cmidrule(l{3pt}r{3pt}){3-5} \cmidrule(l{3pt}r{3pt}){6-8} \cmidrule(l{3pt}r{3pt}){9-11}
N & $\mu$ & RelBias (\%) & EmpSE & RMSE & RelBias (\%) & EmpSE & RMSE & RelBias (\%) & EmpSE & RMSE\\
\midrule
& 0 & -0.002 & 0.021 & 0.021 & 0.054 & 0.001 & 0.001 & -1.251 & 0.003 & 0.003\\

 & 3 & 0.014 & 0.023 & 0.023 & 0.315 & 0.001 & 0.001 & -6.181 & 0.003 & 0.003\\

 & 6 & 0.049 & 0.026 & 0.027 & 0.503 & 0.002 & 0.002 & -9.014 & 0.004 & 0.004\\

\multirow{-4}{*}[1.5\dimexpr\aboverulesep+\belowrulesep+\cmidrulewidth]{\centering\arraybackslash 100} & 9 & 0.043 & 0.030 & 0.030 & -0.263 & 0.002 & 0.002 & -5.447 & 0.005 & 0.005\\
\cmidrule{1-11}
 & 0 & 0.001 & 0.012 & 0.012 & 0.028 & 0.001 & 0.001 & -0.713 & 0.002 & 0.002\\

 & 3 & 0.009 & 0.013 & 0.013 & 0.207 & 0.001 & 0.001 & -3.910 & 0.002 & 0.002\\

 & 6 & 0.023 & 0.015 & 0.015 & 0.268 & 0.001 & 0.001 & -5.158 & 0.002 & 0.003\\

\multirow{-4}{*}[1.5\dimexpr\aboverulesep+\belowrulesep+\cmidrulewidth]{\centering\arraybackslash 300} & 9 & 0.017 & 0.017 & 0.017 & -0.245 & 0.001 & 0.001 & -4.277 & 0.003 & 0.003\\
\cmidrule{1-11}
 & 0 & -0.002 & 0.009 & 0.009 & 0.024 & 0.001 & 0.001 & -0.363 & 0.001 & 0.001\\

 & 3 & 0.006 & 0.010 & 0.010 & 0.165 & 0.001 & 0.001 & -2.828 & 0.002 & 0.002\\

 & 6 & 0.015 & 0.011 & 0.011 & 0.200 & 0.001 & 0.001 & -3.731 & 0.002 & 0.002\\

\multirow{-4}{*}[1.5\dimexpr\aboverulesep+\belowrulesep+\cmidrulewidth]{\centering\arraybackslash 500} & 9 & 0.011 & 0.013 & 0.013 & -0.146 & 0.001 & 0.001 & -3.282 & 0.002 & 0.002\\
\bottomrule
\end{tabular}}
\end{table}

\begin{table}[!ht]
\caption{Gamma regression with $\alpha=10$, $\gamma = 0.02$ and $\phi = 0.02$}
\centering
\resizebox{\ifdim\width>\linewidth\linewidth\else\width\fi}{!}{
\begin{tabular}[t]{ccccccccccc}
\toprule
\multicolumn{2}{c}{ } & \multicolumn{3}{c}{$\alpha$} & \multicolumn{3}{c}{$\gamma$} & \multicolumn{3}{c}{$\phi$} \\
\cmidrule(l{3pt}r{3pt}){3-5} \cmidrule(l{3pt}r{3pt}){6-8} \cmidrule(l{3pt}r{3pt}){9-11}
N & $\mu$ & RelBias (\%) & EmpSE & RMSE & RelBias (\%) & EmpSE & RMSE & RelBias (\%) & EmpSE & RMSE\\
\midrule
& 0 & -0.003 & 0.021 & 0.021 & -0.119 & 0.001 & 0.001 & -1.277 & 0.003 & 0.003\\

 & 3 & -0.012 & 0.021 & 0.022 & 0.211 & 0.001 & 0.001 & -2.128 & 0.003 & 0.003\\

 & 6 & -0.032 & 0.022 & 0.023 & 0.918 & 0.001 & 0.001 & -3.855 & 0.003 & 0.003\\

\multirow{-4}{*}[1.5\dimexpr\aboverulesep+\belowrulesep+\cmidrulewidth]{\centering\arraybackslash 100} & 9 & -0.038 & 0.024 & 0.024 & 1.070 & 0.001 & 0.002 & -4.473 & 0.003 & 0.003\\
\cmidrule{1-11}
 & 0 & 0.001 & 0.012 & 0.012 & -0.069 & 0.001 & 0.001 & -0.713 & 0.002 & 0.002\\

 & 3 & -0.004 & 0.013 & 0.013 & 0.139 & 0.001 & 0.001 & -1.276 & 0.002 & 0.002\\

 & 6 & -0.013 & 0.013 & 0.013 & 0.445 & 0.001 & 0.001 & -1.968 & 0.002 & 0.002\\

\multirow{-4}{*}[1.5\dimexpr\aboverulesep+\belowrulesep+\cmidrulewidth]{\centering\arraybackslash 300} & 9 & -0.015 & 0.013 & 0.013 & 0.489 & 0.001 & 0.001 & -2.437 & 0.002 & 0.002\\
\cmidrule{1-11}
 & 0 & -0.002 & 0.009 & 0.009 & -0.061 & 0.001 & 0.001 & -0.364 & 0.001 & 0.001\\

 & 3 & -0.006 & 0.010 & 0.010 & 0.100 & 0.001 & 0.001 & -0.871 & 0.001 & 0.001\\

 & 6 & -0.011 & 0.010 & 0.010 & 0.300 & 0.001 & 0.001 & -1.334 & 0.001 & 0.001\\

\multirow{-4}{*}[1.5\dimexpr\aboverulesep+\belowrulesep+\cmidrulewidth]{\centering\arraybackslash 500} & 9 & -0.013 & 0.010 & 0.010 & 0.352 & 0.001 & 0.001 & -1.598 & 0.001 & 0.001\\
\bottomrule
\end{tabular}}
\end{table}

\begin{table}[!ht]
\centering
\caption{Gamma regression with $\alpha=10$, $\gamma = -0.05$ and $\phi = 1$}
\centering
\resizebox{\ifdim\width>\linewidth\linewidth\else\width\fi}{!}{
\begin{tabular}[t]{ccccccccccc}
\toprule
\multicolumn{2}{c}{ } & \multicolumn{3}{c}{$\alpha$} & \multicolumn{3}{c}{$\gamma$} & \multicolumn{3}{c}{$\phi$} \\
\cmidrule(l{3pt}r{3pt}){3-5} \cmidrule(l{3pt}r{3pt}){6-8} \cmidrule(l{3pt}r{3pt}){9-11}
N & $\mu$ & RelBias (\%) & EmpSE & RMSE & RelBias (\%) & EmpSE & RMSE & RelBias (\%) & EmpSE & RMSE\\
\midrule
& 0 & 0.148 & 0.142 & 0.143 & 2.450 & 0.009 & 0.009 & -2.794 & 0.118 & 0.121\\

 & 3 & 0.170 & 0.144 & 0.144 & 2.965 & 0.009 & 0.009 & -2.961 & 0.119 & 0.122\\

 & 6 & 0.218 & 0.144 & 0.146 & 3.970 & 0.009 & 0.009 & -3.136 & 0.120 & 0.124\\

\multirow{-4}{*}[1.5\dimexpr\aboverulesep+\belowrulesep+\cmidrulewidth]{\centering\arraybackslash 100} & 9 & 0.267 & 0.148 & 0.150 & 5.604 & 0.010 & 0.011 & -3.542 & 0.120 & 0.125\\
\cmidrule{1-11}
 & 0 & 0.090 & 0.081 & 0.081 & 1.268 & 0.005 & 0.005 & -0.876 & 0.072 & 0.073\\

 & 3 & 0.103 & 0.082 & 0.082 & 1.583 & 0.005 & 0.005 & -0.951 & 0.073 & 0.073\\

 & 6 & 0.126 & 0.081 & 0.082 & 2.184 & 0.005 & 0.005 & -1.095 & 0.074 & 0.074\\

\multirow{-4}{*}[1.5\dimexpr\aboverulesep+\belowrulesep+\cmidrulewidth]{\centering\arraybackslash 300} & 9 & 0.146 & 0.085 & 0.086 & 2.635 & 0.006 & 0.006 & -1.098 & 0.073 & 0.074\\
\cmidrule{1-11}
 & 0 & 0.081 & 0.063 & 0.063 & 0.769 & 0.004 & 0.004 & -0.391 & 0.056 & 0.056\\

 & 3 & 0.095 & 0.063 & 0.064 & 1.054 & 0.004 & 0.004 & -0.458 & 0.056 & 0.056\\

 & 6 & 0.105 & 0.063 & 0.064 & 1.348 & 0.004 & 0.004 & -0.521 & 0.057 & 0.057\\

\multirow{-4}{*}[1.5\dimexpr\aboverulesep+\belowrulesep+\cmidrulewidth]{\centering\arraybackslash 500} & 9 & 0.121 & 0.064 & 0.065 & 1.724 & 0.004 & 0.004 & -0.527 & 0.057 & 0.058\\
\bottomrule
\end{tabular}}
\end{table}

\begin{table}[!ht]
\centering
\caption{Gamma regression with $\alpha=10$, $\gamma = 0.02$ and $\phi = 1$}
\centering
\resizebox{\ifdim\width>\linewidth\linewidth\else\width\fi}{!}{
\begin{tabular}[t]{ccccccccccc}
\toprule
\multicolumn{2}{c}{ } & \multicolumn{3}{c}{$\alpha$} & \multicolumn{3}{c}{$\gamma$} & \multicolumn{3}{c}{$\phi$} \\
\cmidrule(l{3pt}r{3pt}){3-5} \cmidrule(l{3pt}r{3pt}){6-8} \cmidrule(l{3pt}r{3pt}){9-11}
N & $\mu$ & RelBias (\%) & EmpSE & RMSE & RelBias (\%) & EmpSE & RMSE & RelBias (\%) & EmpSE & RMSE\\
\midrule
 & 0 & 0.148 & 0.142 & 0.143 & -6.126 & 0.009 & 0.009 & -2.794 & 0.118 & 0.121\\

 & 3 & 0.140 & 0.143 & 0.144 & -5.809 & 0.009 & 0.009 & -2.825 & 0.118 & 0.121\\

 & 6 & 0.115 & 0.145 & 0.146 & -4.894 & 0.009 & 0.009 & -2.869 & 0.118 & 0.122\\

\multirow{-4}{*}[1.5\dimexpr\aboverulesep+\belowrulesep+\cmidrulewidth]{\centering\arraybackslash 100} & 9 & 0.132 & 0.153 & 0.153 & -5.520 & 0.010 & 0.010 & -2.917 & 0.119 & 0.122\\
\cmidrule{1-11}
 & 0 & 0.090 & 0.081 & 0.081 & -3.170 & 0.005 & 0.005 & -0.876 & 0.072 & 0.073\\

 & 3 & 0.086 & 0.080 & 0.081 & -2.983 & 0.005 & 0.005 & -0.889 & 0.072 & 0.073\\

 & 6 & 0.079 & 0.083 & 0.083 & -2.700 & 0.005 & 0.005 & -0.904 & 0.073 & 0.073\\

\multirow{-4}{*}[1.5\dimexpr\aboverulesep+\belowrulesep+\cmidrulewidth]{\centering\arraybackslash 300} & 9 & 0.065 & 0.085 & 0.085 & -2.118 & 0.005 & 0.005 & -0.963 & 0.073 & 0.073\\
\cmidrule{1-11}
 & 0 & 0.082 & 0.063 & 0.063 & -1.920 & 0.004 & 0.004 & -0.386 & 0.056 & 0.056\\

 & 3 & 0.078 & 0.063 & 0.063 & -1.784 & 0.004 & 0.004 & -0.404 & 0.056 & 0.056\\

 & 6 & 0.072 & 0.064 & 0.064 & -1.567 & 0.004 & 0.004 & -0.419 & 0.056 & 0.056\\

\multirow{-4}{*}[1.5\dimexpr\aboverulesep+\belowrulesep+\cmidrulewidth]{\centering\arraybackslash 500} & 9 & 0.060 & 0.065 & 0.065 & -1.048 & 0.004 & 0.004 & -0.454 & 0.056 & 0.056\\
\bottomrule
\end{tabular}}
\end{table}

\begin{table}[!ht]
\centering
\caption{Logistic regression with $\alpha=0$ and $\gamma = -0.1$}
\centering
\resizebox{0.8\linewidth}{!}{
\begin{tabular}[t]{cccccccc}
\toprule
\multicolumn{2}{c}{ } & \multicolumn{3}{c}{$\alpha$} & \multicolumn{3}{c}{$\gamma$} \\
\cmidrule(l{3pt}r{3pt}){3-5} \cmidrule(l{3pt}r{3pt}){6-8}
N & $\mu$ & Bias (\%) & EmpSE & RMSE & RelBias (\%) & EmpSE & RMSE\\
\midrule
 & 0 & 0.054 & 0.379 & 0.382 & 10.717 & 0.041 & 0.043\\

 & 3 & 0.063 & 0.394 & 0.398 & 12.032 & 0.044 & 0.045\\

 & 6 & 0.095 & 0.399 & 0.410 & 16.465 & 0.047 & 0.049\\

\multirow{-4}{*}[1.5\dimexpr\aboverulesep+\belowrulesep+\cmidrulewidth]{\centering\arraybackslash 100} & 9 & 0.106 & 0.434 & 0.446 & 18.248 & 0.051 & 0.054\\
\cmidrule{1-8}
 & 0 & 0.019 & 0.220 & 0.221 & 2.761 & 0.023 & 0.023\\

 & 3 & 0.022 & 0.226 & 0.227 & 3.210 & 0.024 & 0.024\\

 & 6 & 0.030 & 0.228 & 0.230 & 4.166 & 0.025 & 0.025\\

\multirow{-4}{*}[1.5\dimexpr\aboverulesep+\belowrulesep+\cmidrulewidth]{\centering\arraybackslash 300} & 9 & 0.039 & 0.245 & 0.248 & 5.598 & 0.027 & 0.028\\
\cmidrule{1-8}
 & 0 & 0.013 & 0.170 & 0.170 & 2.130 & 0.017 & 0.017\\

 & 3 & 0.015 & 0.173 & 0.173 & 2.379 & 0.017 & 0.018\\

 & 6 & 0.019 & 0.176 & 0.177 & 2.903 & 0.018 & 0.018\\

\multirow{-4}{*}[1.5\dimexpr\aboverulesep+\belowrulesep+\cmidrulewidth]{\centering\arraybackslash 500} & 9 & 0.023 & 0.184 & 0.185 & 3.539 & 0.019 & 0.020\\
\bottomrule
\end{tabular}}
\end{table}

\begin{table}[!ht]
\centering
\caption{Logistic regression with $\alpha=0$ and $\gamma = 0.1$}
\centering
\resizebox{0.8\linewidth}{!}{
\begin{tabular}[t]{cccccccc}
\toprule
\multicolumn{2}{c}{ } & \multicolumn{3}{c}{$\alpha$} & \multicolumn{3}{c}{$\gamma$} \\
\cmidrule(l{3pt}r{3pt}){3-5} \cmidrule(l{3pt}r{3pt}){6-8}
N & $\mu$ & Bias (\%) & EmpSE & RMSE & RelBias (\%) & EmpSE & RMSE\\
\midrule
 & 0 & -0.055 & 0.378 & 0.381 & 10.612 & 0.041 & 0.043\\

 & 3 & -0.063 & 0.394 & 0.398 & 12.032 & 0.044 & 0.045\\

 & 6 & -0.095 & 0.399 & 0.410 & 16.465 & 0.047 & 0.049\\

\multirow{-4}{*}[1.5\dimexpr\aboverulesep+\belowrulesep+\cmidrulewidth]{\centering\arraybackslash 100} & 9 & -0.106 & 0.434 & 0.446 & 18.248 & 0.051 & 0.054\\
\cmidrule{1-8}
 & 0 & -0.019 & 0.220 & 0.221 & 2.785 & 0.023 & 0.023\\

 & 3 & -0.022 & 0.226 & 0.227 & 3.210 & 0.024 & 0.024\\

 & 6 & -0.030 & 0.228 & 0.230 & 4.166 & 0.025 & 0.025\\

\multirow{-4}{*}[1.5\dimexpr\aboverulesep+\belowrulesep+\cmidrulewidth]{\centering\arraybackslash 300} & 9 & -0.039 & 0.245 & 0.248 & 5.598 & 0.027 & 0.028\\
\cmidrule{1-8}
 & 0 & -0.013 & 0.170 & 0.170 & 2.130 & 0.017 & 0.017\\

 & 3 & -0.015 & 0.173 & 0.173 & 2.379 & 0.017 & 0.018\\

 & 6 & -0.019 & 0.176 & 0.177 & 2.903 & 0.018 & 0.018\\

\multirow{-4}{*}[1.5\dimexpr\aboverulesep+\belowrulesep+\cmidrulewidth]{\centering\arraybackslash 500} & 9 & -0.023 & 0.184 & 0.185 & 3.539 & 0.019 & 0.020\\
\bottomrule
\end{tabular}}
\end{table}

\clearpage
\subsubsection{Tables of MCSE for performance metrics}
\label{AppB23}

\begin{table}[!ht]
\centering
\caption{Gamma regression with $\alpha=10$, $\gamma = -0.05$ and $\phi = 0.02$}
\centering
\resizebox{\ifdim\width>\linewidth\linewidth\else\width\fi}{!}{
\begin{tabular}[t]{ccccccccccc}
\toprule
\multicolumn{2}{c}{ } & \multicolumn{3}{c}{$\alpha$} & \multicolumn{3}{c}{$\gamma$} & \multicolumn{3}{c}{$\phi$} \\
\cmidrule(l{3pt}r{3pt}){3-5} \cmidrule(l{3pt}r{3pt}){6-8} \cmidrule(l{3pt}r{3pt}){9-11}
N & $\mu$ & RelBias(\%) & EmpSE & RMSE & RelBias(\%) & EmpSE & RMSE & RelBias(\%) & EmpSE & RMSE\\
\midrule
 & 0 & 0.009 & 0.0007 & 0.0007 & 0.108 & $<0.0001$ & $<0.0001$ & 0.611 & $<0.0001$ & $<0.0001$\\

 & 3 & 0.010 & 0.0007 & 0.0007 & 0.115 & $<0.0001$ & $<0.0001$ & 0.706 & $<0.0001$ & $<0.0001$\\

 & 6 & 0.012 & 0.0008 & 0.0009 & 0.147 & $<0.0001$ & $<0.0001$ & 0.905 & 0.0001 & 0.0001\\

\multirow{-4}{*}[1.5\dimexpr\aboverulesep+\belowrulesep+\cmidrulewidth]{\centering\arraybackslash 100} & 9 & 0.013 & 0.0010 & 0.0010 & 0.174 & $<0.0001$ & $<0.0001$ & 1.017 & 0.0001 & 0.0001\\
\cmidrule{1-11}
 & 0 & 0.005 & 0.0004 & 0.0004 & 0.064 & $<0.0001$ & $<0.0001$ & 0.365 & $<0.0001$ & $<0.0001$\\

 & 3 & 0.006 & 0.0004 & 0.0004 & 0.070 & $<0.0001$ & $<0.0001$ & 0.440 & $<0.0001$ & $<0.0001$\\

 & 6 & 0.007 & 0.0005 & 0.0005 & 0.083 & $<0.0001$ & $<0.0001$ & 0.535 & $<0.0001$ & $<0.0001$\\

\multirow{-4}{*}[1.5\dimexpr\aboverulesep+\belowrulesep+\cmidrulewidth]{\centering\arraybackslash 300} & 9 & 0.008 & 0.0005 & 0.0005 & 0.101 & $<0.0001$ & $<0.0001$ & 0.578 & $<0.0001$ & $<0.0001$\\
\cmidrule{1-11}
 & 0 & 0.004 & 0.0003 & 0.0003 & 0.050 & $<0.0001$ & $<0.0001$ & 0.281 & $<0.0001$ & $<0.0001$\\

 & 3 & 0.005 & 0.0003 & 0.0003 & 0.055 & $<0.0001$ & $<0.0001$ & 0.338 & $<0.0001$ & $<0.0001$\\

 & 6 & 0.005 & 0.0004 & 0.0003 & 0.063 & $<0.0001$ & $<0.0001$ & 0.395 & $<0.0001$ & $<0.0001$\\

\multirow{-4}{*}[1.5\dimexpr\aboverulesep+\belowrulesep+\cmidrulewidth]{\centering\arraybackslash 500} & 9 & 0.006 & 0.0004 & 0.0004 & 0.076 & $<0.0001$ & $<0.0001$ & 0.457 & $<0.0001$ & $<0.0001$\\
\bottomrule
\end{tabular}}
\end{table}

\begin{table}[!ht]
\centering
\caption{Gamma regression with $\alpha=10$, $\gamma = 0.02$ and $\phi = 0.02$}
\centering
\resizebox{\ifdim\width>\linewidth\linewidth\else\width\fi}{!}{
\begin{tabular}[t]{ccccccccccc}
\toprule
\multicolumn{2}{c}{ } & \multicolumn{3}{c}{$\alpha$} & \multicolumn{3}{c}{$\gamma$} & \multicolumn{3}{c}{$\phi$} \\
\cmidrule(l{3pt}r{3pt}){3-5} \cmidrule(l{3pt}r{3pt}){6-8} \cmidrule(l{3pt}r{3pt}){9-11}
N & $\mu$ & RelBias(\%) & EmpSE & RMSE & RelBias(\%) & EmpSE & RMSE & RelBias(\%) & EmpSE & RMSE\\
\midrule
 & 0 & 0.009 & 0.0007 & 0.0007 & 0.270 & $<0.0001$ & $<0.0001$ & 0.611 & $<0.0001$ & $<0.0001$\\

 & 3 & 0.010 & 0.0007 & 0.0007 & 0.284 & $<0.0001$ & $<0.0001$ & 0.648 & $<0.0001$ & $<0.0001$\\

 & 6 & 0.010 & 0.0007 & 0.0008 & 0.301 & $<0.0001$ & $<0.0001$ & 0.668 & $<0.0001$ & $<0.0001$\\

\multirow{-4}{*}[1.5\dimexpr\aboverulesep+\belowrulesep+\cmidrulewidth]{\centering\arraybackslash 100} & 9 & 0.011 & 0.0008 & 0.0008 & 0.333 & $<0.0001$ & $<0.0001$ & 0.726 & 0.0001 & 0.0001\\
\cmidrule{1-11}
 & 0 & 0.005 & 0.0004 & 0.0004 & 0.159 & $<0.0001$ & $<0.0001$ & 0.365 & $<0.0001$ & $<0.0001$\\

 & 3 & 0.006 & 0.0004 & 0.0004 & 0.165 & $<0.0001$ & $<0.0001$ & 0.375 & $<0.0001$ & $<0.0001$\\

 & 6 & 0.006 & 0.0004 & 0.0004 & 0.174 & $<0.0001$ & $<0.0001$ & 0.389 & $<0.0001$ & $<0.0001$\\

\multirow{-4}{*}[1.5\dimexpr\aboverulesep+\belowrulesep+\cmidrulewidth]{\centering\arraybackslash 300} & 9 & 0.006 & 0.0004 & 0.0004 & 0.182 & $<0.0001$ & $<0.0001$ & 0.424 & $<0.0001$ & $<0.0001$\\
\cmidrule{1-11}
 & 0 & 0.004 & 0.0003 & 0.0003 & 0.125 & $<0.0001$ & $<0.0001$ & 0.281 & $<0.0001$ & $<0.0001$\\

 & 3 & 0.004 & 0.0003 & 0.0003 & 0.128 & $<0.0001$ & $<0.0001$ & 0.288 & $<0.0001$ & $<0.0001$\\

 & 6 & 0.005 & 0.0003 & 0.0003 & 0.137 & $<0.0001$ & $<0.0001$ & 0.297 & $<0.0001$ & $<0.0001$\\

\multirow{-4}{*}[1.5\dimexpr\aboverulesep+\belowrulesep+\cmidrulewidth]{\centering\arraybackslash 500} & 9 & 0.005 & 0.0003 & 0.0003 & 0.144 & $<0.0001$ & $<0.0001$ & 0.325 & $<0.0001$ & $<0.0001$\\
\bottomrule
\end{tabular}}
\end{table}

\begin{table}[!ht]
\centering
\caption{Gamma regression with $\alpha=10$, $\gamma = -0.05$ and $\phi = 1$}
\centering
\resizebox{\ifdim\width>\linewidth\linewidth\else\width\fi}{!}{
\begin{tabular}[t]{ccccccccccc}
\toprule
\multicolumn{2}{c}{ } & \multicolumn{3}{c}{$\alpha$} & \multicolumn{3}{c}{$\gamma$} & \multicolumn{3}{c}{$\phi$} \\
\cmidrule(l{3pt}r{3pt}){3-5} \cmidrule(l{3pt}r{3pt}){6-8} \cmidrule(l{3pt}r{3pt}){9-11}
N & $\mu$ & RelBias(\%) & EmpSE & RMSE & RelBias(\%) & EmpSE & RMSE & RelBias(\%) & EmpSE & RMSE\\
\midrule
 & 0 & 0.064 & 0.0045 & 0.0050 & 0.789 & 0.0003 & 0.0003 & 0.528 & 0.0037 & 0.0034\\

 & 3 & 0.064 & 0.0045 & 0.0052 & 0.810 & 0.0003 & 0.0003 & 0.530 & 0.0038 & 0.0035\\

 & 6 & 0.064 & 0.0046 & 0.0050 & 0.817 & 0.0003 & 0.0003 & 0.537 & 0.0038 & 0.0036\\

\multirow{-4}{*}[1.5\dimexpr\aboverulesep+\belowrulesep+\cmidrulewidth]{\centering\arraybackslash 100} & 9 & 0.066 & 0.0047 & 0.0053 & 0.933 & 0.0003 & 0.0004 & 0.538 & 0.0038 & 0.0035\\
\cmidrule{1-11}
 & 0 & 0.036 & 0.0026 & 0.0025 & 0.438 & 0.0002 & 0.0002 & 0.324 & 0.0023 & 0.0020\\

 & 3 & 0.037 & 0.0026 & 0.0025 & 0.447 & 0.0002 & 0.0002 & 0.325 & 0.0023 & 0.0020\\

 & 6 & 0.036 & 0.0026 & 0.0025 & 0.466 & 0.0002 & 0.0002 & 0.329 & 0.0023 & 0.0020\\

\multirow{-4}{*}[1.5\dimexpr\aboverulesep+\belowrulesep+\cmidrulewidth]{\centering\arraybackslash 300} & 9 & 0.038 & 0.0027 & 0.0028 & 0.497 & 0.0002 & 0.0002 & 0.328 & 0.0023 & 0.0021\\
\cmidrule{1-11}
 & 0 & 0.028 & 0.0020 & 0.0020 & 0.323 & 0.0001 & 0.0001 & 0.251 & 0.0018 & 0.0018\\

 & 3 & 0.028 & 0.0020 & 0.0020 & 0.330 & 0.0001 & 0.0001 & 0.251 & 0.0018 & 0.0018\\

 & 6 & 0.028 & 0.0020 & 0.0020 & 0.345 & 0.0001 & 0.0001 & 0.256 & 0.0018 & 0.0018\\

\multirow{-4}{*}[1.5\dimexpr\aboverulesep+\belowrulesep+\cmidrulewidth]{\centering\arraybackslash 500} & 9 & 0.029 & 0.0020 & 0.0020 & 0.366 & 0.0001 & 0.0001 & 0.257 & 0.0018 & 0.0018\\
\bottomrule
\end{tabular}}
\end{table}

\begin{table}[!ht]
\centering
\caption{Gamma regression with $\alpha=10$, $\gamma = 0.02$ and $\phi = 1$}
\centering
\resizebox{\ifdim\width>\linewidth\linewidth\else\width\fi}{!}{
\begin{tabular}[t]{ccccccccccc}
\toprule
\multicolumn{2}{c}{ } & \multicolumn{3}{c}{$\alpha$} & \multicolumn{3}{c}{$\gamma$} & \multicolumn{3}{c}{$\phi$} \\
\cmidrule(l{3pt}r{3pt}){3-5} \cmidrule(l{3pt}r{3pt}){6-8} \cmidrule(l{3pt}r{3pt}){9-11}
N & $\mu$ & RelBias(\%) & EmpSE & RMSE & RelBias(\%) & EmpSE & RMSE & RelBias(\%) & EmpSE & RMSE\\
\midrule
 & 0 & 0.064 & 0.0045 & 0.0050 & 1.972 & 0.0003 & 0.0003 & 0.528 & 0.0037 & 0.0034\\

 & 3 & 0.064 & 0.0045 & 0.0051 & 2.009 & 0.0003 & 0.0003 & 0.529 & 0.0037 & 0.0034\\

 & 6 & 0.065 & 0.0046 & 0.0050 & 2.039 & 0.0003 & 0.0003 & 0.529 & 0.0037 & 0.0034\\

\multirow{-4}{*}[1.5\dimexpr\aboverulesep+\belowrulesep+\cmidrulewidth]{\centering\arraybackslash 100} & 9 & 0.068 & 0.0048 & 0.0051 & 2.211 & 0.0003 & 0.0004 & 0.531 & 0.0038 & 0.0034\\
\cmidrule{1-11}
 & 0 & 0.036 & 0.0026 & 0.0025 & 1.095 & 0.0002 & 0.0002 & 0.324 & 0.0023 & 0.0020\\

 & 3 & 0.036 & 0.0025 & 0.0025 & 1.093 & 0.0002 & 0.0002 & 0.324 & 0.0023 & 0.0020\\

 & 6 & 0.037 & 0.0026 & 0.0025 & 1.137 & 0.0002 & 0.0002 & 0.325 & 0.0023 & 0.0020\\

\multirow{-4}{*}[1.5\dimexpr\aboverulesep+\belowrulesep+\cmidrulewidth]{\centering\arraybackslash 300} & 9 & 0.038 & 0.0027 & 0.0026 & 1.157 & 0.0002 & 0.0002 & 0.326 & 0.0023 & 0.0020\\
\cmidrule{1-11}
 & 0 & 0.028 & 0.0020 & 0.0020 & 0.809 & 0.0001 & 0.0001 & 0.252 & 0.0018 & 0.0018\\

 & 3 & 0.028 & 0.0020 & 0.0019 & 0.813 & 0.0001 & 0.0001 & 0.252 & 0.0018 & 0.0018\\

 & 6 & 0.029 & 0.0020 & 0.0020 & 0.836 & 0.0001 & 0.0001 & 0.251 & 0.0018 & 0.0018\\

\multirow{-4}{*}[1.5\dimexpr\aboverulesep+\belowrulesep+\cmidrulewidth]{\centering\arraybackslash 500} & 9 & 0.029 & 0.0021 & 0.0021 & 0.852 & 0.0001 & 0.0001 & 0.252 & 0.0018 & 0.0018\\
\bottomrule
\end{tabular}}
\end{table}

\begin{table}[!ht]
\centering
\caption{Logistic regression with $\alpha=0$ and $\gamma = -0.1$}
\centering
\resizebox{.8\linewidth}{!}{
\begin{tabular}[t]{cccccccc}
\toprule
\multicolumn{2}{c}{ } & \multicolumn{3}{c}{$\alpha$} & \multicolumn{3}{c}{$\gamma$} \\
\cmidrule(l{3pt}r{3pt}){3-5} \cmidrule(l{3pt}r{3pt}){6-8}
N & $\mu$ & Bias(\%) & EmpSE & RMSE & RelBias(\%) & EmpSE & RMSE\\
\midrule
 & 0 & 0.017 & 0.0120 & 0.0134 & 1.852 & 0.0013 & 0.0017\\

 & 3 & 0.018 & 0.0125 & 0.0142 & 1.954 & 0.0014 & 0.0018\\

 & 6 & 0.018 & 0.0126 & 0.0143 & 2.088 & 0.0015 & 0.0020\\

\multirow{-4}{*}[1.5\dimexpr\aboverulesep+\belowrulesep+\cmidrulewidth]{\centering\arraybackslash 100} & 9 & 0.019 & 0.0137 & 0.0171 & 2.291 & 0.0016 & 0.0026\\
\cmidrule{1-8}
 & 0 & 0.010 & 0.0070 & 0.0069 & 1.029 & 0.0007 & 0.0008\\

 & 3 & 0.010 & 0.0072 & 0.0073 & 1.063 & 0.0008 & 0.0009\\

 & 6 & 0.010 & 0.0072 & 0.0072 & 1.098 & 0.0008 & 0.0009\\

\multirow{-4}{*}[1.5\dimexpr\aboverulesep+\belowrulesep+\cmidrulewidth]{\centering\arraybackslash 300} & 9 & 0.011 & 0.0077 & 0.0070 & 1.221 & 0.0009 & 0.0010\\
\cmidrule{1-8}
 & 0 & 0.008 & 0.0054 & 0.0052 & 0.761 & 0.0005 & 0.0006\\

 & 3 & 0.008 & 0.0055 & 0.0054 & 0.781 & 0.0006 & 0.0006\\

 & 6 & 0.008 & 0.0056 & 0.0054 & 0.813 & 0.0006 & 0.0006\\

\multirow{-4}{*}[1.5\dimexpr\aboverulesep+\belowrulesep+\cmidrulewidth]{\centering\arraybackslash 500} & 9 & 0.008 & 0.0058 & 0.0055 & 0.868 & 0.0006 & 0.0007\\
\bottomrule
\end{tabular}}
\end{table}

\begin{table}[!ht]
\centering
\caption{Logistic regression with $\alpha=0$ and $\gamma = 0.1$}
\centering
\resizebox{.8\linewidth}{!}{
\begin{tabular}[t]{cccccccc}
\toprule
\multicolumn{2}{c}{ } & \multicolumn{3}{c}{$\alpha$} & \multicolumn{3}{c}{$\gamma$} \\
\cmidrule(l{3pt}r{3pt}){3-5} \cmidrule(l{3pt}r{3pt}){6-8}
N & $\mu$ & Bias(\%) & EmpSE & RMSE & RelBias(\%) & EmpSE & RMSE\\
\midrule
 & 0 & 0.017 & 0.0120 & 0.0134 & 1.858 & 0.0013 & 0.0017\\

 & 3 & 0.018 & 0.0125 & 0.0142 & 1.954 & 0.0014 & 0.0018\\

 & 6 & 0.018 & 0.0126 & 0.0143 & 2.088 & 0.0015 & 0.0020\\

\multirow{-4}{*}[1.5\dimexpr\aboverulesep+\belowrulesep+\cmidrulewidth]{\centering\arraybackslash 100} & 9 & 0.019 & 0.0137 & 0.0171 & 2.291 & 0.0016 & 0.0026\\
\cmidrule{1-8}
 & 0 & 0.010 & 0.0070 & 0.0069 & 1.033 & 0.0007 & 0.0008\\

 & 3 & 0.010 & 0.0072 & 0.0073 & 1.063 & 0.0008 & 0.0009\\

 & 6 & 0.010 & 0.0072 & 0.0072 & 1.098 & 0.0008 & 0.0009\\

\multirow{-4}{*}[1.5\dimexpr\aboverulesep+\belowrulesep+\cmidrulewidth]{\centering\arraybackslash 300} & 9 & 0.011 & 0.0077 & 0.0070 & 1.221 & 0.0009 & 0.0010\\
\cmidrule{1-8}
 & 0 & 0.008 & 0.0054 & 0.0052 & 0.761 & 0.0005 & 0.0006\\

 & 3 & 0.008 & 0.0055 & 0.0054 & 0.781 & 0.0006 & 0.0006\\

 & 6 & 0.008 & 0.0056 & 0.0054 & 0.813 & 0.0006 & 0.0006\\

\multirow{-4}{*}[1.5\dimexpr\aboverulesep+\belowrulesep+\cmidrulewidth]{\centering\arraybackslash 500} & 9 & 0.008 & 0.0058 & 0.0055 & 0.868 & 0.0006 & 0.0007\\
\bottomrule
\end{tabular}}
\end{table}

\clearpage
\subsection{\texorpdfstring{Coverage probability of $95\%$ confidence intervals}{Coverage probability of 95\% confidence intervals}}
\label{AppB3}

\vspace{-1em}

\begin{table}[!ht]
\centering
\caption{Coverage probability for parameter $\gamma$}
\centering
\resizebox{0.74\linewidth}{!}{
\begin{tabular}[t]{cccccccc}
\toprule
\multicolumn{2}{c}{ } & \multicolumn{4}{c}{Gamma model} & \multicolumn{2}{c}{Logistic regression}  \\
\cmidrule(l{3pt}r{3pt}){3-6} \cmidrule(l{3pt}r{3pt}){7-8} 
\multicolumn{2}{c}{ } & \multicolumn{2}{c}{$\phi=0.02$} & \multicolumn{2}{c}{$\phi=1$} & \multicolumn{2}{c}{ } \\
\cmidrule(l{3pt}r{3pt}){3-4}
\cmidrule(l{3pt}r{3pt}){5-6} 
N & $\mu$ & $\gamma=-0.05$ & $\gamma=0.02$ &  $\gamma=-0.05$ & $\gamma=0.02$ & $\gamma=-0.1$ & $\gamma=0.1$\\
\midrule
 & 0 & 0.96 & 0.96 & 0.96 & 0.96 & 0.95 & 0.95\\

 & 3 & 0.95 & 0.94 & 0.95 & 0.95 & 0.94 & 0.94\\

 & 6 & 0.89 & 0.93 & 0.94 & 0.95 & 0.94 & 0.94\\

\multirow{-4}{*}[1.5\dimexpr\aboverulesep+\belowrulesep+\cmidrulewidth]{\centering\arraybackslash 100} & 9 & 0.85 & 0.93 & 0.92 & 0.93 & 0.94 & 0.94\\
\cmidrule{1-8}
 & 0 & 0.94 & 0.94 & 0.95 & 0.95 & 0.93 & 0.93\\

 & 3 & 0.93 & 0.94 & 0.95 & 0.95 & 0.93 & 0.93\\

 & 6 & 0.91 & 0.94 & 0.94 & 0.94 & 0.93 & 0.93\\

\multirow{-4}{*}[1.5\dimexpr\aboverulesep+\belowrulesep+\cmidrulewidth]{\centering\arraybackslash 300} & 9 & 0.88 & 0.94 & 0.93 & 0.95 & 0.92 & 0.92\\
\cmidrule{1-8}
 & 0 & 0.93 & 0.93 & 0.95 & 0.95 & 0.94 & 0.94\\

 & 3 & 0.92 & 0.93 & 0.95 & 0.95 & 0.95 & 0.95\\

 & 6 & 0.91 & 0.93 & 0.94 & 0.94 & 0.92 & 0.92\\

\multirow{-4}{*}[1.5\dimexpr\aboverulesep+\belowrulesep+\cmidrulewidth]{\centering\arraybackslash 500} & 9 & 0.88 & 0.94 & 0.94 & 0.95 & 0.94 & 0.94\\
\bottomrule
\end{tabular}}
\end{table}

\begin{table}[!ht]
\centering
\caption{Coverage probability for parameter $\alpha$}
\centering
\resizebox{0.74\linewidth}{!}{
\begin{tabular}[t]{cccccccc}
\toprule
\multicolumn{2}{c}{ } & \multicolumn{4}{c}{Gamma model} & \multicolumn{2}{c}{Logistic regression}  \\
\cmidrule(l{3pt}r{3pt}){3-6} \cmidrule(l{3pt}r{3pt}){7-8} 
\multicolumn{2}{c}{ } & \multicolumn{2}{c}{$\phi=0.02$} & \multicolumn{2}{c}{$\phi=1$} & \multicolumn{2}{c}{ } \\
\cmidrule(l{3pt}r{3pt}){3-4}
\cmidrule(l{3pt}r{3pt}){5-6} 
N & $\mu$ & $\gamma=-0.05$ & $\gamma=0.02$ &  $\gamma=-0.05$ & $\gamma=0.02$ & $\gamma=-0.1$ & $\gamma=0.1$\\
\midrule
 & 0 & 0.95 & 0.95 & 0.94 & 0.94 & 0.93 & 0.93\\

 & 3 & 0.92 & 0.94 & 0.95 & 0.95 & 0.93 & 0.93\\

 & 6 & 0.90 & 0.94 & 0.95 & 0.94 & 0.94 & 0.94\\

\multirow{-4}{*}[1.5\dimexpr\aboverulesep+\belowrulesep+\cmidrulewidth]{\centering\arraybackslash 100} & 9 & 0.87 & 0.93 & 0.94 & 0.94 & 0.93 & 0.93\\
\cmidrule{1-8}
 & 0 & 0.95 & 0.95 & 0.95 & 0.95 & 0.94 & 0.94\\

 & 3 & 0.94 & 0.93 & 0.95 & 0.96 & 0.93 & 0.93\\

 & 6 & 0.91 & 0.94 & 0.95 & 0.95 & 0.94 & 0.94\\

\multirow{-4}{*}[1.5\dimexpr\aboverulesep+\belowrulesep+\cmidrulewidth]{\centering\arraybackslash 300} & 9 & 0.89 & 0.94 & 0.94 & 0.94 & 0.92 & 0.92\\
\cmidrule{1-8}
 & 0 & 0.94 & 0.94 & 0.96 & 0.96 & 0.93 & 0.93\\

 & 3 & 0.92 & 0.94 & 0.96 & 0.96 & 0.93 & 0.93\\

 & 6 & 0.93 & 0.91 & 0.96 & 0.97 & 0.93 & 0.93\\

\multirow{-4}{*}[1.5\dimexpr\aboverulesep+\belowrulesep+\cmidrulewidth]{\centering\arraybackslash 500} & 9 & 0.91 & 0.93 & 0.95 & 0.96 & 0.94 & 0.94\\
\bottomrule
\end{tabular}}
\end{table}

\clearpage

\begin{table}[!ht]
\centering
\caption{MCSE of coverage probability for parameter $\gamma$}
\centering
\resizebox{0.75\linewidth}{!}{
\begin{tabular}[t]{cccccccc}
\toprule
\multicolumn{2}{c}{ } & \multicolumn{4}{c}{Gamma model} & \multicolumn{2}{c}{Logistic regression}  \\
\cmidrule(l{3pt}r{3pt}){3-6} \cmidrule(l{3pt}r{3pt}){7-8} 
\multicolumn{2}{c}{ } & \multicolumn{2}{c}{$\phi=0.02$} & \multicolumn{2}{c}{$\phi=1$} & \multicolumn{2}{c}{ } \\
\cmidrule(l{3pt}r{3pt}){3-4}
\cmidrule(l{3pt}r{3pt}){5-6} 
N & $\mu$ & $\gamma=-0.05$ & $\gamma=0.02$ &  $\gamma=-0.05$ & $\gamma=0.02$ & $\gamma=-0.1$ & $\gamma=0.1$\\
\midrule
 & 0 & 0.009 & 0.009 & 0.009 & 0.009 & 0.010 & 0.010\\

 & 3 & 0.010 & 0.010 & 0.010 & 0.010 & 0.010 & 0.010\\

 & 6 & 0.014 & 0.011 & 0.010 & 0.010 & 0.010 & 0.010\\

\multirow{-4}{*}[1.5\dimexpr\aboverulesep+\belowrulesep+\cmidrulewidth]{\centering\arraybackslash 100} & 9 & 0.016 & 0.012 & 0.012 & 0.011 & 0.010 & 0.010\\
\cmidrule{1-8}
 & 0 & 0.010 & 0.010 & 0.010 & 0.010 & 0.011 & 0.011\\

 & 3 & 0.011 & 0.010 & 0.010 & 0.010 & 0.011 & 0.011\\

 & 6 & 0.013 & 0.011 & 0.011 & 0.011 & 0.011 & 0.011\\

\multirow{-4}{*}[1.5\dimexpr\aboverulesep+\belowrulesep+\cmidrulewidth]{\centering\arraybackslash 300} & 9 & 0.015 & 0.010 & 0.011 & 0.010 & 0.012 & 0.012\\
\cmidrule{1-8}
 & 0 & 0.011 & 0.011 & 0.009 & 0.009 & 0.010 & 0.010\\

 & 3 & 0.012 & 0.011 & 0.009 & 0.010 & 0.010 & 0.010\\

 & 6 & 0.013 & 0.012 & 0.011 & 0.011 & 0.012 & 0.012\\

\multirow{-4}{*}[1.5\dimexpr\aboverulesep+\belowrulesep+\cmidrulewidth]{\centering\arraybackslash 500} & 9 & 0.015 & 0.011 & 0.011 & 0.010 & 0.011 & 0.011\\
\bottomrule
\end{tabular}}
\end{table}

\begin{table}[!ht]
\centering
\caption{MCSE of coverage probability for parameter $\alpha$}
\centering
\resizebox{0.75\linewidth}{!}{
\begin{tabular}[t]{cccccccc}
\toprule
\multicolumn{2}{c}{ } & \multicolumn{4}{c}{Gamma model} & \multicolumn{2}{c}{Logistic regression}  \\
\cmidrule(l{3pt}r{3pt}){3-6} \cmidrule(l{3pt}r{3pt}){7-8} 
\multicolumn{2}{c}{ } & \multicolumn{2}{c}{$\phi=0.02$} & \multicolumn{2}{c}{$\phi=1$} & \multicolumn{2}{c}{ } \\
\cmidrule(l{3pt}r{3pt}){3-4}
\cmidrule(l{3pt}r{3pt}){5-6} 
N & $\mu$ & $\gamma=-0.05$ & $\gamma=0.02$ &  $\gamma=-0.05$ & $\gamma=0.02$ & $\gamma=-0.1$ & $\gamma=0.1$\\
\midrule
 & 0 & 0.010 & 0.010 & 0.010 & 0.010 & 0.012 & 0.011\\

 & 3 & 0.012 & 0.011 & 0.010 & 0.010 & 0.011 & 0.011\\

 & 6 & 0.013 & 0.011 & 0.010 & 0.010 & 0.011 & 0.011\\

\multirow{-4}{*}[1.5\dimexpr\aboverulesep+\belowrulesep+\cmidrulewidth]{\centering\arraybackslash 100} & 9 & 0.015 & 0.011 & 0.011 & 0.011 & 0.012 & 0.012\\
\cmidrule{1-8}
 & 0 & 0.010 & 0.010 & 0.010 & 0.010 & 0.011 & 0.011\\

 & 3 & 0.011 & 0.011 & 0.010 & 0.009 & 0.011 & 0.011\\

 & 6 & 0.013 & 0.010 & 0.009 & 0.010 & 0.011 & 0.011\\

\multirow{-4}{*}[1.5\dimexpr\aboverulesep+\belowrulesep+\cmidrulewidth]{\centering\arraybackslash 300} & 9 & 0.014 & 0.011 & 0.011 & 0.010 & 0.012 & 0.012\\
\cmidrule{1-8}
 & 0 & 0.011 & 0.011 & 0.009 & 0.009 & 0.012 & 0.012\\

 & 3 & 0.012 & 0.010 & 0.009 & 0.009 & 0.012 & 0.012\\

 & 6 & 0.011 & 0.013 & 0.008 & 0.008 & 0.011 & 0.011\\

\multirow{-4}{*}[1.5\dimexpr\aboverulesep+\belowrulesep+\cmidrulewidth]{\centering\arraybackslash 500} & 9 & 0.013 & 0.011 & 0.009 & 0.009 & 0.011 & 0.011\\
\bottomrule
\end{tabular}}
\end{table}

\clearpage
\subsection{Computation time}
\label{AppB4}

\begin{table}[!ht]
\centering
\caption{Summary grouped by sample size and mean width $\mu$ of the observed intervals of the computation times (h:min:sec) and the number of augmented Turnbull intervals $m$. Each row comprises 6 scenarios, with $500$ repetitions per scenario. \vspace{3ex}}
\centering
\resizebox{\ifdim\width>\linewidth\linewidth\else\width\fi}{!}{
\begin{tabular}[t]{ccccccc}
\toprule
\multicolumn{2}{c}{ } & \multicolumn{2}{c}{Time per model} & \multicolumn{2}{c}{m} & \multicolumn{1}{c}{ } \\
\cmidrule(l{3pt}r{3pt}){3-4} \cmidrule(l{3pt}r{3pt}){5-6}
N & mu & mean & sd & mean & sd & non-convergences\\
\midrule
 & 0 & 00:00:04 & 00:00:01 & 100 & 0.0 & 6\\

 & 3 & 00:03:53 & 00:00:47 & 186 & 3.5 & 0\\

 & 6 & 00:06:23 & 00:01:17 & 178 & 4.4 & 0\\

\multirow{-4}{*}[1.5\dimexpr\aboverulesep+\belowrulesep+\cmidrulewidth]{\centering\arraybackslash 100} & 9 & 00:08:04 & 00:01:38 & 170 & 4.7 & 0\\
\cmidrule{1-7}
 & 0 & 00:00:19 & 00:00:07 & 300 & 0.0 & 5\\

 & 3 & 00:32:10 & 00:05:55 & 563 & 5.6 & 0\\

 & 6 & 00:52:00 & 00:09:54 & 535 & 7.4 & 0\\

\multirow{-4}{*}[1.5\dimexpr\aboverulesep+\belowrulesep+\cmidrulewidth]{\centering\arraybackslash 300} & 9 & 01:05:15 & 00:12:39 & 511 & 8.3 & 0\\
\cmidrule{1-7}
 & 0 & 00:00:45 & 00:00:15 & 500 & 0.0 & 1\\

 & 3 & 01:26:58 & 00:15:31 & 940 & 6.9 & 0\\

 & 6 & 02:25:01 & 00:27:46 & 893 & 8.9 & 0\\

\multirow{-4}{*}[1.5\dimexpr\aboverulesep+\belowrulesep+\cmidrulewidth]{\centering\arraybackslash 500} & 9 & 03:03:03 & 00:35:31 & 852 & 10.1 & 0\\
\bottomrule
\end{tabular}}
\end{table}


\end{document}